\newcommand{\textcite}{\cite}
\title{Limitations on the Achievable Repair Bandwidth of Piggybacking Codes with Low Substriping}
\author{Reyna Hulett\thanks{
Computer Science Department, Stanford University.  \texttt{rmhulett@stanford.edu}.
RH's research supported in part by a NSF Graduate Research Fellowship under grant DGE-1656518. }
\ and Mary Wootters\thanks{
Computer Science and Electrical Engineering Departments, Stanford University.  \texttt{marykw@stanford.edu}.
RH and MW's research supported in part by NSF grant CCF-1657049.}
}
\date{June 2017}
\newtheorem{theorem}{Theorem}
\newtheorem{lemma}[theorem]{Lemma}
\newtheorem{corollary}[theorem]{Corollary}
\theoremstyle{definition}
\newtheorem{definition}{Definition}
\newtheorem{remark}[theorem]{Remark}
\newtheorem{observation}[theorem]{Observation}
\newcommand*{\vertbar}{\rule[-.5ex]{0.5pt}{2ex}}
\newcommand*{\horzbar}{\rule[.5ex]{2ex}{0.5pt}}
\DeclareRobustCommand{\rvdots}{%
  \vbox{
    \baselineskip3\p@\lineskiplimit\z@
    \kern-\p@
    \hbox{.}\hbox{.}\hbox{.}
  }}
\newcommand{\colof}[3]{\lowercase{#1}^{(#2)}_{\bullet, #3}}
\newcommand{\rowof}[3]{\lowercase{#1}^{(#2)}_{#3, \bullet}}
\newcommand{\entryof}[4]{\lowercase{#1}^{(#2)}_{#3, #4}}
\newcommand{\entryofvec}[3]{\lowercase{#1}^{(#2)}_{#3}}
\newcommand{\Pj}[1]{P^{(#1)}}
\newcommand{\Pij}[2]{P^{(#1,#2)}}
\newcommand{\entryofF}[2]{f_{#1, #2}}
\newcommand{\cC}{\mathcal{C}}
\newcommand{\F}{\mathbb{F}}
\newcommand{\mkw}[1]{\textcolor{red}{\textbf{[#1 --mary]}}}
\begin{document}

\maketitle




\begin{abstract}
The \em piggybacking \em framework for designing erasure codes for distributed storage has empirically proven to be very useful, and has been used to design codes with desirable properties, such as low repair bandwidth and complexity. However, the theoretical properties of this framework remain largely unexplored.  We address this by adapting a general characterization of repair schemes (previously used for Reed Solomon codes) to analyze piggybacking codes with low substriping.  With this characterization, we establish a separation between piggybacking and general erasure codes, and several impossibility results for subcategories of piggybacking codes; for certain parameters, we also present explicit, optimal constructions of piggybacking codes.
\end{abstract}

\section{Introduction}
The modern world is practically overwhelmed with data, much of which is kept in large-scale distributed storage systems. These systems store large files across a number of servers, or \emph{nodes}. Due to the scale of such systems, node failure is an everyday occurrence, and the system must be robust to such failures. One way to achieve robustness is by replicating the data.  However, this clearly has high storage overhead.  Erasure coding can achieve the same reliability as replication with far less overhead.

Recently, there has been extensive effort in the field of \em coding for distributed storage \em to 
design erasure codes with desirable properties.  Two important desirable properties include an optimal reliability-redundancy trade-off, and bandwidth-efficient repair of failed nodes. 
Since the work of~\cite{dimakis2010}, there have been several constructions of \em regenerating codes \em which aim to achieve these properties.  We refer the reader to~\cite{wiki} for much more information on coding for distributed storage.

In recent work,
\textcite{rashmi2013} introduced a new \emph{piggybacking} design framework to construct such codes.  This framework modifies a base code to improve its repair properties. This framework has been employed several times to design new codes \cite{kumar2015} \cite{rashmi2013} \cite{shangguan2016} \cite{yang2015}, including one code that is being implemented in the Hadoop Distributed File System \cite{rashmi2013}.

Although the piggybacking framework has clearly been productive in practice, there has not been much theoretical analysis of its possibilities and limitations. That is the subject of this paper.

\subsection{Our Contributions}
We build on a framework introduced by \textcite{guruswami2016} for characterizing and analyzing erasure code repair schemes, which gives us a characterization of piggybacking code repair schemes in particular. This allows us to prove various impossibility results for piggybacking schemes, and to design schemes with optimal repair bandwidth for certain parameters. Specifically, our contributions are the following.

\begin{enumerate}
\item \textbf{Extension of the framework of \cite{guruswami2016}.} We adapt the characterization of repair schemes by \cite{guruswami2016}, originally introduced for Reed Solomon codes, to our setting.  More precisely, their scheme works for scalar MDS codes over finite fields, while piggybacking codes are not scalar.  We modify their approach to obtain a characterization of linear repair schemes for MDS array codes.  We specialize this to piggybacking codes for our main results, but the general framework may be of broader interest.
\item \textbf{Separation between piggybacking and general erasure codes.} Using this framework, we demonstrate that for certain parameter regimes, piggybacking cannot achieve the optimal repair bandwidth achievable by general erasure codes. Thus piggybacking is strictly less powerful than general erasure codes.
\item \textbf{Other bounds.} We additionally utilize this framework to give some limited lower bounds for piggybacking in other settings, as well as upper bounds and explicit code constructions for some specific parameters. Certain of these bounds suggest approaches to using the piggybacking design framework which may improve the attainable repair bandwidth, compared to existing practices.
\end{enumerate}

\subsection{Organization}
In Section~\ref{sec:setup} we set up notation and definitions. In Section~\ref{sec:related} we survey related work and restate our results in more detail. We introduce our characterization of repair schemes in Section~\ref{sec:char} and leverage it to prove a variety of useful lemmas. The main results for piggybacking are given in Sections~\ref{sec:any} and \ref{sec:some}. In Section~\ref{sec:concl} we conclude with some open questions.


\section{Setup and Preliminaries}\label{sec:setup}
\subsection{Notation}
In general, we will use (parenthetical) superscripts to denote different matrices and subscripts to index within a matrix. For indexing into a matrix $M^{(\ell)}$, the entry in row $i$, column $j$ will be denoted $\entryof{M}{\ell}{i}{j}$, the $i^{th}$ row will be denoted $\rowof{M}{\ell}{i}$, and the $j^{th}$ column will be denoted $\colof{M}{\ell}{j}$.

Vectors generated by indexing into a matrix will be rows or columns corresponding to their orientation in the matrix (e.g., $\rowof{M}{\ell}{i}$ is a row vector but $\colof{M}{\ell}{j}$ is a column vector). Other vectors will be considered row vectors by default. They will be typeset in bold as $\bm{v^{(\ell)}} = [v^{(\ell)}_0 \ v^{(\ell)}_1 \ \cdots]$ or $\bm{v} = [\horzbar \bm{v_0} \horzbar \ \horzbar \bm{v_1} \horzbar \ \cdots]$ if the elements are themselves vectors.

\subsection{Erasure Coding and the Exact Repair Problem}
In this paper, we restrict our focus to linear, maximum distance separable codes with linear repair schemes.  We first briefly recall some definitions.  A \em code \em $\mathcal{C}$ over an alphabet $A$ is a subset of $A^n$; if the code has size $|\mathcal{C}| = |A|^k$, we say that the \em dimension \em of $\mathcal{C}$ is $k$.  We say that such a code has the Maximum Distance Separable property (MDS property) if 
any $k$ symbols of a codeword $c \in \cC$ can determine $c$.

If the alphabet $A$ is a field, $A = \F_q$, and if $\mathcal{C} \subset \F_q^n$ is a linear subspace of $\F_q^n$, then we say $\mathcal{C}$ is \em linear. \em
A linear code $\cC$ can always be written as the image of a \em generator matrix \em $F \in \F_q^{k \times n}$; given a \em message \em $\bm{a} \in \F_q^k$, the corresponding \em codeword \em is $\bm{a}F$.
If $\mathcal{C}$ additionally has the MDS property---equivalently, if a generator matrix $F$ has the property that any $k$ columns are linearly independent---we say $\cC$ is an MDS code.
We say that a code (along with an encoding map from messages to codewords) is \em systematic \em if the $k$ symbols of the message appear as symbols of the codeword.  Notice that any linear code can be made systematic by performing row operations on the generator matrix to obtain a generator matrix so that the first $k$ columns form the identity.  In this case, the first $k$ symbols are called \em systematic \em while the remaining symbols are called \em parity symbols. \em

%

We will study \em array codes, \em where the alphabet $A$ is in fact a vector space $A = \F_q^t$.  These codes are not linear (indeed, it does not make sense for a code to be linear over $\F_q^t$), but we will study codes that are $\F_q$-linear.
\begin{definition}\label{def:arraycode}
An \emph{array code with $t$ substripes} over an alphabet $\F_q$ is a code $\cC \subset (\F_q^t)^n$ over $\F_q^t$.
We say that $\cC$ has \em linear substripes \em if $\cC$ is closed under $\F_q$-linear operations; that is, for any $\bm{c}, \bm{c'} \in \cC$ and for any $\lambda \in \F_q$, we have $\bm{c} + \lambda \bm{c'} \in \cC$.  If $\cC$ has the MDS property, we say that it is an \em MDS array code. \em
\end{definition}
We will often think of codewords of an array code as matrices $C \in \F_q^{n \times t}$, rather than vectors $\bm{c} \in (\F_q^t)^n$, and we will write $\cC \subset \F_q^{n \times t}$.
Notice that this orientation is at odds with our convention that vectors $\bm{c}$ of length $n$ are row vectors, but we will stick by it because it will be more convenient and intuitive for the diagrams in the rest of the paper.

In coding for distributed storage, the message $\bm{a}$ corresponds to a file to be stored, and the corresponding codeword $\bm{c} \in \cC$ captures how the data should be stored on the $n$ nodes: node $i$ holds the symbol $c_i$.  
In this setting, we would always like to tolerate as many node failures as possible, which means that we demand that the code $\cC$ have the MDS property.  Moreover, there are certain operations we would like to be efficient.
First, we would like to be able to recover the original message (the stored file) efficiently.  This can always be done directly if the code is systematic.   Second, while we would like to be able to handle $n-k$ failures in the worst case, a much more common scenario in many systems is a single failure~\cite{rashmi2013b}.  Thus, we would like to be able to repair a single failed node as efficiently as possible.  In this work, the measure of efficiency we consider is the \em repair bandwidth, \em which measures how much data must be downloaded to repair a single failure.

Formally, let $\cC$ be an MDS array code over $\F_q$ with $t$ substripes.  If node $i^*$ fails, then a \em repair scheme \em to repair $i^*$ using a \em repair set \em $S \subset \{0, \ldots, n-1 \} \setminus \{i^*\}$ is a collection of functions\footnote{In this work, we will only consider repair bandwidth, rather than disk access, so we allow the nodes to do arbitrary local computation.} $g_i:\F_q^t \to \F_q^{b_i}$ so that for all $\bm{c} \in \cC$, $c_{i^*}$ can be determined from $\{ g_i( c_i ) \,\mid\, i \in S \}$.  
If $\cC$ is a linear MDS array code, and if the functions $g_i$ and the method of determining $c_{i^*}$ is linear, we say that the repair scheme is \em linear. \em 

The above defines a repair scheme for a particular node $i^*$ and a particular repair set $S$.
A (linear) repair scheme with \emph{locality} $d$ for an MDS array code $\cC$ over $\F_q$ consists of (linear) repair schemes with repair sets $S$ of size $d$ for every possible failed node $i^*$.  There are two important regimes.  In the ``any $d$" regime, there must be a valid repair scheme for any repair set $S$ of size $d$.  On the other hand, in the ``some $d$" regime, we require only one valid repair set of size $d$ per possible failed node.

The \em bandwidth \em of a repair scheme for an MDS array code $\cC$ over $\F_q$ is the number of symbols of $\F_q$ needed to repair any symbol $i^*$.  In the language above, it is the maximum, over all $i^*$ and all repair sets $S$ in the scheme, of $\sum_{i \in S} b_i$.  
The \em exact repair problem \em is the problem of minimizing the repair bandwidth.  There have been several solutions proposed in the literature since the problem was introduced in~\cite{dimakis2010}.  In this work, we focus on the \em piggybacking framework, \em which we discuss in the next section.

\subsection{Piggybacking}\label{sec:pb}
In this paper, we study the piggybacking framework introduced by \textcite{rashmi2013}, (with a few assumptions, discussed below). 

A \emph{piggybacking code} $\cC$ over $\F_q$ with $t$ substripes is constructed from a ``base code" $\cC_0 \subset \F_q^n$ and $\binom{t}{2}$ ``piggybacking functions" $p^{(i,j)}: \F_q^k \to \F_q^n$.  For this work, we assume that the base code $\cC_0$ is a (scalar) MDS code over $\F_q$; in particular, it is linear, with a generator matrix $F \in \F_q^{k \times n}$.  We also assume that the piggybacking functions $p^{(i,j)}$ are linear; in particular, they can be represented by matrices $P^{(i,j)} \in \F_q^{k \times n}$.  

With these assumptions, we define a piggybacking code (with a scalar MDS base code) over $\F_q$ as follows.

%
\begin{definition}\label{def:pb}
Let $F \in \F_q^{k \times n}$ be the generator matrix of an MDS code $\cC_0$, and take a collection of piggybacking matrices
\[ \{ \Pij{i}{j} \,\mid\, i \in [0,t-2], j \in [i+1,t-1] \} \subset \mathbb{F}_q^{k \times n}. \]
Consider the MDS array code $\cC$ over $\F_q$ with $t$ linear substripes, defined as follows.
Given a message $\bm{a} \in \F_q^{kt}$ given by
\[\bm{a} = [\horzbar \bm{a_0} \horzbar \ \cdots \ \horzbar \bm{a_{t-1}} \horzbar] \]
(where each $\bm{a_i} \in \F_q^k$), we form a codeword $C \in \F_q^{n \times t}$ so that the $i^{th}$ substripe is
\[ c_{\bullet,i} = (\bm{a_0} \Pij{0}{i} + \cdots + \bm{a_{i-1}} \Pij{i-1}{i} + \bm{a_i} F)^T. \]

We say that $\cC$ is a
\emph{$(n,k)$ piggybacking code with a scalar MDS base code} (henceforth a \emph{piggybacking code}) with $t$ substripes over $\mathbb{F}_q$.
\end{definition}

We illustrate a piggybacking code formed from $F$ and $\{P^{(i,j)}\}$ below.

\begin{center}
\begin{tabular}{c|c|c|c|c} 
 & substripe 0 & substripe 1 & $\cdots$ & substripe t-1 \\
\hline
node 0 & \vertbar & \vertbar &  & \vertbar \\
$\rvdots$ & $\left(\bm{a_0} F\right)^T$ & $\left(\bm{a_0} \Pij{0}{1} + \bm{a_1} F\right)^T$ & $\cdots$ & $\left(\bm{a_0} \Pij{0}{t-1} + \cdots + \bm{a_{t-2}} \Pij{t-2}{t-1} + \bm{a_{t-1}} F\right)^T$ \\
node n-1 & \vertbar & \vertbar &  & \vertbar
\end{tabular}
\end{center}
As with all MDS array codes, we will represent codewords as matrices in $\F_q^{n \times t}$ of the form
\[\begin{bmatrix}
\vertbar  & \vertbar						& 			& \vertbar	\\
(\bm{a_0} F)^T & \left(\bm{a_0} \Pij{0}{1} + \bm{a_1} F\right)^T		& \cdots	& \left(\bm{a_0} \Pij{0}{t-1} + \cdots + \bm{a_{t-2}} \Pij{t-2}{t-1} + \bm{a_{t-1}} F\right)^T	\\
\vertbar & \vertbar						& 			& \vertbar	\\
\end{bmatrix}\]

As noted in \cite{rashmi2013}, piggybacking codes using an MDS base code remain MDS, but may have improved repair properties; in particular, they may have reduced repair bandwidth.

In addition to general piggybacking codes, we will also consider a subcategory of codes which only piggyback in the last substripe of each node, inspired by the approach of \textcite{yang2015}. We dub these \emph{linebacking codes}.
\begin{definition}\label{def:lb}
An \emph{$(n,k)$ linebacking code with a scalar MDS base code} (henceforth a \emph{linebacking code}) $\mathcal{C}$ over the finite field $\mathbb{F}_q$ with $t$ substripes 
is a $(n,k)$ piggybacking code, with the additional property that
all piggybacking matrices $\Pij{i}{j}$ such that $j \neq t-1$ are zero. Thus we drop the index $j$ indicating which substripe the piggyback is added to and denote $\Pij{i}{t-1}$ by $\Pj{i}$. 
\end{definition}

Thus for message $\bm{a} = [\horzbar \bm{a_0} \horzbar \ \cdots \ \horzbar \bm{a_{t-1}} \horzbar]$, a linebacking code stores
\begin{center}
\begin{tabular}{c|c|c|c|c|c} 
 & substripe 0 & substripe 1 & $\cdots$ & substripe t-2 & substripe t-1 \\
\hline
node 0 & \vertbar & \vertbar &  & \vertbar & \vertbar \\
$\rvdots$ & $(\bm{a_0} F)^T$ & $(\bm{a_1} F)^T$ & $\cdots$ & $(\bm{a_{t-2}} F)^T$ & $\left(\bm{a_0} \Pj{0} + \cdots + \bm{a_{t-2}} \Pj{t-2} + \bm{a_{t-1}} F\right)^T$ \\
node n-1 & \vertbar & \vertbar &  & \vertbar & \vertbar
\end{tabular}
\end{center}
and has codewords of the form
\[\begin{bmatrix}
\vertbar & \vertbar	& 			& \vertbar	 	& \vertbar	\\
(\bm{a_0} F)^T  & (\bm{a_1} F)^T	& \cdots	& (\bm{a_{t-2}} F)^T	& \left(\bm{a_0} \Pij{0}{t-1} + \cdots + \bm{a_{t-2}} \Pij{t-2}{t-1} + \bm{a_{t-1}} F\right)^T	\\
\vertbar & \vertbar	& 			& \vertbar		& \vertbar	\\
\end{bmatrix}\]

It is not hard to see that, as with general linear MDS codes, piggybacking (respectively, linebacking) codes can be made systematic via a remapping of message symbols, while retaining their piggybacking (linebacking) structure. Thus as with general codes, we can assume without loss of generality that the first $k$ nodes of a piggybacking (linebacking) code are systematic, meaning substripe $i$ stores message chunk $\bm{a_i}$ on the first $k$ nodes. 



\section{Related Work and Our Results}\label{sec:related}

The piggybacking framework for designing error correcting codes for distributed storage was introduced by \textcite{rashmi2013}. It is as described in Definition~\ref{def:pb}, except that we have made the following assumptions.  
First, we assume that the piggybacking functions are linear---in general this is not required---and second, that the base code is a scalar MDS code---in general, the base code may itself be an MDS array code. However, we note that all piggybacking codes in the literature do use linear piggybacking functions \cite{kumar2015} \cite{rashmi2013} \cite{shangguan2016} \cite{yang2015}. Almost all use scalar MDS base codes as well, except \cite{yang2015} and one of four constructions in \cite{rashmi2013}, which are specifically designed to improve the repair properties of parity nodes for existing array codes.

Furthermore, in \cite{rashmi2013}, an invertible linear tranformation may be applied to the data stored on each node in order to reduce the data-read.  However, since in this work we are only concerned with repair bandwidth, this does not matter for us and we omit it from Definition~\ref{def:pb}.

The piggybacking design framework has been used to produce codes with low data-read and bandwidth for repairing individual failed nodes. \textcite{rashmi2013} used the framework to design explicit codes with the lowest data-read and bandwidth among known solutions for a few specific settings, including (high-rate) MDS codes with low substriping, the domain of interest in this paper. Extending their ideas, \textcite{yang2015} showed how to modify codes with optimal repair bandwidth for systematic nodes to use piggybacking to obtain asymptotically optimal bandwidth for parity nodes as well. Interestingly, \cite{yang2015} obtained these results for linebacking codes (Definition~\ref{def:lb}), which is more restricted than general piggybacking. The piggybacking framework was also employed by \textcite{shangguan2016} to design codes with low repair complexity, and by \textcite{kumar2015} as part of a compound design using both piggybacking and simple parity checks.

However, little is understood about the theoretical possibilities and limitations of codes designed using the piggybacking framework. Nor is there much understanding of how to choose piggybacking functions to achieve desirable repair properties. Although \cite{yang2015} takes a more principled approach than others to choosing the piggybacking functions, some of their choices---including piggybacking only in the last substripe and always using all the systematic nodes in repairing a failed parity node---do not have a rigorous theoretical backing.

Here we explore the theoretical limitations on achievable repair bandwidth for piggybacking codes with scalar MDS base codes and with a small number of substripes $t \leq n - k$.  As in Definition~\ref{def:arraycode}, we do not allow for \em symbol extension; \em that is, we treat the elements of $\F_q$ as indivisible and measure bandwidth in units of symbols of $\F_q$.  We focus primarily on the regime where any failed node must be repairable from \emph{any} set of $d$ other nodes, where $d$ is the locality. The alternative is that for each failed node, there must exist \emph{some} set of $d$ other nodes which repair the failed node. This alternative regime is less restrictive, and the achievable repair bandwidth is less well characterized. While both regimes have been studied in the literature \cite{shah2012}, the piggybacking design framework has primarily been employed in the latter, less restrictive regime \cite{kumar2015} \cite{rashmi2013} \cite{shangguan2016} \cite{yang2015}.

\paragraph{Known lower bounds.}
In any setting, for MDS codes, we have the \em cut-set bound \em on the repair bandwidth, which states that we must download $b \geq \frac{td}{d-k+1}$ symbols~\cite{dimakis2010}. Since this is decreasing in $d$, we can also set $d$ to the maximum/optimal value $n-1$ to get the bound $b \geq t \frac{n-1}{n-k}$. However, if $t < n-k$ and in the absence of symbol extension---the setting we consider here---this is not achievable, since it would require downloading less than a full symbol from each node. Since we must download at least one full symbol from every participating node, we can say $b \geq d$ which gives the bound
\[b \geq \frac{td}{d-k+1} \geq \frac{tb}{b-k+1}\]
which implies
\[b \geq k + t - 1,\]
which also matches the trivial lower bound for any MDS code \cite{guruswami2016}. We will call $b = k+t-1$ ``perfect bandwidth.''\footnote{Throughout we will assume $2 \leq t \leq n-k$ and $2 \leq k$. Otherwise, if $t = 1$, there is no piggybacking and any MDS code achieves perfect bandwidth; if $t > n-k$, achieving perfect bandwidth is impossible; and if $k =1$, the straightforward lower bound on bandwidth $k+t-1$ and the trivially achievable bandwidth for MDS codes $kt$ are equal.}

Note that this bound cannot be tight for large $t > n-k$. However, for $t \leq n-k$, and in the regime where any $d$ nodes must be able to repair the failed node, \textcite{shah2012} demonstrated that perfect bandwidth is achievable for $t > k-3$, but cannot be achieved by a linear code without symbol extension for $t \leq k-3$. Under the weaker requirement that only \emph{some} set of $d$ nodes must be able to repair the failed node, \textcite{suh2011} showed that the cut-set bound (and thus perfect bandwidth) is achievable provided $k \leq \max\{\frac{n}{2}, 3\}$ and $d \geq 2k-1$ which in our setting translates to substriping $t \geq k$, with field size at most $2(n-k)$. Although the cut-set bound has been shown to be achievable for large $t,q$ and general $n,k$ \cite{rashmi2011} \cite{ye2017}, to our knowledge the question of achieving perfect bandwidth when $t \leq n-k$ in general remains open.

\subsection{Our Results}

In this paper, we study the ability of piggybacking codes (with scalar MDS base codes) to achieve perfect bandwidth when $t \leq n-k$, using linear repair schemes. By adapting the framework of \cite{guruswami2016}, we give complete results for the regime where any $d$ nodes must be able to repair a failed node, and partial progress for the regime where only some $d$ nodes repair. These results, summarized in Tables~\ref{table:any} and \ref{table:some}, are as follows.

\begin{itemize}
\item ``Any $d$'' regime:
\begin{itemize}
	\item Piggybacking codes cannot achieve perfect bandwidth for $k \geq 3$, and thus are strictly weaker than general MDS codes (Theorem~\ref{thm:anyimposs}).
	\item Linebacking codes suffice to achieve perfect bandwidth for $k = 2$ with large field size $q$ (Theorem~\ref{thm:k=2}).
\end{itemize}
\item ``Some $d$'' regime:
\begin{itemize}
	\item Piggybacking codes are more powerful in this regime than the ``any $d$'' regime, as demonstrated by an example perfect bandwidth linebacking code with $k=3$ (Theorem~\ref{thm:someconstruction}).
	\item By extension from the ``any $d$'' regime, piggybacking codes do not exist for $k \geq 3, t = n-k$, and thus are strictly weaker than general MDS codes (Theorem~\ref{thm:anyimposs}), but linebacking codes do exist for $k=2$ with large field size $q$ (Theorem~\ref{thm:k=2}).
	\item Linebacking codes cannot achieve perfect bandwidth for $k\geq 3$ and $t > \frac{n-k+1}{2}$ (Theorem~\ref{thm:linebacking}).
	\item Linebacking codes which follow the common practice of using all remaining systematic nodes to repair a failed node cannot achieve perfect bandwidth for $k\geq 3$ and $t > \frac{n-k-1}{\sqrt{k}}$ (Theorem~\ref{thm:linesyst}).
	\item For $t=2,k=2$ we give an explicit construction of perfect bandwidth linebacking codes with small field size $q \geq k+1$ (Theorem~\ref{thm:somek=2}); compare to \cite{suh2011} where $q \geq 2(n-k)$.
\end{itemize}
\end{itemize}

\begin{center}
\begin{table}[h]
\centering
\begin{tabular}{|c|l|l|l|}
\hline
 & \multicolumn{1}{c|}{General} & \multicolumn{1}{c|}{Piggybacking} & \multicolumn{1}{c|}{Linebacking} \\
\hline
\multirow{2}{*}{$k\geq 3$} & $t > k-3$: exist for $q \geq n-k+t$ & do not exist (Thm.~\ref{thm:anyimposs}) & do not exist (Thm.~\ref{thm:anyimposs}) \\
 & $t \leq k-3$: do not exist \cite{shah2012} & & \\
\hline
$k=2$ & exist for $q \geq n-k+t$ \cite{shah2012} & exist for some $q$ (Thm.~\ref{thm:k=2}) & exist for some $q$ (Thm.~\ref{thm:k=2}) \\
\hline
\end{tabular}
\caption{Existence of perfect bandwidth MDS array codes when $t \leq n-k$ in the ``any $d$'' regime}\label{table:any}
\end{table}
\end{center}

\begin{center}
\begin{table}[h]
\centering
\begin{tabular}{|l|l|l|l|}
\hline
 & \multicolumn{1}{c|}{General} & \multicolumn{1}{c|}{Piggybacking} & \multicolumn{1}{c|}{Linebacking} \\
\hline
\multirow{4}{*}{$k\geq 3$} & $t \geq k$: exist for & $t=n-k$: do not exist (Thm.~\ref{thm:anyimposs}) & $t=n-k$: do not exist (Thm.~\ref{thm:anyimposs}) \\
 & \quad $k \leq \max\{\frac{n}{2}, 3\}$, & $t=2$: example $(6,3)$ construction & $t > \frac{n-k+1}{2}$: do not exist (Thm.~\ref{thm:linebacking}) \\
 & \quad $q \geq 2(n-k)$ \cite{suh2011} & \quad with $q=7$ (Thm.~\ref{thm:someconstruction}) & $t=2$: example $(6,3)$ construction \\
 & & & \quad with $q=7$ (Thm.~\ref{thm:someconstruction}) \\
\hline
\multirow{3}{*}{$k=2$} & exist for & exist for some $q$ (Thm.~\ref{thm:k=2}) & exist for some $q$ (Thm.~\ref{thm:k=2}) \\
 & \quad $q \geq 2(n-k)$ \cite{suh2011} & $t=2$: construction for & $t=2$: construction for \\
 & & \quad $q \geq k+1$ (Thm.~\ref{thm:somek=2}) & \quad $q \geq k+1$ (Thm.~\ref{thm:somek=2}) \\
\hline
\end{tabular}
\caption{Existence of perfect bandwidth MDS array codes when $t \leq n-k$ in the ``some $d$'' regime}\label{table:some}
\end{table}
\end{center}


\section{Characterization of Repair Schemes}\label{sec:char}
The work of~\cite{guruswami2016} provides a characterization of linear repair schemes for scalar MDS codes.  Their framework relies on the fact that a scalar MDS code $\cC$ is linear over its alphabet.  In our case, piggybacking codes---and more generally MDS array codes with linear substripes---are linear over $\F_q$, but not over $\F_q^t$.  (Indeed, linearity over $\F_q^t$ does not immediately make sense, as $\F_q^t$ does not have a natural notion of multiplication). 
However, the approach of~\cite{guruswami2016} still makes sense in this context.  The main reason linearity was important to the approach of \cite{guruswami2016} was because their characterization involved the \em dual code, \em $\cC^\perp$.  We may introduce a similar notion for MDS array codes.  

\begin{definition}
Let $\cC \subset \F_q^{n \times t}$ be an MDS array code with $t$ linear substripes over $\F_q$.
The \emph{dual code} of $\cC$ is $\mathcal{C}^\perp := \{X \in \mathbb{F}_q^{n \times t} \mid \langle X,C \rangle = 0 \ \forall C \in \mathcal{C} \}$, where $\langle \cdot,\cdot \rangle$ denotes the Frobenius inner product. 
\end{definition}

\begin{theorem}\label{thm:matrixchar}
Let $\mathcal{C} \subset \mathbb{F}_q^{n \times t}$ be a $(n,k)$ MDS array code with $t$ linear substripes over $\F_q$. For a fixed node $i^*$ and set of nodes $S \not\owns i^*$, the following are equivalent.
\begin{enumerate}
\item There is a linear repair scheme for node $i^*$ from $S$ with bandwidth $b$.
\item There exists a set of $t$ dual codewords, the \emph{repair matrices}, $\{W^{(0)}, W^{(1)}, \dots, W^{(t-1)}\} \subset \mathcal{C}^\perp$ such that the only non-zero rows of each $W^{(j)}$ are $i^*$ and $S$, and
\[\dim ( \{ \rowof{W}{j}{i^*} \mid j \in [0,t-1] \} ) = t \]
\[\sum_{i \neq i^*} \dim ( \{ \rowof{W}{j}{i} \mid j \in [0,t-1] \} ) \leq b \]
\end{enumerate}
\end{theorem}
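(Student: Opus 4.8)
The plan is to follow the argument of \cite{guruswami2016} closely, with the Frobenius inner product and the row structure of $\cC^{\perp}$ playing the roles that the field trace and the dual of a scalar code played there. The crux is the following translation. Since $\langle X, C\rangle = \sum_{i} \langle x_{i,\bullet}, c_{i,\bullet}\rangle$, a dual codeword $W \in \cC^{\perp}$ all of whose nonzero rows lie in $\{i^*\}\cup S$ satisfies
\[
\langle w_{i^*,\bullet},\, c_{i^*,\bullet}\rangle \;=\; -\sum_{i \in S}\langle w_{i,\bullet},\, c_{i,\bullet}\rangle \qquad\text{for every } C \in \cC .
\]
Hence a helper $i$ that reports the single $\F_q$-symbol $\langle w_{i,\bullet}, c_{i,\bullet}\rangle$ hands over exactly one $\F_q$-linear functional of the unknown row $c_{i^*,\bullet}\in\F_q^t$, and recovering all of $c_{i^*,\bullet}$ amounts to collecting $t$ such functionals whose restrictions to the $i^*$-row are linearly independent; this is precisely the condition $\dim(\{\rowof{W}{j}{i^*}\mid j\in[0,t-1]\}) = t$ imposed on a family $W^{(0)},\dots,W^{(t-1)}$.

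For the implication (2)$\Rightarrow$(1), I would have each helper $i\in S$ download the projection of $c_{i,\bullet}$ onto $\mathrm{span}\{\rowof{W}{j}{i}\mid j\in[0,t-1]\}$; this costs $\dim(\{\rowof{W}{j}{i}\mid j\in[0,t-1]\})$ symbols of $\F_q$, hence at most $b$ in total by the hypothesis. From the projection the repairer reads off each $\langle\rowof{W}{j}{i}, c_{i,\bullet}\rangle$, sums over $i\in S$ and negates to obtain $\langle\rowof{W}{j}{i^*}, c_{i^*,\bullet}\rangle$ for every $j$, and then solves the (full-rank, by the spanning hypothesis) $t\times t$ linear system to recover $c_{i^*,\bullet}$. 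Every step is $\F_q$-linear, so this is a linear repair scheme of bandwidth at most $b$.

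For the converse (1)$\Rightarrow$(2), I would start from a linear repair scheme and note that, after a harmless change of basis on its output, each $g_i$ amounts to revealing $\langle u, c_{i,\bullet}\rangle$ for every $u$ in a fixed subspace $U_i\subseteq\F_q^t$ with $\dim U_i \le b_i$. Since the reconstruction of $c_{i^*,\bullet}$ from $\{g_i(c_{i,\bullet})\mid i\in S\}$ is linear, its $j$-th coordinate can be written as $\sum_{i\in S}\langle\bm{v}^{(j)}_i, c_{i,\bullet}\rangle$ with every $\bm{v}^{(j)}_i\in U_i$, and by correctness this identity holds for all $C\in\cC$. Defining $W^{(j)}$ to have $i^*$-row $\bm{e_j}$, $i$-row $-\bm{v}^{(j)}_i$ for $i\in S$, and all remaining rows zero, one checks $\langle W^{(j)}, C\rangle = 0$ for every $C\in\cC$, so $W^{(j)}\in\cC^{\perp}$; the $i^*$-rows $\bm{e_0},\dots,\bm{e_{t-1}}$ span $\F_q^t$; and since each $\rowof{W}{j}{i}\in U_i$, we get $\sum_{i\neq i^*}\dim(\{\rowof{W}{j}{i}\mid j\in[0,t-1]\}) \le \sum_{i\in S}\dim U_i \le \sum_{i\in S} b_i = b$.

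Both directions are essentially bookkeeping once the translation above is fixed, so I do not expect a serious obstacle. The points that need care are all in the converse: ensuring the functionals $\bm{v}^{(j)}_i$ can genuinely be taken inside $U_i$ (rather than merely agreeing with an element of $U_i$ on the data actually downloaded), and that the resulting relation is an identity over all of $\cC$ --- which is exactly where correctness of the original scheme enters. It is also worth flagging that the MDS hypothesis is never used in this characterization; only $\F_q$-linearity of $\cC$ and of the scheme matter, and MDS-ness is carried along simply because it is part of the standing setup.
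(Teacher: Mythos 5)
Your proposal is correct and follows essentially the same route as the paper's Appendix A proof (both directions are the translation from \cite{guruswami2016}: bases of the row-spans as queries for (2)$\Rightarrow$(1), and collapsing the query sets into vectors $\bm{w}_{i,j}$ supported on $U_i$ to build the $W^{(j)}$ for (1)$\Rightarrow$(2)). The only wording to tighten is ``download the projection of $c_{i,\bullet}$'' --- over $\F_q$ orthogonal projection need not exist, so say instead that node $i$ sends the $\dim U_i$ inner products with a fixed basis of $U_i$, which is exactly what the paper does.
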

See Figure~\ref{fig:repairscheme} for an illustration of a set of repair matrices.
The proof of Theorem~\ref{thm:matrixchar} follows very similar to the approach in \cite{guruswami2016}.  We include a proof in Appendix \ref{app:thm2} for completeness, but we sketch one direction in Figure~\ref{fig:alg}, showing how a set of repair matrices yields a repair scheme.

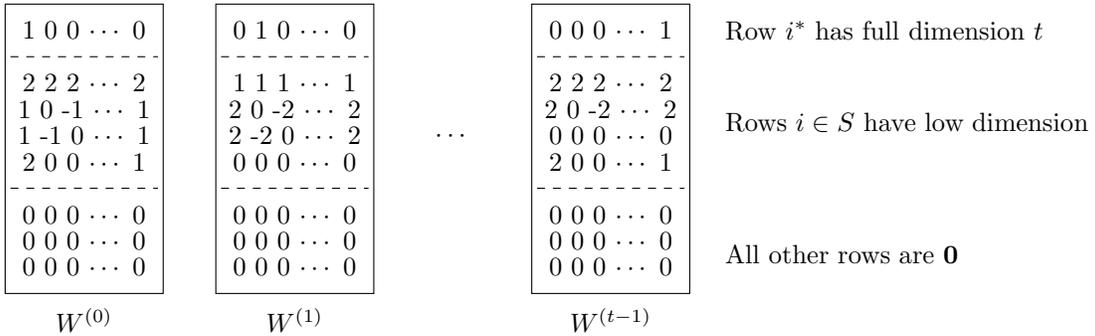
\begin{figure}[h]
\begin{center}
\begin{tikzpicture}[scale=.35]

\begin{scope}
\draw (0,-1) rectangle (6,10);
\node at (3,9) {1\ 0\ 0\ $\cdots$\ 0};
\draw[dashed] (.2,8) to (5.8,8);
\node at (3,7) {2\ 2\ 2\ $\cdots$\ 2};
\node at (3,6) {1\ 0\ -1\ $\cdots$\ 1};
\node at (3,5) {1\ -1\ 0\ $\cdots$\ 1};
\node at (3,4) {2\ 0\ 0\ $\cdots$\ 1};
\draw[dashed] (.2,3) to (5.8,3);
\node at (3,2) {0\ 0\ 0\ $\cdots$\ 0};
\node at (3,1) {0\ 0\ 0\ $\cdots$\ 0};
\node at (3,0) {0\ 0\ 0\ $\cdots$\ 0};
\node at (3,-2) {$W^{(0)}$};
\end{scope}

\begin{scope}[xshift=8cm]
\draw (0,-1) rectangle (6,10);
\node at (3,9) {0\ 1\ 0\ $\cdots$\ 0};
\draw[dashed] (.2,8) to (5.8,8);
\node at (3,7) {1\ 1\ 1\ $\cdots$\ 1};
\node at (3,6) {2\ 0\ -2\ $\cdots$\ 2};
\node at (3,5) {2\ -2\ 0\ $\cdots$\ 2};
\node at (3,4) {0\ 0\ 0\ $\cdots$\ 0};
\draw[dashed] (.2,3) to (5.8,3);
\node at (3,2) {0\ 0\ 0\ $\cdots$\ 0};
\node at (3,1) {0\ 0\ 0\ $\cdots$\ 0};
\node at (3,0) {0\ 0\ 0\ $\cdots$\ 0};
\node at (3,-2) {$W^{(1)}$};
\end{scope}

\node at (17,5) {$\cdots$};

\begin{scope}[xshift=20cm]
\draw (0,-1) rectangle (6,10);
\node at (3,9) {0\ 0\ 0\ $\cdots$\ 1};
\draw[dashed] (.2,8) to (5.8,8);
\node at (3,7) {2\ 2\ 2\ $\cdots$\ 2};
\node at (3,6) {2\ 0\ -2\ $\cdots$\ 2};
\node at (3,5) {0\ 0\ 0\ $\cdots$\ 0};
\node at (3,4) {2\ 0\ 0\ $\cdots$\ 1};
\draw[dashed] (.2,3) to (5.8,3);
\node at (3,2) {0\ 0\ 0\ $\cdots$\ 0};
\node at (3,1) {0\ 0\ 0\ $\cdots$\ 0};
\node at (3,0) {0\ 0\ 0\ $\cdots$\ 0};
\node at (3,-2) {$W^{(t-1)}$};
\end{scope}

\begin{scope}[xshift=27cm]
\node[anchor=west] at (0,9) {Row $i^*$ has full dimension $t$};
\node[anchor=west] at (0,5.5) {Rows $i \in S$ have low dimension};
\node[anchor=west] at (0,.5) {All other rows are $\bm{0}$};
\end{scope}
\end{tikzpicture}
\end{center}
\caption{An example illustrating the structure of a linear repair scheme with $q = 3$.}\label{fig:repairscheme}
\end{figure}

\begin{figure}[h]
\noindent\fbox{%
    \parbox{\textwidth}{%
\begin{tabular}{cp{.9\textwidth}c}
\vspace{.2cm} \\
\hspace{.2cm} &
\textbf{Linear repair scheme, given repair matrices.}
Suppose that $\{W^{(0)}, W^{(1)}, \dots W^{(t-1)}\}$ are a set of repair matrices for a node $i^*$ with repair set $S$, as in Theorem~\ref{thm:matrixchar}.   Let $C \in \mathcal{C}$.  Then we can define a linear repair scheme as follows.
\begin{enumerate}
\item For every node $i \neq i^*$, let $Q_i \subset \F_q^t$ be any basis of $\text{span}(\{\rowof{W}{j}{i} \mid j \in [0,t-1]\})$. 
We say that $Q_i$ is the \em query set \em for node $i$.
Observe that $Q_i = \emptyset \ \forall i \notin S$ so only nodes in the repair set will be queried. 
\item For every query vector $\bm{q} \in Q_i$, node $i$ sends $\bm{q} \cdot c_{i,\bullet}$ to the replacement node. Since $\sum_{i \neq i^*} |Q_i| \leq b$, at most $b$ symbols of $\mathbb{F}_q$ are downloaded.
\item The replacement node now has enough information to recover $\rowof{W}{j}{i^*} \cdot c_{i^*,\bullet} = \langle W^{(j)}, C\rangle - \sum_{i \neq i^*} \rowof{W}{j}{i} \cdot c_{i,\bullet}i$ for all $j$, since $W^{(j)} \in \mathcal{C}^\perp$ implies $\langle W^{(j)}, C\rangle = 0$ and $\rowof{W}{j}{i} \cdot c_{i,\bullet}$ for $i \neq i^*$ can be recovered from the responses to the query $Q_i$. Since $\dim (\{ \rowof{W}{j}{i^*} \mid j \in [0,t-1] \}) = t$, this gives $t$ linearly independent equations, and we can solve for $c_{i^*,\bullet}$, thus repairing the failed node.
\end{enumerate}
& \hspace{.2cm}
\end{tabular}
    }%
}
\caption{Turning a set of repair matrices into a linear repair scheme.  This algorithm proves one direction of Theorem~\ref{thm:matrixchar}. See Appendix~\ref{app:thm2} for the full proof.}
\label{fig:alg}
\end{figure}

Henceforth we will refer interchangeably to a linear repair scheme for $i^*$ from a set $S$, and a set of $t$ \emph{repair matrices} for $i^*, S$ as defined above. 

\subsection{Piggybacking Code Repair Schemes}\label{sec:pbchar}
Now that we have characterized repair schemes as sets of dual codewords, we can analyze the repair schemes of piggybacking codes, and those that achieve perfect bandwidth in particular. In the remainder of Section~\ref{sec:char}, we develop several lemmas using this characterization which will allow us to prove our main results in Sections~\ref{sec:any} and \ref{sec:some}.

The next lemma specializes the definition of a dual code to piggybacking codes.
\begin{lemma}\label{lem:cols}
Let $\mathcal{C}$ be an $(n,k)$ piggybacking code over $\mathbb{F}_q$ with $t$ substripes, base code $\mathcal{C}_0$ with generator matrix $F$, and piggybacking matrices $\{\Pij{i}{j} \mid i \in [0,t-2], j \in [i+1,t-1]\}$. A matrix $X \in \mathbb{F}_q^{n \times t}$ is in $\mathcal{C}^\perp$ if and only if 
\[ F x_{\bullet,i} + \Pij{i}{i+1} x_{\bullet,i+1} + \cdots + \Pij{i}{t-1} x_{\bullet,t-1} = \bm{0}^T \ \forall i \]
\end{lemma}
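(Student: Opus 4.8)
The plan is to unwind the definition of the Frobenius inner product in terms of columns, and then use the fact that the rows of a codeword matrix $C$ (equivalently, the $F$-part) are parametrized by an arbitrary message $\bm a \in \F_q^{kt}$. Recall that $\langle X, C\rangle = \sum_{i=0}^{t-1} x_{\bullet,i}^T c_{\bullet,i}$, where $c_{\bullet,i} = (\bm{a_0}\Pij{0}{i} + \cdots + \bm{a_{i-1}}\Pij{i-1}{i} + \bm{a_i} F)^T$ is the $i$-th substripe (with the convention $\Pij{i}{i} := F$, say, purely to compress notation). So the first step is to substitute this expression and write $\langle X,C\rangle$ as a single scalar that is linear in the message blocks $\bm{a_0},\dots,\bm{a_{t-1}}$.

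The second step is to collect, for each $\ell \in [0,t-1]$, the coefficient of $\bm{a_\ell}$. The block $\bm{a_\ell}$ appears in substripe $i$ exactly when $i \geq \ell$, and there it is multiplied (on the matrix side) by $\Pij{\ell}{i}$ (with $\Pij{\ell}{\ell} = F$). Hence the coefficient of $\bm{a_\ell}$ in $\langle X,C\rangle$ is
\[
\sum_{i=\ell}^{t-1} (\Pij{\ell}{i} x_{\bullet,i})\,,
\]
so that $\langle X,C\rangle = \sum_{\ell=0}^{t-1} \bm{a_\ell}\bigl( F x_{\bullet,\ell} + \Pij{\ell}{\ell+1} x_{\bullet,\ell+1} + \cdots + \Pij{\ell}{t-1} x_{\bullet,t-1}\bigr)$.

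The third step is the "for all $C$'' quantifier: $X \in \cC^\perp$ means this vanishes for every codeword, i.e.\ for every choice of $\bm a = [\horzbar \bm{a_0}\horzbar\ \cdots\ \horzbar \bm{a_{t-1}}\horzbar] \in \F_q^{kt}$. Since the blocks $\bm{a_\ell}$ range independently over $\F_q^k$, a standard argument (take $\bm a$ supported on a single block, then let that block be each standard basis vector) shows the expression vanishes identically iff each of the $t$ vectors $F x_{\bullet,\ell} + \Pij{\ell}{\ell+1} x_{\bullet,\ell+1} + \cdots + \Pij{\ell}{t-1} x_{\bullet,t-1} \in \F_q^k$ is zero. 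This is exactly the stated condition, ranging over all $i = \ell \in [0,t-1]$, and completes the proof.

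I do not anticipate a genuine obstacle here; this is essentially bookkeeping. The only mild care needed is to be consistent about orientations — $x_{\bullet,i}$ is a column vector in $\F_q^n$, $\Pij{i}{j} \in \F_q^{k\times n}$ so $\Pij{i}{j} x_{\bullet,j} \in \F_q^k$, and the codeword column $c_{\bullet,i}$ is written as a transpose of a row — so that all the products typecheck, and to make sure the independence-of-blocks step is stated cleanly rather than hand-waved.
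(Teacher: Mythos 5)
Your proposal is correct and follows essentially the same route as the paper's proof: expand the Frobenius inner product column-by-column, substitute the piggybacking structure, and isolate the coefficient of each message block $\bm{a_\ell}$ by letting the blocks vary independently. The only cosmetic difference is that you collect coefficients first and then specialize to basis vectors, whereas the paper directly sets $\bm{a_j}=0$ for $j\neq i$ and lets $\bm{a_i}$ range over $\F_q^k$; these are the same argument.
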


\begin{proof}
$X \in \mathcal{C}^\perp$ if and only if
\begin{align*}
\langle X, C \rangle &= 0 \ &\forall C \in \mathcal{C} \\
x_{\bullet,0}^T \cdot c_{\bullet,0} + \cdots + x_{\bullet,t-1}^T \cdot c_{\bullet,t-1} &= 0 \ &\forall C \in \mathcal{C} \\
x_{\bullet,0}^T \cdot \left(\bm{a_0} F\right)^T + \cdots + x_{\bullet,t-1}^T \cdot \left(\bm{a_0} \Pij{0}{t-1} + \cdots + \bm{a_{t-2}} \Pij{t-2}{t-1} + \bm{a_{t-1}} F \right)^T &= 0 \ &\forall \bm{a_0}, \dots, \bm{a_{t-1}} \in \mathbb{F}_q^k \\
\end{align*}
Since the above holds for all $\bm{a_0},\ldots,\bm{a_{t-1}}$, for an arbitrary $i$ we consider $\bm{a_j} = 0$ for $j \neq i$ and $\bm{a_i} \in \F_q^k$.  This yields the equivalent requirement
\begin{align*}
x_{\bullet,i}^T \cdot \left(\bm{a_i} F\right)^T + x_{\bullet,i+1}^T \cdot \left(\bm{a_i} \Pij{i}{i+1}\right)^T + \cdots + x_{\bullet,t-1}^T \cdot \left(\bm{a_i} \Pij{i}{t-1}\right)^T &= 0 \ &\forall i \ \forall \bm{a_i} \in \mathbb{F}_q^k \\
\bm{a_i} F x_{\bullet,i} + \bm{a_i} \Pij{i}{i+1} x_{\bullet,i+1} + \cdots + \bm{a_i} \Pij{i}{t-1} x_{\bullet,t-1} &= [0] \ &\forall i \ \forall \bm{a_i} \in \mathbb{F}_q^k \\
F x_{\bullet,i} + \Pij{i}{i+1} x_{\bullet,i+1} + \cdots + \Pij{i}{t-1} x_{\bullet,t-1} &= \bm{0}^T \ &\forall i \\
\end{align*}
as desired.
\end{proof}

Note that the restriction on the last column of a repair matrix $W$, $F w_{\bullet, t-1} = \bm{0}^T$, is equivalent to $w_{\bullet,t-1}^T \in \mathcal{C}_0^\perp$ where $\mathcal{C}_0^\perp \subset \mathbb{F}_q^n$ is the dual of the base code. Because $\mathcal{C}_0$ is a scalar MDS code, $\mathcal{C}_0^\perp$ is itself an MDS code, and thus is a linear subspace with minimum weight $k+1$.

\begin{corollary}\label{cor:cols}
Let $\mathcal{C}$ be an $(n,k)$ \emph{linebacking} code over $\mathbb{F}_q$ with $t$ substripes, base code generator matrix $F$, and piggybacking matrices $\{\Pj{i} \mid i \in [0,t-2]\}$. A matrix $X \in \mathbb{F}_q^{n \times t}$ is in $\mathcal{C}^\perp$ if and only if
\begin{align*}
F x_{\bullet,t-1} &= \bm{0}^T \\
F x_{\bullet,i} + \Pj{i} x_{\bullet,t-1} &= \bm{0}^T \ \forall i \neq t-1
\end{align*}
\end{corollary}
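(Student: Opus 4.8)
The plan is to invoke Lemma~\ref{lem:cols} directly and then simplify using the defining property of a linebacking code. Recall from Definition~\ref{def:lb} that a linebacking code is exactly a piggybacking code in which $\Pij{i}{j} = 0$ whenever $j \neq t-1$, with $\Pj{i}$ serving as shorthand for $\Pij{i}{t-1}$. So Lemma~\ref{lem:cols} applies verbatim, and $X \in \mathcal{C}^\perp$ if and only if
\[ F x_{\bullet,i} + \Pij{i}{i+1} x_{\bullet,i+1} + \cdots + \Pij{i}{t-1} x_{\bullet,t-1} = \bm{0}^T \qquad \forall i. \]

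First I would isolate the case $i = t-1$: here the sum on the left collapses to the single term $F x_{\bullet,t-1}$, yielding the first claimed equation $F x_{\bullet,t-1} = \bm{0}^T$. Then, for each $i < t-1$, every coefficient matrix $\Pij{i}{j}$ with $i+1 \leq j \leq t-2$ vanishes by the linebacking assumption, so all of those intermediate terms drop out and the $i$-th condition reduces to $F x_{\bullet,i} + \Pij{i}{t-1} x_{\bullet,t-1} = \bm{0}^T$, i.e.\ $F x_{\bullet,i} + \Pj{i} x_{\bullet,t-1} = \bm{0}^T$. Conversely, the $t$ displayed equations in the corollary are precisely the $t$ conditions of Lemma~\ref{lem:cols} after performing exactly the same substitutions, so the equivalence holds in both directions simultaneously; no separate argument is needed for the "if" direction.

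There is essentially no obstacle here: the only thing to keep track of is the bookkeeping of which piggybacking matrices survive and the renaming $\Pij{i}{t-1} = \Pj{i}$. One could alternatively re-derive the corollary from scratch by rerunning the computation in the proof of Lemma~\ref{lem:cols} starting from the linebacking codeword form (with $\bm{a_j} = 0$ for $j \neq i$), but obtaining it as a special case of the already-proved Lemma~\ref{lem:cols} is cleaner and I would present it that way.
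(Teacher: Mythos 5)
Your proof is correct and matches the paper's (implicit) treatment: the corollary is obtained exactly by specializing Lemma~\ref{lem:cols} to the linebacking case where $\Pij{i}{j} = 0$ for $j \neq t-1$, so the $i = t-1$ condition collapses to $F x_{\bullet,t-1} = \bm{0}^T$ and each $i < t-1$ condition reduces to $F x_{\bullet,i} + \Pj{i} x_{\bullet,t-1} = \bm{0}^T$. The bidirectional bookkeeping you note is exactly why no separate argument is needed for either direction.
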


It will also be useful to note operations we can perform on a set of repair matrices which result in an equivalent repair scheme.
\begin{definition}
For a given MDS array code $\mathcal{C}$ with $t$ linear substripes, and failed node $i^*$, two repair schemes $\{W^{(0)}, \dots, W^{(t-1)}\}$ and $\{V^{(0)}, \dots, V^{(t-1)}\}$ are \emph{equivalent} if for every row $i$,
\[\text{span}(\{\rowof{W}{j}{i} \mid j \in [0,t-1]\}) = \text{span}(\{\rowof{V}{j}{i} \mid j \in [0,t-1]\})\]
That is, two repair schemes are equivalent if they can share the same queries $\{Q_i\}$ in the algorithm in Figure~\ref{fig:alg}. This also means the two schemes have the same repair set and repair bandwidth.
\end{definition}

\begin{lemma}\label{lem:equi}
Let $\mathcal{C}$ be an $(n,k)$ MDS array code over $\mathbb{F}_q$ with $t$ linear substripes. Consider a repair scheme $\{W^{(0)}, \dots, W^{(t-1)}\}$ for failed node $i^*$. The following operations on $\{W^{(0)}, \dots, W^{(t-1)}\}$ produce an equivalent repair scheme.
\begin{enumerate}
\item Scaling any repair matrix $W^{(j)}$ by a constant $\kappa \in \mathbb{F}_q \backslash \{0\}$
\item Adding a multiple of one repair matrix $\kappa W^{(j)}$ to another $W^{(\ell)}$ where $\kappa \in \mathbb{F}_q$, $j \neq \ell$
\end{enumerate}
\end{lemma}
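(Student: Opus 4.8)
The plan is to verify both operations preserve two things: membership of each repair matrix in $\mathcal{C}^\perp$, and the row spans $\text{span}(\{\rowof{W}{j}{i} \mid j \in [0,t-1]\})$ for every row $i$. Once both are checked, the definition of equivalence is satisfied directly, and the ``same repair set and repair bandwidth'' claim follows because the query sets $Q_i$ in Figure~\ref{fig:alg} depend only on these spans. Note that dual-code membership is needed so that the resulting collection is still a legitimate repair scheme in the sense of Theorem~\ref{thm:matrixchar}; that the new collection still has the rank-$t$ property at row $i^*$ and low-rank rows elsewhere will follow from span-preservation, since the spans at every row are literally unchanged.

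For operation 1 (scaling $W^{(j)}$ by $\kappa \neq 0$): since $\mathcal{C}^\perp$ is an $\F_q$-linear subspace (it is a dual code), $\kappa W^{(j)} \in \mathcal{C}^\perp$. For each row $i$, the set of row vectors $\{\rowof{W}{0}{i}, \dots, \rowof{W}{j-1}{i}, \kappa\,\rowof{W}{j}{i}, \rowof{W}{j+1}{i}, \dots\}$ spans the same subspace as the original set, because $\kappa \neq 0$ means $\rowof{W}{j}{i}$ and $\kappa\,\rowof{W}{j}{i}$ generate the same line. Hence every row span is preserved. For operation 2 (replacing $W^{(\ell)}$ by $W^{(\ell)} + \kappa W^{(j)}$ with $j \neq \ell$, $\kappa \in \F_q$): again by linearity of $\mathcal{C}^\perp$, $W^{(\ell)} + \kappa W^{(j)} \in \mathcal{C}^\perp$. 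For each row $i$, replacing the vector $\rowof{W}{\ell}{i}$ by $\rowof{W}{\ell}{i} + \kappa\,\rowof{W}{j}{i}$ in the spanning set is an elementary operation that does not change the span, precisely because $\rowof{W}{j}{i}$ is still present in the set ($j \neq \ell$). So again all row spans are preserved.

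Both operations therefore yield an equivalent repair scheme. There is no real obstacle here: the only thing to be a little careful about is that in operation 2 we must keep $W^{(j)}$ itself in the collection (which we do, since we only modify $W^{(\ell)}$), so that the elementary-operation argument for span invariance applies row by row; and in operation 1 the nonzero hypothesis on $\kappa$ is essential for the line spanned by a single vector to be unchanged, whereas in operation 2 we may allow $\kappa = 0$ (a no-op). I would present this as two short paragraphs, one per operation, each splitting into the "membership in $\mathcal{C}^\perp$'' check and the "span preservation'' check, and conclude by invoking the definition of equivalence together with the description of $Q_i$ in Figure~\ref{fig:alg} for the bandwidth/repair-set claim.
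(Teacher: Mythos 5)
Your proposal is correct and follows essentially the same route as the paper's proof: verify that scaling and adding dual codewords stay in the linear subspace $\mathcal{C}^\perp$, then observe that the induced changes on each row set $\{\rowof{W}{j}{i} \mid j \in [0,t-1]\}$ are elementary row operations, which preserve the span and hence equivalence. The extra care you take about $\kappa \neq 0$ in operation 1 and about retaining $W^{(j)}$ in the collection for operation 2 is a welcome, if minor, refinement of the paper's more terse argument.
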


\begin{proof}
First, note that the dual code $\mathcal{C}^\perp$ of a MDS array code with linear substripes is a subspace of $\F_q^{n \times t}$,
so the matrices obtained by scaling or adding two dual codewords are still dual codewords. Now, consider the effect of these operations on the row sets: $\{\rowof{W}{j}{i} \mid j \in [0,t-1]\}$ for row $i$. Scaling or adding two repair matrices is equivalent to scaling or adding two rows in each row set. Since these are elementary row operations, they do not change the subspace spanned by each row set, and thus by definition these operations result in an equivalent repair scheme.
\end{proof}

The above holds for any MDS array code with linear substripes. For piggybacking and linebacking schemes specifically, we also define a weaker notion of equivalence which permits more operations on the repair scheme.
\begin{definition}
For a given MDS array code $\mathcal{C}$ with linear substripes, and failed node $i^*$, two repair schemes $\{W^{(0)}, \dots, W^{(t-1)}\}$ and $\{V^{(0)}, \dots, V^{(t-1)}\}$ are \emph{download-equivalent} if for every row $i$,
\[\dim(\{\rowof{W}{j}{i} \mid j \in [0,t-1]\}) = \dim(\{\rowof{V}{j}{i} \mid j \in [0,t-1]\})\]
That is, two repair schemes are download-equivalent if they download the same number of symbols from each node. Two equivalent schemes are also download-equivalent, but not vice versa. Download-equivalent schemes must have the same repair set and bandwidth, but not the same queries $\{Q_i\}$.
\end{definition}

\begin{lemma}\label{lem:downloadequi}
Let $\mathcal{C}$ be an $(n,k)$ MDS array code over $\mathbb{F}_q$ with $t$ linear substripes. Consider a repair scheme $\{W^{(0)}, \dots, W^{(t-1)}\}$ for failed node $i^*$. The following operations on $\{W^{(0)}, \dots, W^{(t-1)}\}$ produce a download-equivalent repair scheme.
\begin{enumerate}
\item For a piggybacking code, adding $\kappa \colof{W}{j}{t-1}$ to $\colof{W}{j}{0}$ for some fixed $\kappa \in \mathbb{F}_q$ and every $j$
\item For a linebacking code, adding $\kappa \colof{W}{j}{t-1}$ to $\colof{W}{j}{\ell}$ for some fixed $\kappa \in \mathbb{F}_q$, $\ell \neq t-1$ and every $j$
\end{enumerate}
\end{lemma}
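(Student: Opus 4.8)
The plan is to show that the stated column operations on the repair matrices $\{W^{(0)},\dots,W^{(t-1)}\}$ (i) preserve membership in $\mathcal{C}^\perp$ for piggybacking (resp.\ linebacking) codes, and (ii) preserve, for each node $i$, the dimension $\dim(\{\rowof{W}{j}{i}\mid j\in[0,t-1]\})$. Point (ii) is the \emph{definition} of download-equivalence, so the content is really (i) together with a book-keeping check on the row spans. I will do the linebacking case (operation 2) carefully; the piggybacking case (operation 1) is the special case $\ell = 0$ of essentially the same argument, but using Lemma~\ref{lem:cols} instead of Corollary~\ref{cor:cols} to see why only $\ell=0$ is allowed.

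First, the duality check. Write $V^{(j)}$ for the matrix obtained from $W^{(j)}$ by replacing $\colof{V}{j}{\ell}$ with $\colof{W}{j}{\ell} + \kappa\,\colof{W}{j}{t-1}$ and leaving all other columns unchanged. Using Corollary~\ref{cor:cols}, $W^{(j)}\in\mathcal{C}^\perp$ means $F\,\colof{W}{j}{t-1}=\bm0^T$ and $F\,\colof{W}{j}{m}+\Pj{m}\,\colof{W}{j}{t-1}=\bm0^T$ for every $m\neq t-1$. I need to verify the same two families of equations for $V^{(j)}$. The last column is untouched, so $F\,\colof{V}{j}{t-1}=F\,\colof{W}{j}{t-1}=\bm0^T$. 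For a column index $m\neq t-1$ with $m\neq\ell$, nothing changed and the equation still holds. For $m=\ell$ we compute $F\,\colof{V}{j}{\ell}+\Pj{\ell}\,\colof{V}{j}{t-1} = F\,\colof{W}{j}{\ell} + \kappa\,F\,\colof{W}{j}{t-1} + \Pj{\ell}\,\colof{W}{j}{t-1} = (F\,\colof{W}{j}{\ell}+\Pj{\ell}\,\colof{W}{j}{t-1}) + \kappa\,(F\,\colof{W}{j}{t-1}) = \bm0^T + \kappa\bm0^T = \bm0^T$, using both defining equations for $W^{(j)}$. Hence $V^{(j)}\in\mathcal{C}^\perp$ for every $j$. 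The same computation for a general piggybacking code (Lemma~\ref{lem:cols}) goes through only when the modified column is column $0$: the defining equation indexed by $i$ is $F\,\colof{W}{i}{i}+\sum_{m>i}\Pij{i}{m}\,\colof{W}{i}{m}=\bm0^T$, and adding $\kappa\,\colof{W}{i}{t-1}$ to $\colof{W}{i}{\ell}$ perturbs equation $i$ by $\kappa\,\Pij{i}{\ell}\,\colof{W}{i}{t-1}$ for every $i<\ell$; only when $\ell=0$ is there no such $i$, so this is exactly where the restriction in operation~1 comes from and why only column $0$ is safe in general. (Recall the last column itself satisfies $F\,\colof{W}{j}{t-1}=\bm0^T$, so adding $\kappa\,\colof{W}{j}{t-1}$ to it does not help and is not what we do.)

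Second, the dimension check. Fix a node $i$ and look at the $t\times t$ matrix whose rows are $\rowof{W}{0}{i},\dots,\rowof{W}{t-1}{i}$ — call it $R_i\in\F_q^{t\times t}$; its row span is $\{\rowof{W}{j}{i}\mid j\}$ and we want its rank. The column operation "add $\kappa$ times column $t-1$ to column $\ell$, simultaneously in every $W^{(j)}$" acts on $R_i$ as a single elementary \emph{column} operation $R_i \mapsto R_i E$, where $E$ is the elementary matrix $I + \kappa\,e_{t-1}e_\ell^T$ (note $\ell\neq t-1$, so $E$ is unipotent, hence invertible). Right-multiplication by an invertible matrix does not change the rank, so $\dim(\{\rowof{V}{j}{i}\mid j\}) = \operatorname{rank}(R_iE) = \operatorname{rank}(R_i) = \dim(\{\rowof{W}{j}{i}\mid j\})$ for every $i$. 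This holds for all rows including $i=i^*$, so in particular the repair-property condition $\dim(\{\rowof{W}{j}{i^*}\mid j\}) = t$ is preserved, and the set of nonzero rows is unchanged, so $\{V^{(j)}\}$ is again a valid repair scheme with the same repair set and the same bandwidth — i.e.\ download-equivalent.

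I do not expect a genuine obstacle here; the lemma is essentially a bookkeeping statement. The one point that needs care — and the place where an error would most naturally creep in — is tracking \emph{which} defining equation of $\mathcal{C}^\perp$ is disturbed by the column operation and confirming it is disturbed only by a multiple of $F\,\colof{W}{j}{t-1}=\bm0^T$; this is what forces "column $0$ only" for general piggybacking versus "any column $\ell\neq t-1$" for linebacking, and it is worth stating explicitly rather than leaving implicit.
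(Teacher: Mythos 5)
Your proposal is correct and follows essentially the same route as the paper: verify membership in $\mathcal{C}^\perp$ by checking that the only perturbed defining equation changes by $\kappa F\,\colof{W}{j}{t-1}=\bm{0}^T$, then note that the simultaneous column operation acts on each row set as an invertible elementary column operation, preserving every $\dim(\{\rowof{W}{j}{i}\mid j\})$. Your explicit explanation of why general piggybacking forces $\ell=0$ (equations indexed by $i<\ell$ pick up a term $\kappa\,\Pij{i}{\ell}\colof{W}{j}{t-1}$ that need not vanish) is a worthwhile clarification that the paper leaves implicit.
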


\begin{proof}
First, we will argue that the resulting matrices are still dual codewords.

\begin{enumerate}
\item For a piggybacking code $\mathcal{C}$, recall from Lemma~\ref{lem:cols} that $X \in \mathcal{C}^\perp$ if and only if
\begin{equation}
\label{eq:whyzero}
 F x_{\bullet,i} + \Pij{i}{i+1} x_{\bullet,i+1} + \cdots + \Pij{i}{t-1} x_{\bullet,t-1} = \bm{0}^T \ \forall i 
\end{equation}
Consider the repair matrix $W^{(j)}$ and the resulting matrix $W^{(j)'}$ obtained by adding $\kappa$ times the last column to the $0^{th}$ column. Only the $0^{th}$ column of $W^{(j)'}$ differs from $W^{(j)}$, so we need only check the equation for $i=0$:
\begin{align*}
F w^{(j)'}_{\bullet,0} + \Pij{0}{1} w^{(j)'}_{\bullet,1} + \cdots + \Pij{0}{t-1} w^{(j)'}_{\bullet,t-1} &= F (w^{(j)}_{\bullet,0} + \kappa w^{(j)}_{\bullet,t-1} ) + \Pij{0}{1} w^{(j)}_{\bullet,1} + \cdots + \Pij{0}{t-1} w^{(j)}_{\bullet,t-1} \\
&= \kappa F w^{(j)}_{\bullet,t-1} + F w^{(j)}_{\bullet,0} + \Pij{0}{1} w^{(j)}_{\bullet,1} + \cdots + \Pij{0}{t-1} w^{(j)}_{\bullet,t-1} \\
&= \kappa \bm{0}^T + \bm{0}^T = \bm{0}^T
\end{align*}
as desired, where the last line results from two applications of \eqref{eq:whyzero} to $W^{(j)} \in \cC^\perp$.

\item For a linebacking code $\mathcal{C}$, recall from Corollary~\ref{cor:cols} that $X \in \mathcal{C}^\perp$ if and only if
\begin{align*}
F x_{\bullet,t-1} &= \bm{0}^T \\
F x_{\bullet,i} + \Pj{i} x_{\bullet,t-1} &= \bm{0}^T \ \forall i \neq t-1
\end{align*}
Consider the repair matrix $W^{(j)}$ and the resulting matrix $W^{(j)'}$ obtained by adding $\kappa$ times the last column to the $\ell^{th}$ column. Only the $\ell^{th}$ column of $W^{(j)'}$ differs from $W^{(j)}$, so we need only check the equation for $i=\ell$:
\begin{align*}
F w^{(j)'}_{\bullet,\ell} + \Pj{\ell} w^{(j)'}_{\bullet,t-1} &= F (w^{(j)}_{\bullet,\ell} + \kappa w^{(j)}_{\bullet,t-1}) + \Pj{\ell} w^{(j)}_{\bullet,t-1} \\
&= \kappa F w^{(j)}_{\bullet,t-1} + F w^{(j)}_{\bullet,\ell} + \Pj{\ell} w^{(j)}_{\bullet,t-1} \\
&= \kappa \bm{0}^T + \bm{0}^T = \bm{0}^T
\end{align*}
as desired.
\end{enumerate}

Now, consider the effect of these operations on the row sets $\{\rowof{W}{j}{i} \mid j \in [0,t-1]\}$ for row $i$. If we consider this as the row space of a matrix, it becomes clear that these operations are equivalent to elementary column operations. Thus they may change the space spanned by the row set but not its dimension, so by definition these operations result in a download-equivalent repair scheme.
\end{proof}

\subsection{Perfect Bandwidth Repair Schemes}\label{sec:optbw}

Recall from Section~\ref{sec:related} that an $(n,k)$ MDS array code with $t \leq n-k$ substripes has \emph{perfect bandwidth} if it achieves the trivial lower bound on bandwidth, $b = k+t-1$.

\begin{observation}\label{rem:d}
Any perfect bandwidth $(n,k)$ MDS code with $t$ linear substripes must have $d = k+t-1$ nodes in each repair set, and download a single symbol from each. This follows from the cut-set bound \cite{dimakis2010}
\[b \geq \frac{td}{d-k+1}\]
because $d \leq b$ trivially (each node in the repair set contributes at least one symbol), and we can only achieve $b \leq k+t-1$ if $d \geq b$.
\end{observation}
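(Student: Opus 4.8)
The plan is to combine the cut-set bound with the trivial per-node lower bound in the regime $t \le n-k$ and the target bandwidth $b = k+t-1$. First I would recall that for any MDS array code with $t$ substripes, a repair scheme with locality $d$ must satisfy the cut-set bound $b \ge \tfrac{td}{d-k+1}$, and that trivially $d \le b$, since each of the $d$ nodes in the repair set must contribute at least one symbol of $\F_q$ (no node can send a fractional symbol without symbol extension, which we have excluded). Starting from $b = k+t-1$, I would substitute into the cut-set bound: since $\tfrac{td}{d-k+1}$ is decreasing in $d$ and $d \le b$, we get $b \ge \tfrac{td}{d-k+1} \ge \tfrac{tb}{b-k+1}$; plugging $b = k+t-1$ gives $\tfrac{tb}{b-k+1} = \tfrac{t(k+t-1)}{t} = k+t-1 = b$, so both inequalities are in fact equalities.

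Next I would extract the two structural consequences from the chain of equalities. The equality $\tfrac{td}{d-k+1} = \tfrac{tb}{b-k+1}$ together with monotonicity of $x \mapsto \tfrac{tx}{x-k+1}$ in $x$ forces $d = b = k+t-1$, so every repair set has exactly $d = k+t-1$ nodes. Then, since the total bandwidth $b = \sum_{i \in S} b_i$ equals $d = |S|$ and each $b_i \ge 1$, we must have $b_i = 1$ for every $i \in S$; that is, exactly one symbol of $\F_q$ is downloaded from each node in the repair set. This is precisely the claim.

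I do not expect a genuine obstacle here: the statement is essentially a two-line manipulation of known bounds, and the only subtlety is making explicit that we are in the no-symbol-extension regime (so that $b_i \ge 1$ is forced to be an integer bound) and that $t \le n-k$ ensures $d = k+t-1 \le n-1$, so such a repair set of the required size can exist. The ``hard part,'' to the extent there is one, is simply being careful that the monotonicity argument is applied correctly — $\tfrac{td}{d-k+1} = t + \tfrac{t(k-1)}{d-k+1}$ is strictly decreasing in $d$ for $d > k-1$, which holds since $d = k+t-1 > k-1$ as $t \ge 2$ — so that $d \le b$ combined with the cut-set bound pins $d$ down exactly rather than leaving a range.
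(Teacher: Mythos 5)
Your proposal is correct and follows essentially the same route as the paper: combine the cut-set bound with the trivial bound $d \le b$, observe that $b = k+t-1$ forces equality throughout, and conclude $d = b$ with one symbol per helper node. The extra care you take with the strict monotonicity of $d \mapsto \tfrac{td}{d-k+1}$ and the integrality of the $b_i$ is a slightly more explicit rendering of the paper's terse justification, not a different argument.
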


Per Theorem~\ref{thm:matrixchar}, a repair scheme for node $i^*$ from a set of other nodes $S$ can be characterized as a set of repair matrices $\{W^{(0)}, \dots, W^{(t-1)}\}$ in $\mathcal{C}^\perp$ with
\begin{align*}
\dim(\{\rowof{W}{j}{i^*} \mid j \in [0,t-1]\}) &=t \\
\sum_{i \in S} \dim(\{\rowof{W}{j}{i} \mid j \in [0,t-1]\}) &\leq b
\end{align*}
From Observation~\ref{rem:d}, we must have $|S| = k+t-1$ and $\dim(\{\rowof{W}{j}{i} \mid j \in [0,t-1]\}) = 1 \ \forall i \in S$, for any perfect bandwidth repair scheme. With this observation, as well as the preceding discussion in Section~\ref{sec:pbchar}, we note some properties of perfect bandwidth repair schemes for piggybacking codes which will be useful later.

\begin{definition}
We say a piggybacking repair scheme $\{W^{(0)}, \dots, W^{(t-1)}\}$ for node $i^*$ from a set $S$ is in \emph{standard form} if the following conditions hold; see Figure~\ref{fig:stdform} for an illustration.
\begin{enumerate}
\item We can partition $S$ such that for some $T \subset S$, $|T| = k-1$, we have $S = T \cup \{r_0, \dots, r_{t-1}\}$, where repair matrix $W^{(j)}$ is only non-zero on rows $T \cup \{r_j, i^*\}$.
\item The last column of each repair matrix has exactly $k+1$ non-zero entries, i.e., $\text{wt}(\colof{w}{j}{t-1}) = k+1 \ \forall j$.
\end{enumerate}
\end{definition}

\begin{figure}[h]
\begin{tikzpicture}[scale=.35]

\begin{scope}
\draw (0,-2) rectangle (6,10);
\node at (3,9) {1\ 0\ 0\ $\cdots$\ 1};
\draw[dashed] (.2,8) to (5.8,8);
\node at (3,7) {2\ 2\ 2\ $\cdots$\ 2};
\node at (3,6) {1\ 0\ -1\ $\cdots$\ 1};
\draw[dashed] (.2,5) to (5.8,5);
\node at (3,4) {1\ 2\ 0\ $\cdots$\ 1};
\node at (3,3) {0\ 0\ 0\ $\cdots$\ 0};
\node at (3,2) {0\ 0\ 0\ $\cdots$\ 0};
\draw[dashed] (.2,1) to (5.8,1);
\node at (3,0) {0\ 0\ 0\ $\cdots$\ 0};
\node at (3,-1) {0\ 0\ 0\ $\cdots$\ 0};
\node at (3,-3) {$W^{(0)}$};
\end{scope}

\begin{scope}[xshift=8cm]
\draw (0,-2) rectangle (6,10);
\node at (3,9) {0\ 1\ 0\ $\cdots$\ 1};
\draw[dashed] (.2,8) to (5.8,8);
\node at (3,7) {1\ 1\ 1\ $\cdots$\ 1};
\node at (3,6) {2\ 0\ -2\ $\cdots$\ 2};
\draw[dashed] (.2,5) to (5.8,5);
\node at (3,4) {0\ 0\ 0\ $\cdots$\ 0};
\node at (3,3) {1\ 1\ 2\ $\cdots$\ 1};
\node at (3,2) {0\ 0\ 0\ $\cdots$\ 0};
\draw[dashed] (.2,1) to (5.8,1);
\node at (3,0) {0\ 0\ 0\ $\cdots$\ 0};
\node at (3,-1) {0\ 0\ 0\ $\cdots$\ 0};
\node at (3,-3) {$W^{(1)}$};
\end{scope}

\node at (17,5) {$\cdots$};

\begin{scope}[xshift=20cm]
\draw (0,-2) rectangle (6,10);
\node at (3,9) {0\ 0\ 0\ $\cdots$\ 1};
\draw[dashed] (.2,8) to (5.8,8);
\node at (3,7) {2\ 2\ 2\ $\cdots$\ 2};
\node at (3,6) {2\ 0\ -2\ $\cdots$\ 2};
\draw[dashed] (.2,5) to (5.8,5);
\node at (3,4) {0\ 0\ 0\ $\cdots$\ 0};
\node at (3,3) {0\ 0\ 0\ $\cdots$\ 0};
\node at (3,2) {2\ 0\ 0\ $\cdots$\ 1};
\draw[dashed] (.2,1) to (5.8,1);
\node at (3,0) {0\ 0\ 0\ $\cdots$\ 0};
\node at (3,-1) {0\ 0\ 0\ $\cdots$\ 0};
\node at (3,-3) {$W^{(t-1)}$};
\end{scope}

\begin{scope}[xshift=27cm]
\node[anchor=west] at (0,9) {Row $i^*$ has full dimension $t$};
\node[anchor=west] at (0,6.5) {Rows $i \in T$ have low dimension};
\node[anchor=west] at (0,3) {Row $r_j$ is exclusive to $W^{(j)}$};
\node[anchor=west] at (0,-.5) {All other $n - k - t$ rows are $\bm{0}$};
\end{scope}

\begin{scope}[xshift=-1cm]
\node[anchor=east] at (0,9) {$i^*$};
\node[anchor=east] at (0,6.5) {$T$};
\node[anchor=east] at (0,4) {$r_0$}; 
\node[anchor=east] at (0,3) {$\cdots$};
\node[anchor=east] at (0,2) {$r_{t-1}$};
\end{scope}
\end{tikzpicture}
\caption{An example illustrating the structure of a linear repair scheme in standard form with $q = 3$.}\label{fig:stdform}
\end{figure}

\begin{observation}\label{rem:k+1}
Consider a perfect bandwidth $(n,k)$ piggybacking code $\mathcal{C}$ with $t$ substripes, base code $\mathcal{C}_0$ with generator matrix $F$, and piggybacking matrices $\{\Pij{i}{j} \mid i \in [0,t-2],j \in [i+1,t-1]\}$. If $\{W^{(0)}, \dots, W^{(t-1)}\}$ is any repair scheme for $\mathcal{C}$ and some node $i^*$, then the rightmost \emph{non-zero} column of each $W^{(j)}$ must have at least $k+1$ non-zeros. This is because $W^{(j)} \in \mathcal{C}^\perp$ requires
\[F \colof{W}{j}{i} + \Pij{i}{i+1} \colof{W}{j}{i+1} + \cdots + \Pij{i}{t-1} \colof{W}{j}{t-1} = \bm{0}^T \ \forall i\]
by Lemma~\ref{lem:cols}, so if $i$ is the rightmost non-zero column we get $F \colof{W}{j}{i} = \bm{0}^T$. But since the base code $\mathcal{C}_0$ is MDS, any $k$ columns of the generator matrix $F$ are linearly independent, so this equation can only hold if $\colof{W}{j}{i}$ has at least $k+1$ non-zeros. Note also that $W^{(j)}$ must have a rightmost non-zero column. Otherwise, $W^{(j)}$ would be entirely zero, including row $i^*$. But then $\dim \{\rowof{W}{j}{i^*} \mid j \in [0,t-1]\} \leq t-1$, which contradicts the requirements for a valid repair scheme from Theorem~\ref{thm:matrixchar}.
\end{observation}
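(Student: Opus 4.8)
The plan is to extract the claim directly from the dual-code description of Lemma~\ref{lem:cols} together with the MDS property of the base code; no feature of perfect bandwidth beyond Theorem~\ref{thm:matrixchar} is actually needed. First I would record the elementary structural fact that each $W^{(j)}$ is a nonzero matrix. By Theorem~\ref{thm:matrixchar} a valid repair scheme satisfies $\dim(\{\rowof{W}{j}{i^*} \mid j \in [0,t-1]\}) = t$; since this set contains at most $t$ vectors (one per $j \in [0,t-1]$) lying in $\F_q^t$, its span can have dimension $t$ only if all $t$ of these vectors are nonzero and linearly independent. In particular $\rowof{W}{j}{i^*} \neq \bm{0}$, so $W^{(j)}$ is not the zero matrix and hence has a well-defined rightmost nonzero column; let $i$ denote its index.

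Next I would use the dual constraints. By Lemma~\ref{lem:cols}, membership $W^{(j)} \in \cC^\perp$ forces $F \colof{W}{j}{i} + \Pij{i}{i+1} \colof{W}{j}{i+1} + \cdots + \Pij{i}{t-1} \colof{W}{j}{t-1} = \bm{0}^T$. Because $i$ is the rightmost nonzero column, every column $\colof{W}{j}{i'}$ with $i' > i$ is zero, so each piggyback term in this sum vanishes and the identity collapses to $F \colof{W}{j}{i} = \bm{0}^T$; equivalently, the length-$n$ vector $\colof{W}{j}{i}$ lies in $\cC_0^\perp$.

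Finally I would invoke the MDS property of the base code: any $k$ columns of $F$ are linearly independent, which is exactly the statement that $\cC_0^\perp$ is an $(n,n-k)$ MDS code of minimum distance $k+1$. Since $\colof{W}{j}{i}$ is a \emph{nonzero} vector of $\cC_0^\perp$ (nonzero precisely because column $i$ was chosen to be a nonzero column), it must have at least $k+1$ nonzero entries, which is the assertion. I do not expect a genuine obstacle here; the only two points needing care are confirming that the rightmost nonzero column exists (hence the preliminary remark that $W^{(j)} \neq 0$) and observing that all piggyback terms to the right of column $i$ drop out so that $F \colof{W}{j}{i} = \bm{0}^T$ survives on its own.
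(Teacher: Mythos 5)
Your proposal is correct and follows essentially the same route as the paper: apply Lemma~\ref{lem:cols} at the rightmost non-zero column so the piggyback terms vanish, leaving $F \colof{W}{j}{i} = \bm{0}^T$, then invoke the MDS property of the base code (equivalently the minimum distance $k+1$ of $\cC_0^\perp$), and justify the existence of that column via the full-rank condition on row $i^*$ from Theorem~\ref{thm:matrixchar}. The only difference is cosmetic ordering --- you establish $W^{(j)} \neq 0$ up front while the paper notes it at the end.
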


\begin{lemma}\label{lem:std}
Let $\mathcal{C}$ be a perfect bandwidth $(n,k)$ piggybacking code with $t$ substripes. Then for any failed node $i^*$, repair set $S$, and corresponding repair scheme $\{W^{(0)}, \dots, W^{(t-1)}\}$, there exists an equivalent repair scheme in standard form.
\end{lemma}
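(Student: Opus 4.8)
The plan is to begin with an arbitrary perfect-bandwidth repair scheme $\{W^{(0)},\dots,W^{(t-1)}\}$ for $i^*$ from $S$ and transform it into standard form using only the equivalence-preserving recombinations of Lemma~\ref{lem:equi}: replacing each $W^{(j)}$ by $\sum_{j'}G_{j,j'}W^{(j')}$ for an invertible $G\in\F_q^{t\times t}$ is a composition of the two operations in that lemma, and so preserves the repair scheme and its equivalence class. First I would record the consequences of Observation~\ref{rem:d}: $|S|=k+t-1$, $\dim(\{\rowof{W}{j}{i^*}\mid j\in[0,t-1]\})=t$, and $\dim(\{\rowof{W}{j}{i}\mid j\in[0,t-1]\})=1$ for each $i\in S$. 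So the $t$ vectors $\rowof{W}{j}{i^*}$ are linearly independent (hence the matrices $W^{(j)}$ are too), and for each $i\in S$ there is a nonzero query vector $\bm{q_i}\in\F_q^t$ with $\rowof{W}{j}{i}=\lambda_{i,j}\bm{q_i}$; write $\bm{\lambda_i}=(\lambda_{i,0},\dots,\lambda_{i,t-1})\neq\bm{0}$ for the \emph{usage pattern} of $i$. A recombination by $G$ fixes each $\bm{q_i}$ and transforms the tuples $(\rowof{W}{j}{i^*})_j$ and $(\lambda_{i,j})_j$ each by $G$. I would then show, using that $\cC$ is MDS with $n>k$, that the $(k+t-1)\times t$ matrix $\Lambda$ with rows $\bm{\lambda_i}$ has full column rank $t$: if $\Lambda\bm v=\bm0$ with $\bm v\neq\bm0$, then $\sum_j v_j W^{(j)}\in\cC^\perp$ vanishes on all rows in $S$ and off $\{i^*\}\cup S$, hence is supported on the single row $i^*$; but an MDS code with $n>k$ imposes no constraint on one node, so $c_{i^*,\bullet}$ ranges over $\F_q^t$, forcing that dual codeword to be $0$ and contradicting linear independence of the $W^{(j)}$.

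Next I would normalize: pick $t$ linearly independent rows of $\Lambda$, say those of nodes $r_0,\dots,r_{t-1}\in S$, let $T=S\setminus\{r_0,\dots,r_{t-1}\}$ (so $|T|=k-1$), and apply the recombination carrying that $t\times t$ submatrix to the identity, so that $\bm{\lambda_{r_j}}=\bm{e_j}$ for all $j$; then $r_{j'}$ is a nonzero row of $W^{(j)}$ exactly when $j'=j$. The heart of the argument is a tight double count. By Observation~\ref{rem:k+1}, each $W^{(j)}$ has a rightmost nonzero column, which lies in $\cC_0^\perp$ and so has at least $k+1$ nonzero entries; since row $i^*$ is nonzero in $W^{(j)}$ as well, $W^{(j)}$ has at least $k$ nonzero rows inside $S$. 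Writing $\rho_j$ for that number, $\sum_j\rho_j\geq tk$. Counting pairs $(i,j)$ with $\lambda_{i,j}\neq0$ also gives $\sum_j\rho_j=\sum_{i\in S}|\mathrm{supp}(\bm{\lambda_i})|=t\cdot1+\sum_{i\in T}|\mathrm{supp}(\bm{\lambda_i})|\leq t+(k-1)t=tk$, using $\bm{\lambda_{r_j}}=\bm{e_j}$ and $|\mathrm{supp}(\bm{\lambda_i})|\leq t$. So every inequality is an equality: each $\bm{\lambda_i}$ with $i\in T$ has full support, $W^{(j)}$ is nonzero on all of $T$, and the nonzero rows of $W^{(j)}$ are exactly $\{i^*\}\cup T\cup\{r_j\}$, a set of size $k+1$ --- which is Condition~1 of standard form.

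For Condition~2, I would show that the rightmost nonzero column of every $W^{(j)}$ is column $t-1$. Since $W^{(j)}$ has exactly $k+1$ nonzero rows and its rightmost nonzero column $c_j$ has at least $k+1$ nonzeros among them, that column is nonzero on all of $\{i^*\}\cup T\cup\{r_j\}$; fixing any $i\in T$ (nonempty since $k\geq2$), this gives $(\bm{q_i})_{c_j}\neq0$ and $(\bm{q_i})_\ell=0$ for all $\ell>c_j$. Applied to a pair of indices with $c_j<c_{j'}$, this is contradictory at coordinate $c_{j'}$ of $\bm{q_i}$, so all the $c_j$ equal a common value $c$; and $c<t-1$ would force every $\rowof{W}{j}{i^*}$ to vanish in coordinate $t-1$, contradicting $\dim(\{\rowof{W}{j}{i^*}\mid j\in[0,t-1]\})=t$. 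Hence $c=t-1$, so $\colof{w}{j}{t-1}$ is nonzero with support exactly $\{i^*\}\cup T\cup\{r_j\}$, i.e.\ weight $k+1$. Every step was a recombination from Lemma~\ref{lem:equi}, so the result is an equivalent, valid perfect-bandwidth repair scheme, and it is in standard form.

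I expect the tight double count to be the delicate step: it goes through only because the cut-set consequences (Observation~\ref{rem:d}), the $(k+1)$-nonzero-column bound from the MDS base code (Observation~\ref{rem:k+1}), and the rank-$t$ normalization of $\Lambda$ all fit together with no slack, so the setup must be arranged exactly. The auxiliary fact that $\Lambda$ has rank $t$ (via the MDS property of $\cC$) --- which is what forbids redundant queries and makes the $r_j$-rows singletons --- is the claim I would take the most care to justify cleanly.
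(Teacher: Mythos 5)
Your proof is correct, and it reaches the paper's standard form by a noticeably different internal route. The paper's proof is a greedy elimination: it fixes an \emph{arbitrary} $T \subset S$ with $|T|=k-1$ up front, then for each $j$ in turn picks $r_j$ to be any surviving non-zero row of $W^{(j)}$ outside $T \cup \{i^*\}$ (which exists by Observation~\ref{rem:k+1}) and uses the dimension-one property of that row to zero it out of every other matrix by adding multiples of $W^{(j)}$; the supports $T \cup \{r_j, i^*\}$ then follow by counting, and Condition~2 is argued by showing that a zero last column in one matrix would, through the shared dimension-one rows of $T$, force all last columns to be zero and kill the rank at row $i^*$. You instead front-load the linear algebra: you prove the usage-pattern matrix $\Lambda$ has full column rank $t$ (your single-row-support argument for dual codewords is sound, and the same conclusion also follows directly from Observation~\ref{rem:k+1} since $k+1 \geq 3 > 1$), normalize with one invertible recombination so that $\bm{\lambda_{r_j}} = \bm{e_j}$, and then pin down the supports by a tight double count; your Condition~2 argument via a common rightmost non-zero column is likewise different but valid. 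The trade-off: the paper's argument shows that \emph{any} choice of $T$ works and is shorter, while yours isolates the rank-$t$ structure of $\Lambda$ --- a clean, reusable fact (it is precisely what forbids redundant repair matrices) that the paper leaves implicit.
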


\begin{proof}
From Observation~\ref{rem:d}, we know that the matrices $\{W^{(0)}, \dots, W^{(t-1)}\}$ are non-zero on exactly $k+t$ rows, and each row of $S$ has dimension 1. Furthermore, from Lemma~\ref{lem:equi}, we know that scaling and adding two repair matrices will result in an equivalent repair scheme. Thus to obtain an equivalent repair scheme in standard form, we proceed as follows.
\begin{enumerate}
\item Choose any $T \subset S$, $|T| = k-1$ to be the set of shared non-zero rows.
\item Go through the repair matrices in order. For each $W^{(j)}$,
	\begin{itemize}
	\item Pick $r_j$ to be any remaining non-zero row in $W^{(j)}$ not in $T$ or equal to $i^*$. Such a row must exist because otherwise $W^{(j)}$ could have at most $k$ non-zero rows, but this would contradict Observation~\ref{rem:k+1}.
	\item Add multiples of $W^{(j)}$ to each other repair matrix to zero out row $r_j$. This can be done because row $r_j \in S$ has dimension 1.
	\end{itemize}
\end{enumerate}
The above procedure results in each $W^{(j)}$ being non-zero only on the $k+1$ rows $T \cup \{r_j, i^*\}$ since every other row $r_i, i \neq j$ was zeroed out using matrix $i$; subsequent operations would not affect this row because every matrix added to $W^{(j)}$ had also had row $r_i$ zeroed out.

Furthermore, the last column of $W^{(j)}$ must have exactly $k+1$ non-zeros: By Observation~\ref{rem:k+1}, there are exactly $k+1$ non-zero rows in $W^{(j)}$, and either the last column is zero or has exactly $k+1$ non-zeros. But if $\colof{W}{j}{t-1}$ were zero, it would also be zero on the $k-1 \geq 1$ rows $T$, each of which has dimension 1 across the repair matrices. This would imply that every repair matrix has a last column of weight less than $k+1$ and therefore all zero by Observation~\ref{rem:k+1}. This would make it impossible to have row $i^*$ have full rank, so we have a contradiction, and the equivalent repair scheme obtained by the above procedure must be in standard form.
\end{proof}

\begin{lemma}\label{lem:q}
Let $\mathcal{C}$ be a perfect bandwidth $(n,k)$ piggybacking code over $\mathbb{F}_q$ with $t=2$ substripes. Then $q \geq k+1$.
\end{lemma}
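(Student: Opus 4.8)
The plan is to look at the last columns of a standard-form repair scheme, recognize their span as an MDS code of dimension $2$ and length $k+2$, and then invoke the classical bound on the length of such a code. First, since $\mathcal{C}$ has perfect bandwidth there is, for some failed node $i^*$ and repair set $S$, a repair scheme, and by Lemma~\ref{lem:std} we may assume it is in standard form: we have repair matrices $\{W^{(0)}, W^{(1)}\}$ with $S = T \cup \{r_0, r_1\}$, $|T| = k-1$, each $W^{(j)}$ supported only on rows $T \cup \{r_j, i^*\}$, and $\mathrm{wt}(w^{(j)}_{\bullet,1}) = k+1$ for $j = 0,1$. By Lemma~\ref{lem:cols} in the case $t=2$ (equivalently, the remark following it), the last column of each repair matrix lies in $\mathcal{C}_0^\perp$; write $v^{(0)} := w^{(0)}_{\bullet,1}$ and $v^{(1)} := w^{(1)}_{\bullet,1}$. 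Since $\mathrm{wt}(v^{(j)}) = k+1$ and $W^{(j)}$ vanishes off the $(k+1)$-element row set $T \cup \{r_j, i^*\}$, the support of $v^{(j)}$ is exactly $T \cup \{r_j, i^*\}$.

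Next I would argue that $V := \mathrm{span}(v^{(0)}, v^{(1)}) \subseteq \mathcal{C}_0^\perp$ is a $2$-dimensional MDS code of length $k+2$. The supports $T \cup \{r_0, i^*\}$ and $T \cup \{r_1, i^*\}$ are distinct because $r_0 \neq r_1$, so $v^{(0)}$ and $v^{(1)}$ are linearly independent and $\dim V = 2$; moreover the union of their supports is the $(k+2)$-element set $T \cup \{r_0, r_1, i^*\}$, and every one of these $k+2$ coordinates is nonzero on at least one of $v^{(0)}, v^{(1)}$. Restricting the coordinates of $V$ to this set is injective (it loses no codewords) and hence yields an $[k+2,2]$ code $\widetilde V$ over $\mathbb{F}_q$ in which every coordinate is active. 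Every nonzero codeword of $\widetilde V$ arises from a nonzero element of $\mathcal{C}_0^\perp$, which has Hamming weight at least $k+1$ since $\mathcal{C}_0^\perp$ is MDS with minimum distance $k+1$; thus $\widetilde V$ has minimum distance at least $k+1$, and by the Singleton bound exactly $k+1$. So $\widetilde V$ is an MDS code of length $k+2$ and dimension $2$.

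Finally I would invoke the standard fact that an MDS code of dimension $2$ and length $N$ over $\mathbb{F}_q$ satisfies $N \leq q+1$: a $2 \times N$ generator matrix of such a code has pairwise linearly independent columns, so its columns represent $N$ distinct points of the projective line $\mathbb{P}^1(\mathbb{F}_q)$, which has $q+1$ points. Applying this to $\widetilde V$ gives $k+2 \leq q+1$, that is, $q \geq k+1$, as claimed. The only step that requires a little care is confirming that $V$ genuinely has dimension $2$ and that its combined support has size exactly $k+2$, so that $\widetilde V$ really is a length-$(k+2)$ code and the length bound bites; both facts come directly from the standard-form guarantees $\mathrm{wt}(v^{(j)}) = k+1$ together with $r_0 \neq r_1$. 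Everything else is the Singleton bound and the projective-line count, so I do not expect a serious obstacle beyond getting that bookkeeping right.
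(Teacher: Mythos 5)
Your proof is correct, and it takes a recognizably different route from the paper's. The paper works with the full $2$-dimensional rows: after normalizing $\rowof{W}{0}{i^*} = [1\ 0]$ and $\rowof{W}{1}{i^*}=[0\ 1]$, it writes $\rowof{W}{1}{i} = \kappa_i \rowof{W}{0}{i}$ for each of the $k+1$ nodes $i \in S$ (using the dimension-$1$ constraint from perfect bandwidth) and shows the $\kappa_i$ are pairwise distinct by observing that otherwise $W^{(1)} - \kappa_i W^{(0)}$ would be a dual codeword whose rightmost nonzero column has weight between $1$ and $k$, contradicting Observation~\ref{rem:k+1}; this forces $k+1$ distinct scalars in $\F_q$. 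You instead isolate only the last columns, recognize their span as a $[k+2,2,k+1]$ MDS subcode of $\cC_0^\perp$ supported on $T \cup \{r_0,r_1,i^*\}$, and invoke the classical length bound $N \le q+1$ for $2$-dimensional MDS codes. The two arguments are secretly counting the same objects --- your $k+2$ projective points are the paper's $k+1$ affine ratios $\kappa_i$ together with the point at infinity contributed by $i^*$, and both distinctness mechanisms ultimately rest on the minimum weight $k+1$ of $\cC_0^\perp$ --- but your packaging is more conceptual and self-contained, outsourcing the counting to a standard fact about MDS codes, whereas the paper's version is more elementary and stays entirely inside its own repair-matrix machinery (it does not even need the standard form of Lemma~\ref{lem:std}, only Observations~\ref{rem:d} and~\ref{rem:k+1}). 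Your bookkeeping (that $\dim V = 2$, that the combined support has exactly $k+2$ coordinates, and that restriction to the support is injective and weight-preserving) all checks out.
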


\begin{proof}
Consider a repair scheme for a node $i^*$ from a set $S$ for such a code, $\{W^{(0)}, W^{(1)}\}$. Because we can perform linear operations on the repair matrices and get an equivalent repair scheme by Lemma~\ref{lem:equi}, we can assume without loss of generality that $\rowof{W}{0}{i^*} = [1 \ 0], \rowof{W}{1}{i^*} = [0 \ 1]$.

Recall that from Observation~\ref{rem:d}, the number of non-zero rows in the repair matrices is $k+t$ including $i^*$, so $|S| = k+t-1 = k+1$. Since the rows $S$ have dimension 1, and by Observation~\ref{rem:k+1} the rightmost non-zero column of each repair matrix has at least $k+1$ non-zeros, we observe that every row in $S$ must be non-zero in $W^{(0)}$. Since the rows of $S$ have dimension 1, this means we can express $\rowof{W}{1}{i} = \kappa_i \rowof{W}{0}{i}$ for every $i \in S$ and some $\kappa_i \in \mathbb{F}_q$.

If any $\kappa_i = \kappa_j$ for $i \neq j$, then consider $W^{(1)} - \kappa_i W^{(0)}$. From Lemma~\ref{lem:equi}, this would still be a dual codeword of $\mathcal{C}$, but clearly it has at least one non-zero in the last column (row $i^*$) and at most $k$ non-zeros in the last column (rows $i^*$ and $S \backslash \{i,j\}$). This contradicts Observation~\ref{rem:k+1}, implying that all the $\kappa_i$'s must be distinct. This necessitates $k+1$ distinct scalars in $\mathbb{F}_q$ so we conclude $q \geq k+1$ as desired.
\end{proof}


\section{``Any $d$'' Regime}\label{sec:any}
As noted in Observation~\ref{rem:d}, perfect bandwidth piggybacking codes with low substriping $t \leq n-k$ must have locality $d = k+t-1$ other nodes. In this section, we consider the regime where, when a node fails, \emph{any} set of $d$ other nodes must be able to repair it with bandwidth $b$ (as opposed to the less strict ``some $d$'' regime, which is treated in Section~\ref{sec:some}). For general linear erasure codes in this regime, \cite{shah2010} showed that for $t \leq k-3$, the cut-set bound is not achievable when only a single symbol is downloaded from each node of the repair set (and thus perfect bandwidth is not achievable). However, they also constructed a code which does achieve the cut-set bound for any $t \geq k - 2$, at least for repairing the systematic nodes. In this section, we will show that piggybacking codes cannot achieve perfect bandwidth for $k \geq 3$, and thus are strictly weaker than general linear codes in this regime.

\subsection{Non-achievability of Perfect Bandwidth for $k \geq 3$}
In this section, we prove the following theorem.
\begin{theorem}\label{thm:anyimposs}
No $(n,k)$ piggybacking code with $k \geq 3$ and $t$ substripes can achieve perfect bandwidth in the regime where any $d = k+t-1$ nodes must be able to repair a failed node.
\end{theorem}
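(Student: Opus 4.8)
The plan is to use the standard-form characterization from Lemma~\ref{lem:std} together with Observation~\ref{rem:k+1} to derive a contradiction. Suppose for contradiction that some $(n,k)$ piggybacking code with $k \geq 3$ achieves perfect bandwidth in the ``any $d$'' regime. Fix a failed systematic node $i^*$ (we may assume the first $k$ nodes are systematic). By Lemma~\ref{lem:std}, for any repair set $S$ we get an equivalent repair scheme $\{W^{(0)}, \dots, W^{(t-1)}\}$ in standard form: there is a shared set $T \subset S$ with $|T| = k-1$, exclusive rows $r_j$ with $W^{(j)}$ supported on $T \cup \{r_j, i^*\}$, and each $\colof{w}{j}{t-1}$ has weight exactly $k+1$. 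The central idea is that the ``any $d$'' requirement lets us vary $S$ (hence $T$ and the $r_j$'s) freely, while the dual-code constraints of Lemma~\ref{lem:cols} (in particular $F\colof{W}{j}{i} + \cdots = \bm{0}^T$, which forces the last column into $\cC_0^\perp$ and every ``rightmost nonzero column'' to have weight $\geq k+1$) are rigid. Combining these should over-constrain the piggybacking matrices.

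The key steps, in order, would be: (1) Restrict attention to the last substripe. Since $\colof{w}{j}{t-1}^T \in \cC_0^\perp$ for every $j$, and $\cC_0^\perp$ is an $(n, n-k)$ MDS code with minimum weight $k+1$, each $\colof{w}{j}{t-1}$ is a minimum-weight codeword of $\cC_0^\perp$ supported exactly on $T \cup \{r_j, i^*\}$. (2) Observe that the support of a minimum-weight codeword of an MDS code determines it up to scaling, so $\colof{w}{j}{t-1}$ is the unique (up to scalar) weight-$(k+1)$ dual codeword on $T \cup \{r_j, i^*\}$. (3) Now exploit the freedom in the ``any $d$'' regime: for a fixed $T$, vary the exclusive rows; and vary $T$ itself. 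Track how the remaining columns $\colof{w}{j}{i}$ for $i < t-1$ are determined by the equations $F\colof{w}{j}{i} = -\sum_{i < \ell} \Pij{i}{\ell}\colof{w}{j}{\ell}$ — since $F$ has full row rank $k$ but $n > k$ columns, $\colof{w}{j}{i}$ is constrained but not determined, yet its projection under $F$ is pinned down by the piggybacking matrices acting on already-determined later columns. (4) Use the full-rank condition $\dim(\{\rowof{W}{j}{i^*}\}) = t$ on row $i^*$ together with $k \geq 3$ to find two incompatible demands on some $\Pij{i}{j}$ — essentially, that a piggyback matrix would have to act as two different linear maps on overlapping inputs, or that some column would be forced to have weight $< k+1$ while still being a nonzero dual-code column, contradicting Observation~\ref{rem:k+1}.

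The main obstacle I expect is step (4): making the counting argument tight enough to rule out \emph{all} choices of piggybacking matrices rather than just generic ones. The slack comes from the fact that $\colof{w}{j}{i}$ for intermediate columns $i$ lives in a coset of $\ker F$, which has dimension $n-k \geq t$, so there is room to maneuver; the argument must show that the simultaneous constraints coming from several repair sets $S$ (and the need for row $i^*$ to span all of $\F_q^t$) collapse this freedom. A natural way to force the contradiction is to play off two repair sets sharing most of $T$ but differing in one element: the shared structure forces agreement of the piggyback matrices on a large subspace, while the full-rank requirement at $i^*$ forces disagreement, and $k \geq 3$ guarantees the overlap is large enough for this clash to be unavoidable (the case $k = 2$ escaping exactly because $|T| = 1$ leaves too little shared structure, consistent with Theorem~\ref{thm:k=2}). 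I would also need to handle repair of parity nodes, but by the systematic reduction and symmetry of the ``any $d$'' condition this should reduce to the systematic case or be handled analogously.
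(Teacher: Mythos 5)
Your ingredients are the right ones---standard form (Lemma~\ref{lem:std}), Observation~\ref{rem:k+1}, and the rigidity that a weight-$(k+1)$ column of a repair matrix is the unique (up to scalar) minimum-weight codeword of $\cC_0^\perp$ on its support---but the engine of your contradiction is misidentified, and this is a genuine gap rather than a missing detail. You fix a single failed node $i^*$ and try to over-constrain the piggybacking matrices by varying the repair set $S$. In every such scheme, however, row $i^*$ is asked to have dimension $t$ and every helper row is asked to have dimension $1$; two schemes for the same $i^*$ with overlapping supports make \emph{compatible} demands on every row (they could even agree entirely on the shared portion), so no clash of the kind you describe in step (4) materializes. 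Indeed, the paper's own positive results (the $k=2$ construction of Theorem~\ref{thm:k=2} and the $k=3$ ``some $d$'' example of Theorem~\ref{thm:someconstruction}) indicate that the obstruction does not live in the family of schemes for one fixed failed node; it lives in the interaction between schemes for \emph{different} failed nodes.

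The paper's proof (via Lemma~\ref{lem:schemes}) takes two repair schemes $\{W^{(j)}\}$ for $i^*_W$ and $\{V^{(j)}\}$ for $i^*_V$ whose repair matrices have the \emph{same} set of $k+t$ non-zero rows---e.g.\ node $0$ repaired from $\{1,\dots,k+t-1\}$ and node $1$ repaired from $\{0,2,\dots,k+t-1\}$, which the ``any $d$'' hypothesis supplies. After putting both in standard form and matching supports, the last columns agree ($\colof{V}{j}{t-1}=\colof{W}{j}{t-1}=\bm{x^{(j)}}$) and the next-to-last columns differ only by $\lambda^{(j)}\bm{x^{(j)}}$. Here $k\geq 3$ enters exactly as you suspected, but with a different payoff: the $k$ shared rows contain, besides $i^*_W$ and $i^*_V$, a third row $r_{\text{shared}}$ that must have dimension $1$ in \emph{both} schemes, which forces all $\lambda^{(j)}$ equal. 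The contradiction is then that row $i^*_W$ must have dimension $1$ in the $V$-scheme (it is a helper there) but dimension $t$ in the $W$-scheme (it is the failed node there), and the common $\lambda$ makes these two demands incompatible. This dual role of a node---failed in one scheme, helper in the other---is the idea your proposal is missing; without it, the freedom you correctly note in the intermediate columns (a coset of $\ker F$ of dimension $n-k\geq t$) is never collapsed, and step (4) cannot be completed as written.
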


To prove this, we will assume the existence of such a code, and consider an arbitrary pair of repair schemes for two different nodes with the same set of non-zero rows in their repair matrices. We assume these repair schemes are in standard form, which can be done without loss of generality, per Lemma~\ref{lem:std}. Finally, we show that the requirements for both schemes to have bandwidth $k+t-1$ render it impossible for the first scheme to recover its failed node. This will demonstrate that no piggybacking code for $k \geq 3$ can achieve perfect bandwidth, even for systematic nodes (in fact, even for only two nodes), and thus piggybacking codes are strictly less powerful than general linear codes in the ``any $d$'' regime.

\begin{lemma}\label{lem:schemes}
No $(n,k)$ piggybacking code with $k \geq 3$ and $t$ substripes can have two bandwidth $b=k+t-1$ repair schemes for two different failed nodes with the same set of $k+t$ non-zero rows.
\end{lemma}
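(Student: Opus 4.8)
The plan is to suppose such a code $\mathcal C$ exists, fix the two perfect‑bandwidth repair schemes — say $\{W^{(0)},\dots,W^{(t-1)}\}$ for failed node $i_1^*$ and $\{V^{(0)},\dots,V^{(t-1)}\}$ for failed node $i_2^*$, both having the same set $N$ of $k+t$ nonzero rows — and put both into standard form using Lemma~\ref{lem:std}. Write $S_1=T_1\cup\{r_0,\dots,r_{t-1}\}$ and $S_2=T_2\cup\{s_0,\dots,s_{t-1}\}$ with $|T_1|=|T_2|=k-1$, where $W^{(j)}$ is supported on $T_1\cup\{r_j,i_1^*\}$ and $V^{(j)}$ on $T_2\cup\{s_j,i_2^*\}$. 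Since $i_1^*,i_2^*\in N$ we have $i_2^*\in S_1$ and $i_1^*\in S_2$.

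The first step is to pin down the last columns. By Lemma~\ref{lem:cols} every $\colof{w}{j}{t-1}$ and every $\colof{v}{j}{t-1}$ lies in $\mathcal C_0^\perp$; by standard form each has support of size exactly $k+1$ contained in $N$; and each family is linearly independent (evaluate at the private coordinates $r_j$, resp.\ $s_j$). Because the space of codewords of the $[n,n-k]$ MDS code $\mathcal C_0^\perp$ supported inside a fixed set of size $k+t$ has dimension exactly $t$ (shortening), both families are bases of one and the same subspace $U$, and $U$, read on the coordinates $N$, is a $[k+t,t]$ MDS code of minimum distance $k+1$. Hence each $(t-1)$-subset of $N$ is the zero set of a codeword of $U$ that is unique up to scaling; in particular $\colof{w}{j}{t-1}$ is forced to be a scalar multiple of the unique codeword of $U$ vanishing on $\{r_\ell:\ell\ne j\}$, and $\colof{v}{j}{t-1}$ vanishes, within $N$, exactly on $\{s_\ell:\ell\ne j\}$.

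The crux — and where $k\ge 3$ enters — is a comparison on the rows of $T_1$. Since $|T_1|=k-1\ge 2$ and $i_2^*$ is the failed node of scheme $2$ (so $i_2^*\notin T_2$), the set $T_1$ contains $i_2^*$ together with at least one further row $b$; and every row of $T_1$ has dimension $1$ in scheme $1$, which forces $\rowof{W}{j}{i}$ to be a common scalar multiple of a fixed vector $\bm z_i$ with nonzero last coordinate — equivalently, every column of $W^{(j)}$ restricted to $T_1$ is a fixed diagonal rescaling of its last column restricted to $T_1$. Using that the $\colof{w}{\ell}{t-1}$ span $U\ni\colof{v}{j}{t-1}$, I would write each $V^{(j)}$ as $\sum_\ell b_{j\ell}W^{(\ell)}$ plus a dual codeword with zero last column, and push this identity through the constraints on the $T_1$‑rows: first rule out (via the zero‑last‑column property, using that a $T_1$‑row has nonzero last coordinate in every $W^{(j)}$) that the extra rows of $T_1$ lie in $R_2=\{s_j\}$, so they lie in $T_2$ and therefore have dimension $1$ in scheme $2$ as well; then observe $i_2^*\in T_1$ must have full dimension $t$ in scheme $2$. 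Tracking the first and last columns through a determinant/rank computation, the full‑dimension requirement at $i_2^*$ forces a certain determinant (built from the $b_{j\ell}$ and the zero‑last‑column corrections) to be nonzero, while the dimension‑$1$ requirement at the other $T_1$‑row forces the same determinant to be zero — the desired contradiction. For $k=2$, $T_1$ has a single element, so no second $T_1$‑row exists and the argument correctly breaks down.

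The main obstacle I anticipate is the bookkeeping in the last step: handling the various ways $T_1,T_2,R_1=\{r_j\},R_2=\{s_j\}$ and the two failed nodes overlap — in particular the subcase $i_2^*\in R_1$ rather than $i_2^*\in T_1$, and whether $R_1=R_2$ — and arranging the determinant computation so the contradiction appears uniformly across cases. The identification $\operatorname{span}\{\colof{w}{j}{t-1}\}=\operatorname{span}\{\colof{v}{j}{t-1}\}=U$ together with the zero‑last‑column observations from Lemma~\ref{lem:downloadequi}'s style of reasoning are the tools that should close each case; producing a clean, case‑free write‑up is the delicate part.
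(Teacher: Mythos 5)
Your setup is sound and matches the paper's: put both schemes in standard form, observe that all the last columns are weight-$(k+1)$ vectors of $\cC_0^\perp$ supported in the common row set, and use $k\geq 3$ to secure a shared nonzero row besides the two failed nodes. But the decisive step --- actually deriving the contradiction --- is not carried out. You propose to write $V^{(j)}=\sum_\ell b_{j\ell}W^{(\ell)}$ plus a dual codeword with zero last column and then extract a contradiction from an unspecified ``determinant/rank computation.'' That correction term is not controlled by anything you have established: a dual codeword with vanishing last column supported on the $k+t$ rows is essentially an arbitrary dual codeword of the $(t-1)$-substripe truncation of the code, so the family $\{b_{j\ell}\}$ together with these corrections has far too many degrees of freedom for the claimed determinant identity to be evident. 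As written, the proof does not close.

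The paper's argument avoids this entirely by working only with the last \emph{two} columns and by first aligning supports pairwise: choose $T_W\ni i^*_V$ and $T_V=(T_W\setminus\{i^*_V\})\cup\{i^*_W\}$, so that after renumbering $W^{(j)}$ and $V^{(j)}$ have the \emph{same} $k+1$ nonzero rows; then $\colof{V}{j}{t-1}=\colof{W}{j}{t-1}=:\bm{x^{(j)}}$ after scaling (one-dimensionality of weight-$(k+1)$ dual codewords on a fixed support), and the constraint $F\colof{V}{j}{t-2}+\Pij{t-2}{t-1}\bm{x^{(j)}}=\bm{0}^T$ on a fixed $(k+1)$-row support forces $\colof{V}{j}{t-2}=\colof{W}{j}{t-2}+\lambda^{(j)}\bm{x^{(j)}}$, since the kernel of $F$ restricted to those rows is exactly $\mathrm{span}(\bm{x^{(j)}})$. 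The shared row $r_{\text{shared}}$ (which exists because $k\geq 3$) has dimension $1$ in both schemes and nonzero last coordinate in every matrix, forcing all $\lambda^{(j)}$ equal to a single $\lambda$; then the dimension-$1$ condition on row $i^*_W$ in the $V$'s transfers, via the column operation subtracting $\lambda$ times the last column, to the $W$'s, collapsing $\dim\{\rowof{W}{j}{i^*_W}\}$ to at most $t-1$. You would need to either supply this mechanism (or an equivalent one) or make your decomposition-plus-determinant step precise; in its current form the proposal identifies the right ingredients but leaves the actual contradiction unproven.
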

\begin{proof}
Assume to obtain a contradiction that for some $k \geq 3$, some $(n,k)$ piggybacking code $\mathcal{C}$---over $\mathbb{F}_q$ with $t$ substripes, base code $\mathcal{C}_0$ generated by $F$, and piggybacking matrices $\{\Pij{i}{j} \mid i \in [0,t-2], j \in [i+1,t-1]\}$---obtains a repair bandwidth of $k+t-1$ for two such repair schemes. Let these schemes be $\{W^{(0)}, \dots, W^{(t-1)}\}$ which repairs node $i^*_W$ and $\{V^{(0)}, \dots, V^{(t-1)}\}$ which repairs $i^*_V \neq i^*_W$, both with bandwidth $k+t-1$. Furthermore, let the set of non-zero row indices be $S$, such that the $W$'s repair $i^*_W$ from $S_W = S \setminus \{i^*_W\}$ and the $V$'s repair $i^*_V$ from $S_V = S \setminus \{i^*_V\}$.

By Lemma~\ref{lem:std}, we can assume without loss of generality that these repair matrices are in standard form. Namely, each repair matrix has exactly $k+1$ non-zero rows; we can choose the shared sets of rows $T = T_W \cup \{i^*_W\} = T_V \cup \{i^*_V\}$ such that all the repair matrices share $|T| = k \geq 3$ non-zero rows including $i^*_W, i^*_V,$ and some other row $r_\text{shared}$; and we can renumber the matrices such that for all $j$, $W^{(j)}$ and $V^{(j)}$ share the same non-zero rows. Furthermore, for all $j$, $\colof{W}{j}{t-1}$ and $\colof{V}{j}{t-1}$ have the same $k+1$ non-zero rows, both live in $\mathcal{C}_0^\perp$, and are non-zero because the matrices are in standard form. Thus they live in the same one-dimensional subspace, so if we define $\bm{x^{(j)}} = \colof{W}{j}{t-1}$, we can scale $V^{(j)}$ such that $\colof{V}{j}{t-1} = \bm{x^{(j)}} = \colof{W}{j}{t-1}$; recall that scaling repair matrices still results in an equivalent repair scheme by Lemma~\ref{lem:equi}.

Now that we have fixed the last column of every matrix, consider the next-to-last columns. Define $\bm{y^{(j)}} = \colof{W}{j}{t-2}$. Now, because each $V^{(j)} \in \mathcal{C}^\perp$, we must have $F \colof{V}{j}{t-2} + \Pij{t-2}{t-1} \colof{V}{j}{t-1} = F \colof{V}{j}{t-2} + \Pij{t-2}{t-1} \bm{x^{(j)}} = \bm{0}^T$. Since we know $\colof{V}{j}{t-2}$ is only non-zero on some fixed $k+1$ rows, any solution to this equation (possible next-to-last column of $V^{(j)}$) can be represented as one fixed solution with only these $k+1$ rows non-zero, plus some vector in $\mathcal{C}_0^\perp$ with only these $k+1$ rows non-zero. Note, however, that $\bm{y^{(j)}}$ is one such fixed solution (since $W^{(j)} \in \mathcal{C}^\perp$ implies $F \colof{W}{j}{t-2} + \Pij{t-2}{t-1} \colof{W}{j}{t-1} = F \colof{W}{j}{t-2} + \Pij{t-2}{t-1} \bm{x^{(j)}} = \bm{0}^T$), and all vectors in $\mathcal{C}_0^\perp$ with only these $k+1$ rows non-zero are scalar multiples of $\bm{x}^{(j)}$. Thus for some $\lambda^{(j)} \in \mathbb{F}_q$ we must have $\colof{V}{j}{t-2} = \bm{y^{(j)}} + \lambda^{(j)} \bm{x^{(j)}}$. Thus we have
\[\forall j \qquad W^{(j)} = \begin{bmatrix}
		& \vertbar							& \vertbar	\\
\cdots	& \bm{y^{(j)}}							& \bm{x^{(j)}}	\\
		& \vertbar							& \vertbar	\\
\end{bmatrix}, \ V^{(j)} = \begin{bmatrix}
		& \vertbar							& \vertbar	\\
\cdots	& \bm{y^{(j)}} + \lambda^{(j)} \bm{x^{(j)}}	& \bm{x^{(j)}}	\\
		& \vertbar							& \vertbar	\\
\end{bmatrix}\]
Now, we recall the assumption that $\{W^{(0)}, \dots, W^{(t-1)}\}$ is a repair scheme for $i^*_W$ and $\{V^{(0)}, \dots, V^{(t-1)}\}$ for $i^*_V$, with bandwidth $k+t-1$, and all the repair matrices are non-zero on row $r_\text{shared} \neq i^*_W,i^*_V$. Thus by Observation~\ref{rem:d}, we know that $\dim ( \{\rowof{W}{j}{r_\text{shared}} \mid j \in [0,t-1] \} ) = \dim ( \{\rowof{V}{j}{r_\text{shared}} \mid j \in [0,t-1] \} ) = 1$. This must also hold when considering only the last two columns, so
\[\dim (\{ (\entryofvec{y}{j}{r_\text{shared}}, \ \entryofvec{x}{j}{r_\text{shared}} ) \ \mid \ j \in [0,t-1] \} ) = \dim ( \{ (\entryofvec{y}{j}{r_\text{shared}} + \lambda^{(j)}\entryofvec{x}{j}{r_\text{shared}}, \ \entryofvec{x}{j}{r_\text{shared}} ) \ \mid \ j \in [0,t-1] \} ) = 1\]
However, since row $r_\text{shared}$ is non-zero in every $W$ and $V$, we know $\entryofvec{x}{j}{r_\text{shared}}$ is non-zero for every $j$, and thus for both of these sets to have dimension 1, it must be that $\lambda^{(j)} = \lambda$ for every $j$.

But now consider row $i^*_W$. We know that in the $V$'s this row must have dimension 1 in order to achieve bandwidth $k+t-1$, which implies
\[\dim ( \{ (\entryofvec{y}{j}{i^*_W} + \lambda \entryofvec{x}{j}{i^*_W}, \ \entryofvec{x}{j}{i^*_W} ) \ \mid \ j \in [0,t-1] \} ) = 1\]
but then
\[\dim ( \{ (\entryofvec{y}{j}{i^*_W}, \ \entryofvec{x}{j}{i^*_W} ) \ \mid \ j \in [0,t-1] \} ) = 1\]
which implies that in the $W$'s, row $i^*_W$ has dimension at most $t-1$. But to be a valid scheme repairing node $i^*_W$, this row must have full dimension, giving us the desired contradiction.
\end{proof}

This lemma immediately leads to the desired result.
\begin{proof}[Proof of Theorem~\ref{thm:anyimposs}]
Assume to obtain a contradiction that a perfect bandwidth $(n,k)$ piggybacking code $\mathcal{C}$ with $k \geq 3$ and $t$ substripes does exist. Then there must exist repair schemes $\{W^{(0)}, \dots, W^{(t-1)}\}$ repairing node $i^*_W = 0$ from nodes $1, 2, \dots, k+t-1$, and $\{V^{(0)}, \dots, V^{(t-1)}\}$ repairing $i^*_V = 1$ using nodes $0, 2, 3, \dots, k+t-1$ each with bandwidth $k+t-1$. However, these repair schemes share the same set of non-zero rows, thus meeting the conditions of Lemma~\ref{lem:schemes}, so we have a contradiction.
\end{proof}

\subsection{Achievability for $k = 2$}\label{sec:anyk=2}
Although piggybacking codes cannot achieve perfect bandwidth for any $k \geq 3$, they can achieve it for $k=2$ and any $2 \leq t \leq n-k$, provided the field size is sufficiently large. In this section, we offer a non-constructive proof that such codes exist for any $2 \times n$ MDS base code generator matrix $F$ and $2 \leq t \leq n-k$, even if we only permit linebacking, i.e., piggybacking only in the last substripe (as in \cite{yang2015}). We proceed by considering the repair of any given node $i^*$ from a given set of $k+t-1 = t+1$ nodes, and counting the choices of piggybacking functions which fail to yield a valid repair scheme. Then we union bound over all choices of $i^*$ and sets of $t+1$ nodes to show that for sufficiently large field size $q$, some choice of piggybacking functions admits a repair scheme for every $i^*$ and set of $t+1$ nodes, with bandwidth $t+1$.

\begin{theorem}\label{thm:k=2}
Let $\cC_0 \subset \F_q^n$ be an MDS code with dimension $k=2$ and generator matrix $F \in \F_q^{k \times n}$, let $2 \leq t \leq n - k$, and suppose that $q$ is sufficiently large so that
$n \binom{n-1}{t+1} \left( 1 - \Pi_{i=1}^{t-1} (1 - \frac{1}{q^i})\right) < 1$.
Then there exists a linebacking code with base code $\cC_0$ and $t$ substripes, which achieves perfect bandwidth ($b = k+t-1 = t+1$), in the ``any $d$" regime.
\end{theorem}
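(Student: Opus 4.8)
The plan is a probabilistic construction: pick the linebacking matrices $\Pj{0}, \dots, \Pj{t-2} \in \F_q^{k \times n}$ uniformly and independently at random, and argue that with positive probability the resulting code repairs every node $i^*$ from every set $S$ of $d = k+t-1 = t+1$ other nodes with bandwidth $t+1$. (The code is MDS regardless of the $\Pj{i}$, since its base code is scalar MDS.) Fix $(i^*, S)$. By Observation~\ref{rem:d} every node of $S$ must be queried for a single symbol, and by Lemma~\ref{lem:std} it suffices to find a repair scheme in standard form; since $k-1 = 1$ this means: choose $\tau \in S$, label $S \setminus \{\tau\} = \{r_0, \dots, r_{t-1}\}$, and produce $W^{(0)}, \dots, W^{(t-1)} \in \cC^\perp$ with $W^{(j)}$ supported on rows $\{\tau, r_j, i^*\}$, the $\tau$-row $1$-dimensional, and the $i^*$-row of full dimension $t$ (the $r_j$-row is automatically $\le 1$-dimensional, as only $W^{(j)}$ is nonzero there). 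I will commit to one canonical choice of $(\tau,\text{labeling})$ per $(i^*,S)$; searching only among schemes of this restricted shape can only hurt, and there are $n\binom{n-1}{t+1}$ pairs $(i^*,S)$ to union bound over.

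Next I would pin down the shape of such a scheme. By Corollary~\ref{cor:cols}, $\colof{w}{j}{t-1} \in \cC_0^\perp$, which is an $[n,n-2,3]$ MDS code; hence there is a codeword $\bm{x^{(j)}} \in \cC_0^\perp$, unique up to scaling, supported exactly on $\{\tau,r_j,i^*\}$, and I normalize so $\entryofvec{x}{j}{i^*} = 1$. For $i < t-1$, Corollary~\ref{cor:cols} demands $F\colof{w}{j}{i} = -\Pj{i}\bm{x^{(j)}}$ with $\colof{w}{j}{i}$ supported on $\{\tau,r_j,i^*\}$; since $F$ restricted to those three columns is a rank-$2$ matrix with kernel spanned by $\bm{x^{(j)}}$ (read off on those coordinates), the solution set is a line $\colof{w}{j}{i} = \bm{u^{(j,i)}} + c_{j,i}\bm{x^{(j)}}$, $c_{j,i}\in\F_q$. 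Requiring the $\tau$-row to be $1$-dimensional with (necessarily nonzero) last coordinate forces all $\rowof{w}{j}{\tau}$ to be proportional to a common vector $\bm v = (v_0,\dots,v_{t-2},1)$, which pins down $c_{j,i} = v_i - \entryofvec{u}{j,i}{\tau}/\entryofvec{x}{j}{\tau}$. The last freedom is the choice of $\bm v \in \F_q^{t-1}$, and the only remaining condition is that $\rowof{w}{0}{i^*},\dots,\rowof{w}{t-1}{i^*}$ span $\F_q^t$. Their entries are $1$ in position $t-1$ and $v_i + \alpha_{j,i}$ in position $i<t-1$, where $\alpha_{j,i} := \entryofvec{u}{j,i}{i^*} - \entryofvec{u}{j,i}{\tau}/\entryofvec{x}{j}{\tau}$; subtracting $v_i$ times the (all-ones) last column from column $i$ shows that this spans $\F_q^t$ for some $\bm v$ if and only if the $t\times t$ matrix $[\,\alpha_{\bullet,0}\mid\cdots\mid\alpha_{\bullet,t-2}\mid \bm 1\,]$ is invertible, independently of $\bm v$.

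It remains to analyze this matrix over the randomness of the $\Pj{i}$. One checks $\alpha_{j,i}$ does not depend on which particular solution $\bm{u^{(j,i)}}$ is chosen (two choices differ by a multiple of $\bm{x^{(j)}}$, which the defining expression annihilates), so $\alpha_{j,i} = -\psi^{(j)}(\Pj{i}\bm{x^{(j)}})$ for the unique nonzero functional $\psi^{(j)}\in(\F_q^2)^*$ with $\psi^{(j)}\circ F|_{\{\tau,r_j,i^*\}} = \big(u \mapsto u_{i^*} - u_\tau/\entryofvec{x}{j}{\tau}\big)$, a datum depending only on $F,i^*,S,\tau,r_j$. Writing $\Pj{i}$ as rows $\bm{p_1},\bm{p_2}$ gives $\alpha_{j,i} = -\langle(\bm{p_1},\bm{p_2}),(\psi^{(j)}_1\bm{x^{(j)}},\psi^{(j)}_2\bm{x^{(j)}})\rangle$, and the $t$ vectors $(\psi^{(j)}_1\bm{x^{(j)}},\psi^{(j)}_2\bm{x^{(j)}})\in\F_q^{2n}$ are linearly independent, because each $\psi^{(j)}\ne 0$ and the $\bm{x^{(j)}}$ are independent (the $r_j$ are distinct and $\bm{x^{(j)}}$ has weight $3$). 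Hence each column $\alpha_{\bullet,i}$ is a uniform element of $\F_q^t$, and the columns are mutually independent because the $\Pj{i}$ are. Appending the fixed nonzero column $\bm 1$, the probability that $[\,\alpha_{\bullet,0}\mid\cdots\mid\alpha_{\bullet,t-2}\mid\bm 1\,]$ is invertible is $\prod_{i=1}^{t-1}(1-q^{-i})$ (add columns one at a time; the $m$-th random column must avoid an $m$-dimensional span). So for each $(i^*,S)$ the failure probability is at most $1 - \prod_{i=1}^{t-1}(1-q^{-i})$, and the union bound over the $n\binom{n-1}{t+1}$ pairs, together with the stated hypothesis, makes the overall failure probability $<1$; any successful choice of $\{\Pj{i}\}$ gives the claimed code. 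The main obstacle is the middle step: correctly reducing ``a standard-form repair scheme for $(i^*,S)$ exists'' to the single clean criterion ``$[\alpha\mid\bm 1]$ is invertible'' — in particular verifying that the $\tau$-row constraint leaves exactly the $\bm v$-family and that the $i^*$-row condition is independent of $\bm v$ — together with the claim that the $\alpha_{\bullet,i}$ are uniform and independent, which rests entirely on the MDS structure of $\cC_0^\perp$ (existence and linear independence of the weight-$3$ dual codewords $\bm{x^{(j)}}$). The concluding counting argument is routine.
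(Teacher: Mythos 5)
Your proposal is correct and follows essentially the same route as the paper: choose the linebacking matrices uniformly at random, build repair matrices supported on $\{\tau, r_j, i^*\}$ with weight-$3$ last columns in $\cC_0^\perp$, reduce success to a full-rank condition on the $i^*$-rows with probability $\prod_{i=1}^{t-1}(1-q^{-i})$, and union bound over the $n\binom{n-1}{t+1}$ pairs $(i^*,S)$. The only difference is cosmetic: the paper forces the $\tau$-entry of every column except the last to be zero, which makes each $\colof{W}{j}{i}$ uniquely determined and directly uniform, whereas you retain that entry as a free parameter $\bm v$ and then verify the invertibility criterion is independent of it — your extra step is sound but the paper short-circuits it.
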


\begin{proof}
Consider an arbitrary failed node $i^*$ and a repair set $S$ of size $t+1$.
Choose linebacking matrices $\{\Pj{i} \mid i \in [0,t-2]\}$ uniformly at random and let $\cC$ be the resulting linebacking code.  We will show that with positive probability, $\cC$ admits a valid repair scheme for $i^*$ from $S$.

We will construct a set of repair matrices.  Write $S = \{r^*, r_0, \ldots, r_{t-1}\}$.  We will choose matrices $W^{(0)}, \ldots, W^{(t-1)}$ with the following structure:
\begin{enumerate}
	\item $\rowof{W}{j}{i^*}$ may be nonzero for all $j$
	\item $\rowof{W}{j}{r_j}$ may be nonzero for all $j$
	\item $\entryof{W}{j}{r^*}{t-1}$ may be nonzero for all $j$
	\item all other entries of $W^{(j)}$ for all $j$ are zero.
\end{enumerate}
Now, we will fill in the nonzero entries of each $W^{(j)}$, so that $W^{(j)} \in \cC^\perp$.
First note that $\colof{W}{j}{t-1}$ is restricted to only have $3$ nonzero entries.  For each $j$, fix any choices for these nonzero entries so that $\colof{W}{j}{t-1} \neq 0$ and so that $F \colof{W}{j}{t-1} = 0$.  Notice that because $F \in \F_q^{2 \times n}$ and we have three nonzero entries to set, such a choice exists.  For $i \in [0,t-2]$, set
\[ \bm{z_i^{(j)}} = -P^{(i)} \colof{W}{j}{t-1}. \]
Thus, $\bm{z_i^{(j)}}$ is uniformly random (because $P^{(i)}$ is), and for a fixed $j$, the set $\{ \bm{z_i^{(j)}} \mid i \in [0,t-2] \}$ is independent.  Now for each $i<t-1$ and for each $j$, choose $\colof{W}{j}{i}$ so that it obeys the nonzero pattern above, and so that
\[ F \colof{W}{j}{i} = \bm{z_i^{(j)}}. \]
We can do this because $\colof{W}{j}{i}$ has two nonzero entries and each $2 \times 2$ minor of $F$ is full rank, since $\cC_0$ is MDS; moreover, the choice of $\colof{W}{j}{i}$ is thus also uniformly random, and for a fixed $j$, the set $\{ \colof{W}{j}{i} \mid i \in [0,t-2] \}$ is independent.  

By Corollary~\ref{cor:cols}, the matrices $W^{(j)}$, when chosen as above, are in $\cC^\perp$.  In order to constitute a set of optimal repair matrices for $\cC$, they thus need the additional properties that $\{ \rowof{W}{j}{r} \mid j \in [0,t-1] \}$ has dimension $1$ for all $r \in S$, and that $\{ \rowof{W}{j}{i^*} \mid j \in [0,t-1] \}$ has dimension $t$.  The first of these is satisfied by construction: for $r \in \{r_0,\ldots, r_{t-1}\}$, there is only one element of $\{ \rowof{W}{j}{r} \mid j \in [0,t-1] \}$ that is not zero; for $r = r^*$, every element of $\{ \rowof{W}{j}{r} \mid j \in [0,t-1] \}$ is a multiple of $\bm{e_{t-1}}$.   

Thus, it remains to compute the probability that the second of these occurs, namely, that
\[\dim(\{\rowof{W}{j}{i^*} \mid j \in [0,t-1]\}) = t.\] 
For a fixed $i^*$, these vectors are independent and uniformly random, so the probability that these are full rank is precisely
$\Pi_{i=1}^{t-1} \left(1 - \frac{1}{q^i}\right)$.  

Thus for any fixed $i^*$ and set $S$, the probability that a random choice of piggybacking matrices admits a valid repair scheme is $\Pi_{i=1}^{t-1} \left(1 - \frac{1}{q^i}\right)$. Since any set of $d = k+t-1 = t+1$ nodes must repair a failed node, and there are $n$ nodes, there are $n \binom{n-1}{t+1}$ choices for $i^*, S$. Thus union bounding over all such choices, there exist piggybacking matrices admitting a valid repair scheme for all $i^*,S$ provided $n \binom{n-1}{t+1} \left( 1 - \Pi_{i=1}^{t-1} \left(1 - \frac{1}{q^i}\right)\right) < 1$, which holds if $q$ is sufficiently large.
\end{proof}

\section{``Some $d$'' Regime}\label{sec:some}
In this section, we consider the regime where,
when a node fails, there must exist only some single set of $d = k+t-1$ nodes which can repair it with bandwidth $b$. As we shall see, this is a strictly weaker requirement than requiring \emph{any} set of $d$ nodes to repair the failed node, as we did in Section~\ref{sec:any}. It is also a popular regime for instantiating the piggybacking design framework, e.g., in \cite{kumar2015}, \cite{rashmi2013}, \cite{shangguan2016}, and \cite{yang2015}.

The ``some $d$" regime seems more difficult to get a handle on than the ``any $d$" regime, although some results do immediately transfer over.  For example, since perfect bandwidth linebacking codes for $k=2$ exist in the ``any $d$'' regime, then they exist in the ``some $d$'' regime as well.  Additionally, when $t = n-k$ and $d = k + t - 1 = n-1$, then the two regimes coincide, and so all of the results of Section~\ref{sec:any} still hold if $t = n-k$. This implies perfect bandwidth piggybacking codes for $t=n-k$ are still impossible in the ``some $d$'' regime, and since the constructions of \cite{shah2012} and \cite{suh2011} give perfect bandwidth MDS codes for $t=n-k$, this implies piggybacking codes are strictly weaker than general MDS codes in this regime as well.  However, the two regimes are not equivalent; in Section~\ref{sec:genk} below we exhibit an example of a piggybacking code achieving perfect bandwidth for $k=3$, which is impossible in the ``any $d$" regime.

While lower bounds are more difficult in the ``some $d$" regime, upper bounds (achievability results) are easier, and in Section~\ref{sec:somek=2} we strengthen our results for $k=2$ for the ``any $d$" regime.  More precisely, Theorem~\ref{thm:k=2} gives a non-constructive existence proof of perfect bandwidth piggybacking codes for $k=2$.  In Section~\ref{sec:somek=2} we strengthen this for $k=2,t=2$ by giving an explicit construction.

Finally, while we are unable to prove impossibility results for piggybacking codes in general in the ``some $d$" regime, we are able to prove impossibility results for linebacking codes.  In Section~\ref{sec:somelinebacking} we show that perfect bandwidth linebacking codes do not exist for $k \geq 3$ and for $t > \frac{ n-k + 1 }{2}$.

\subsection{General $k$}\label{sec:genk}
For $k \geq 3$, establishing impossibility results for perfect bandwidth piggybacking codes in the ``some $d$" regime seems quite difficult.  In particular, our approach of Theorem~\ref{thm:anyimposs} breaks down, because while Lemma~\ref{lem:schemes} still holds---no two schemes can have the same set of non-zero rows---in the ``some $d$" regime, it is easy to choose the repair sets for the failed nodes so that this does not occur.

Below, we show that there is a very good reason that we cannot match the strength of Theorem~\ref{thm:anyimposs} in the ``some $d$" regime, namely that it is not true!
We show by example (Theorem~\ref{thm:someconstruction}) that in fact there \em is \em a perfect bandwidth piggybacking code with $n=6,k=3,t=2,$ and $q=7$ in the ``some $d$" regime.  This establishes a separation between the ``some $d$" and ``any $d$" regimes, since Theorem~\ref{thm:anyimposs} shows that there is no such scheme in the ``any $d$" regime.

The counterexample in Theorem~\ref{thm:someconstruction} is for $t=2$, and it is natural to ask if impossibility results might still hold for larger $t$.  We also provide some results which shows that this may be the case.  
More precisely, we show in Theorem~\ref{thm:t-1} and the ensuing Corollary~\ref{cor:dect} that if there is a perfect bandwidth piggybacking code in the ``some $d$" regime with $t$ substripes, then there also exists one with $t-1$ substripes, down to $t=2$.  Thus, a negative result for any fixed $t_0$ would imply a negative result for all $t \geq t_0$, and a positive result for $t_0$ would imply a positive result for all $t \leq t_0$.

\begin{theorem}\label{thm:someconstruction}
There is a piggybacking code with $n=6,k=3,t=2,q=7$ which achieves perfect bandwidth in the ``some $d$" regime.
\end{theorem}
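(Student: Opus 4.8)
The plan is to prove the theorem constructively: exhibit an explicit MDS base-code generator $F \in \F_7^{3\times 6}$ and a single piggybacking matrix $P := \Pij{0}{1} \in \F_7^{3\times 6}$ (with $t=2$ there is only one piggybacking matrix, so this code is automatically a linebacking code), and then, for each of the six possible failed nodes $i^*\in\{0,\dots,5\}$, write down a repair set $S_{i^*}$ of size $d = k+t-1 = 4$ together with a pair of repair matrices $\{W^{(0)},W^{(1)}\}\subset\cC^\perp$ satisfying Theorem~\ref{thm:matrixchar}. Since piggybacking an MDS base code yields an MDS array code, the only property of the code itself that needs checking is that every $3\times 3$ minor of $F$ is nonzero; everything else is the explicit display and verification of the six repair schemes. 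By Observation~\ref{rem:d} each scheme is supported on exactly $k+t = 5$ rows and downloads one $\F_7$-symbol from each of the four nodes of $S_{i^*}$.

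I would fix the combinatorial skeleton of the six schemes before searching for $F$ and $P$. Lemma~\ref{lem:schemes} applies here (we have $k=3$), so the six sets of five nonzero rows must be pairwise distinct; since each is a $5$-subset of $\{0,\dots,5\}$ that contains its own index $i^*$, the natural choice is to fix a fixed-point-free permutation $\sigma$ of $\{0,\dots,5\}$ and let the nonzero rows of the scheme for $i^*$ be $\{0,\dots,5\}\setminus\{\sigma(i^*)\}$, i.e.\ $S_{i^*} = \{0,\dots,5\}\setminus\{i^*,\sigma(i^*)\}$. I would then normalize each scheme to standard form (Lemma~\ref{lem:std}): pick $T_{i^*}\subset S_{i^*}$ with $|T_{i^*}| = k-1 = 2$, write $S_{i^*} = T_{i^*}\cup\{r_0,r_1\}$, and require $W^{(j)}$ to be supported exactly on the four rows $T_{i^*}\cup\{r_j,i^*\}$. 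Because $\colof{W}{j}{1}\in\cC_0^\perp$ and $\cC_0^\perp$ is an MDS code of minimum distance $k+1 = 4$, the last column of $W^{(j)}$ is the unique-up-to-scalar weight-$4$ codeword of $\cC_0^\perp$ with that support; and by Corollary~\ref{cor:cols} its first column must satisfy $F\,\colof{W}{j}{0} = -P\,\colof{W}{j}{1}$, which pins it down up to adding a multiple of that same weight-$4$ vector. So, up to the harmless rescalings and column operations of Lemmas~\ref{lem:equi} and~\ref{lem:downloadequi}, each $W^{(j)}$ depends on only a handful of free scalars.

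With supports and last columns fixed, the conditions of Theorem~\ref{thm:matrixchar} reduce to: (i) for each of the two rows $s\in T_{i^*}$, the vectors $\rowof{W}{0}{s}$ and $\rowof{W}{1}{s}$ are scalar multiples of one another; and (ii) $\rowof{W}{0}{i^*}$ and $\rowof{W}{1}{i^*}$ are linearly independent. (Rows $r_0$ and $r_1$ are automatically one-dimensional, since each is nonzero in only one of the two matrices.) Writing these out turns the whole construction into a small system of polynomial equalities and inequalities over $\F_7$ in the entries of $F$ and $P$ and the free scalars, which I would solve by a brief computer search; there is room for it to succeed, since the necessary field-size bound from Lemma~\ref{lem:q} is only $q\geq k+1 = 4$. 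The written proof would then simply present the resulting $F$, $P$, the six repair sets, and the twelve repair matrices, check $W^{(j)}\in\cC^\perp$ via Corollary~\ref{cor:cols}, and verify (i)--(ii) for each $i^*$ --- each verification being a handful of $2\times 2$ determinant computations.

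The genuine obstacle is finding the example: one must satisfy the parallelism conditions (i) on both rows of every $T_{i^*}$ while keeping all six independence conditions (ii), simultaneously over all six failed nodes, with $F$ still MDS. This is exactly the tension that Theorem~\ref{thm:anyimposs} shows cannot be resolved when \emph{every} $4$-subset must serve as a repair set; the extra freedom in the ``some $d$'' regime is precisely the freedom to choose the $\sigma$-assignment of repair sets above, and the content of the theorem is that for $n=6,k=3$ this freedom is enough. Once a valid pair $(F,P)$ is in hand, the rest of the proof is mechanical.
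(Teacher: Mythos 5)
Your proposal takes essentially the same approach as the paper: the paper's entire proof is the explicit example (base matrix $F$, piggybacking matrix $P$, six repair sets, and twelve repair matrices) displayed in Figure~\ref{fig:k=3}, whose combinatorial skeleton matches exactly what you describe (each scheme supported on a distinct $5$-subset of rows, with the omitted row given by a fixed-point-free permutation of the failed node). The only missing piece in your write-up is the explicit output of the search itself, which constitutes the whole content of the paper's proof.
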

\begin{proof}
The example is given in Figure~\ref{fig:k=3}, along with the repair matrices.
\begin{figure}[h]
\begin{gather*}
F = \begin{bmatrix}
1 & 0 & 0 & 1 & 3 & 6 \\
0 & 1 & 0 & 4 & 6 & 6 \\
0 & 0 & 1 & 3 & 6 & 3
\end{bmatrix}, \qquad P = \begin{bmatrix}
0 & 0 & 0 & 1 & 0 & 0 \\
0 & 0 & 0 & 0 & 0 & 0 \\
0 & 0 & 0 & 0 & 1 & 0
\end{bmatrix} \\
i^* = 0: \ \begin{bmatrix}
4 & 1 \\
0 & 6 \\
0 & 0 \\
0 & 3 \\
0 & 0 \\
0 & 4
\end{bmatrix}, \qquad \begin{bmatrix}
2 & 1 \\
0 & 0 \\
0 & 2 \\
0 & 5 \\
0 & 0 \\
0 & 6
\end{bmatrix} \qquad
i^* = 1: \ \begin{bmatrix}
1 & 1 \\
6 & 5 \\
0 & 0 \\
0 & 2 \\
6 & 6 \\
0 & 0
\end{bmatrix}, \qquad \begin{bmatrix}
0 & 0 \\
6 & 1 \\
0 & 4 \\
0 & 3 \\
6 & 6 \\
0 & 0
\end{bmatrix} \qquad
i^* = 2: \ \begin{bmatrix}
0 & 1 \\
0 & 0 \\
2 & 6 \\
0 & 0 \\
0 & 5 \\
0 & 2
\end{bmatrix}, \qquad \begin{bmatrix}
0 & 0 \\
0 & 1 \\
5 & 5 \\
0 & 0 \\
0 & 2 \\
0 & 6
\end{bmatrix} \\
i^* = 3: \ \begin{bmatrix}
0 & 1 \\
0 & 0 \\
0 & 0 \\
6 & 4 \\
0 & 1 \\
3 & 1
\end{bmatrix}, \qquad \begin{bmatrix}
0 & 0 \\
3 & 1 \\
0 & 0 \\
4 & 1 \\
0 & 1 \\
5 & 4
\end{bmatrix} \qquad
i^* = 4: \ \begin{bmatrix}
0 & 0 \\
6 & 1 \\
0 & 0 \\
0 & 1 \\
3 & 1 \\
3 & 4
\end{bmatrix}, \qquad \begin{bmatrix}
0 & 0 \\
0 & 0 \\
0 & 1 \\
0 & 4 \\
6 & 3 \\
1 & 6
\end{bmatrix} \qquad
i^* = 5: \ \begin{bmatrix}
2 & 1 \\
0 & 0 \\
0 & 0 \\
0 & 4 \\
2 & 1 \\
5 & 1
\end{bmatrix}, \qquad \begin{bmatrix}
0 & 0 \\
0 & 0 \\
0 & 1 \\
0 & 4 \\
6 & 3 \\
1 & 6
\end{bmatrix}
\end{gather*}
\caption{A $(6,3)$ piggybacking code and repair scheme for $t=2,q=7$ in the ``some $d$'' regime.}\label{fig:k=3}
\end{figure}
\end{proof}

Finally, we show that decreasing $t$ does not make the problem of obtaining perfect bandwidth piggybacking codes any more difficult.
\begin{theorem}\label{thm:t-1}
Let $\mathcal{C}$ be an $(n,k)$ piggybacking code over $\mathbb{F}_q$ with $t$ substripes, base code generator matrix $F$, and piggybacking matrices $\{\Pij{i}{j} \mid i \in [0,t-2], j \in [i+1,t-1]\}$. If there exists a bandwidth $b$ repair scheme for $\mathcal{C}$, then we can construct a new $(n,k)$ piggybacking code $\mathcal{C}'$ over $\mathbb{F}_q$ with $t-1$ substripes and bandwidth at most $b-1$ in the regime where only \emph{some} $d = k+t-1$ nodes must be able to repair a failed node.
\end{theorem}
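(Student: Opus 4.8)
The plan is to obtain $\mathcal{C}'$ by discarding the single un\nobreakdash-piggybacked substripe. Let $\mathcal{C}'$ be the piggybacking code over $\F_q$ with the same base code $\mathcal{C}_0$ (generator matrix $F$) and piggybacking matrices $\{\Pij{i}{j} \mid 1 \le i < j \le t-1\}$, with old substripe $\ell$ relabelled as new substripe $\ell-1$; equivalently, $\mathcal{C}'$ consists of the codewords of $\mathcal{C}$ whose message chunk is $\bm{a_0} = \bm 0$, with the (then identically zero) substripe $0$ deleted. This is an $(n,k)$ piggybacking code with $t-1$ substripes, and it is MDS because $\mathcal{C}_0$ is. The structural fact I would record first is the relationship between the dual codes: comparing the descriptions of $\mathcal{C}^\perp$ and $(\mathcal{C}')^\perp$ supplied by Lemma~\ref{lem:cols}, a matrix $X' \in \F_q^{n\times(t-1)}$ lies in $(\mathcal{C}')^\perp$ if and only if there is a column $\bm v \in \F_q^n$ with $[\bm v \mid X'] \in \mathcal{C}^\perp$ --- the $\mathcal{C}$\nobreakdash-equation for old substripe $0$ constrains only $\bm v$, and since $F$ has full row rank it is always solvable. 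In other words $(\mathcal{C}')^\perp$ is exactly the image of $\mathcal{C}^\perp$ under deletion of the first column, so deleting the first column of any dual codeword of $\mathcal{C}$ produces a dual codeword of $\mathcal{C}'$.

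Given this, I would transfer repair schemes as follows. Fix a failed node $i^*$ with repair set $S$ and repair matrices $\{W^{(0)}, \dots, W^{(t-1)}\}$ of bandwidth at most $b$ (which exist for every $i^*$ by hypothesis). By Lemma~\ref{lem:equi} we may normalize so that $\rowof{W}{j}{i^*} = \bm{e_j}$ for all $j$. Let $\widehat W^{(j)}$ denote $W^{(j)}$ with its first column deleted; by the previous paragraph each $\widehat W^{(j)} \in (\mathcal{C}')^\perp$. Now $\rowof{\widehat W}{0}{i^*} = \bm 0$ while $\rowof{\widehat W}{j}{i^*} = \bm{e_{j-1}}$ for $j \ge 1$, so $\{\widehat W^{(1)}, \dots, \widehat W^{(t-1)}\}$ is a collection of $t-1$ dual codewords of $\mathcal{C}'$ whose rows at $i^*$ span $\F_q^{t-1}$; by Theorem~\ref{thm:matrixchar} it is a valid repair scheme for $i^*$ in $\mathcal{C}'$, with some repair set $S' \subseteq S$ (which we may pad back up to the required size with trivial queries to fit the ``some $d$'' regime). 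Its bandwidth is $\sum_{i\ne i^*}\dim(\{\rowof{\widehat W}{j}{i} \mid j\in[1,t-1]\})$, and since we discarded $W^{(0)}$ and deleted a coordinate from every remaining row, this is at most the old bandwidth, hence at most $b$.

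The remaining --- and crux --- step is to improve this bound to $b-1$, i.e.\ to argue that deleting substripe $0$ together with one repair matrix must strip away at least one downloaded symbol. When $\mathcal{C}$ has perfect bandwidth this is clean: by Lemma~\ref{lem:std} we may take the original scheme in standard form, so each $W^{(j)}$ has a repair node $r_j$ that is nonzero \emph{only} in $W^{(j)}$; writing $\bm{e_0} = \sum_j \kappa_j \rowof{W}{j}{i^*}$ in the basis $\{\rowof{W}{j}{i^*}\}_j$, choose an index $j_0$ with $\kappa_{j_0}\ne 0$ (one exists since $\bm{e_0}\ne\bm 0$) and drop $W^{(j_0)}$ rather than $W^{(0)}$. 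The surviving $i^*$\nobreakdash-rows then span a hyperplane avoiding $\bm{e_0}$, so their images under deletion of coordinate $0$ still span $\F_q^{t-1}$; and dropping $W^{(j_0)}$ removes $r_{j_0}$ from the repair set outright, so the new repair set has at most $(k-1)+(t-1)=k+t-2$ nodes, each contributing a single symbol, giving bandwidth at most $b-1$. For a general bandwidth $b$ the exclusive\nobreakdash-node structure of standard form is unavailable, and I expect this to be the main obstacle: one must argue directly that it is impossible for the first columns of the (normalized) repair matrices to be redundant at \emph{every} node of $S$, for otherwise dropping $W^{(0)}$ and coordinate $0$ would leave every $\dim(\{\rowof{W}{j}{i}\mid j\})$ unchanged. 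Here the tool is Observation~\ref{rem:k+1}: the rightmost nonzero column of any dual codeword lies in $\mathcal{C}_0^\perp$, which has minimum weight $k+1\ge 3$, which rules out the only degenerate configuration (where $W^{(0)}$ is supported on substripe $0$ alone and its first column would have to be a weight\nobreakdash-one vector) and forces some queried node to lose a dimension when substripe $0$ is projected away. Carrying out this construction for every failed node $i^*$ then produces a repair scheme for $\mathcal{C}'$ of bandwidth at most $b-1$ in the ``some $d$'' regime.
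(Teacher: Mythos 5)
Your construction of $\mathcal{C}'$, your identification of $(\mathcal{C}')^\perp$ as the image of $\mathcal{C}^\perp$ under deletion of the first column, and your verification that the truncated matrices form a valid repair scheme for $\mathcal{C}'$ of bandwidth at most $b$ all match the paper's argument. Your treatment of the perfect-bandwidth case is also correct and is a pleasant variant of the paper's: instead of normalizing and discarding $W^{(0)}$, you discard the matrix $W^{(j_0)}$ whose $i^*$-row carries the $\bm{e_0}$-component, so that the exclusive node $r_{j_0}$ of the standard form leaves the repair set outright. Since Corollary~\ref{cor:dect} only invokes the theorem for perfect-bandwidth codes, that application is covered by what you wrote.

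The theorem, however, is stated for an arbitrary bandwidth-$b$ scheme, and there you have a genuine gap which you yourself flag. Your sketch---that Observation~\ref{rem:k+1} ``rules out the only degenerate configuration''---does not go through as stated: the bad event is that for every $i \in S$ one has $\rowof{W}{0}{i} \in \operatorname{span}\{\rowof{W}{j}{i} \mid j \geq 1\}$ while $\bm{e_0} \notin \operatorname{span}\{\rowof{W}{j}{i} \mid j \geq 1\}$, and this is not limited to $W^{(0)}$ being supported on substripe $0$ alone; Observation~\ref{rem:k+1} by itself does not force a dimension drop at any queried node. The missing ingredient is one more normalization via Lemma~\ref{lem:equi}. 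After arranging $\rowof{W}{0}{i^*} = \bm{e_0}$, pick any nonzero entry of $W^{(0)}$ in a row $i' \neq i^*$ and column $j'$ (Observation~\ref{rem:k+1} guarantees such an entry exists, since otherwise $W^{(0)}$ would have at most one nonzero row), and add suitable multiples of $W^{(0)}$ to each $W^{(j)}$, $j \geq 1$, so as to zero out the $(i',j')$ entry of every other repair matrix. This yields an equivalent scheme; it perturbs the rows $\rowof{W}{j}{i^*}$ only in coordinate $0$, so the projected $i^*$-rows still span $\F_q^{t-1}$; and it makes $\rowof{W}{0}{i'}$ manifestly outside the span of $\{\rowof{W}{j}{i'} \mid j \geq 1\}$, so removing $W^{(0)}$ strictly decreases the dimension at row $i'$. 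Since deleting a matrix and a coordinate can never increase the dimension at any other row, the bandwidth drops to at most $b-1$, which is exactly how the paper closes the argument.
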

Before we prove Theorem~\ref{thm:t-1}, we note that this immediately implies that if there is a perfect bandwidth piggybacking code with $t$ substripes, then there is also one with $t-1$ substripes.
\begin{corollary}\label{cor:dect}
Let $\mathcal{C}$ be a perfect bandwidth $(n,k)$ piggybacking code over $\mathbb{F}_q$ with $t$ substripes. Then there exist piggybacking codes achieving perfect bandwidth for the same $n,k,q$ and any number of substripes up to $t$.
\end{corollary}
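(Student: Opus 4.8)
The plan is to apply Theorem~\ref{thm:t-1} repeatedly, peeling off one substripe at a time. Suppose $\mathcal{C}$ is a perfect bandwidth $(n,k)$ piggybacking code over $\mathbb{F}_q$ with $t$ substripes; by definition it has bandwidth exactly $b = k+t-1$, i.e.\ there is a valid repair scheme for every failed node with bandwidth $k+t-1$. Applying Theorem~\ref{thm:t-1} to this code with $b = k+t-1$ produces a new $(n,k)$ piggybacking code $\mathcal{C}'$ over $\mathbb{F}_q$ with $t-1$ substripes, together with a repair scheme for every failed node of bandwidth at most $b-1 = k + (t-1) - 1$, in the ``some $d$'' regime.

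The observation that makes the induction close is that $k+(t-1)-1$ is \emph{exactly} the perfect bandwidth value for $t-1$ substripes: as recalled in Section~\ref{sec:related}, any $(n,k)$ MDS array code with $t-1 \le n-k$ substripes obeys the trivial lower bound $b \ge k+(t-1)-1$. Hence $\mathcal{C}'$ is itself a perfect bandwidth piggybacking code with $t-1$ substripes, equipped with a valid repair scheme for every node at this bandwidth --- which is precisely the hypothesis needed to invoke Theorem~\ref{thm:t-1} again (as long as $t-1 \ge 2$). Iterating, we obtain perfect bandwidth $(n,k)$ piggybacking codes over $\mathbb{F}_q$ with $t-1, t-2, \ldots$ substripes, down to $t'=2$ (and trivially $t'=1$, where any MDS base code already achieves bandwidth $k$). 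This yields the corollary.

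The step I would be most careful about --- though it is not a genuine obstacle --- is the bookkeeping of the regime and the exact bandwidth at each stage: Theorem~\ref{thm:t-1} only guarantees bandwidth \emph{at most} $b-1$ in the ``some $d$'' regime, so one must pair it with the matching lower bound $b \ge k+t'-1$ to conclude the bandwidth is exactly perfect at substriping level $t'$, and one must check that the repair scheme produced by Theorem~\ref{thm:t-1} (a valid scheme for every node) is strong enough to re-feed into its own hypothesis. Both are immediate, so the entire content of the corollary lives in Theorem~\ref{thm:t-1}; the corollary is just its iterated consequence.
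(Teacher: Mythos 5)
Your proposal is correct and matches the paper's intended argument exactly: the paper presents Corollary~\ref{cor:dect} as an immediate consequence of iterating Theorem~\ref{thm:t-1}, with the same implicit pairing against the trivial lower bound $b \ge k+t'-1$ to conclude that the reduced code is again perfect-bandwidth at each stage. Your extra care about the regime and the ``at most $b-1$'' bookkeeping is sound but, as you note, not a genuine obstacle.
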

Additionally, Corollary~\ref{cor:dect}, along with Lemma~\ref{lem:q} about the required alphabet size, imply that perfect bandwidth piggybacking codes must have large alphabets:
\begin{corollary}\label{cor:k+1}
Let $\mathcal{C}$ be a perfect bandwidth $(n,k)$ piggybacking code over $\mathbb{F}_q$ in the regime where any $d = k+t-1$ nodes must be able to repair a failed node. Then $q \geq k+1$.
\end{corollary}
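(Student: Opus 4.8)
The plan is to simply chain together the two earlier results, since no new argument is needed. First I would observe that a perfect bandwidth piggybacking code in the ``any $d$'' regime is in particular a piggybacking code possessing perfect bandwidth repair schemes, so Corollary~\ref{cor:dect} applies verbatim: starting from $\mathcal{C}$ with $t$ substripes, it produces, for the same $n$, $k$, and $q$, a perfect bandwidth $(n,k)$ piggybacking code $\mathcal{C}'$ with exactly $2$ substripes. (If $t = 2$ already, one takes $\mathcal{C}' = \mathcal{C}$ and skips this step; recall the standing assumption $t \geq 2$.)

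Next I would feed $\mathcal{C}'$ into Lemma~\ref{lem:q}, which states precisely that a perfect bandwidth $(n,k)$ piggybacking code over $\mathbb{F}_q$ with $t = 2$ substripes forces $q \geq k+1$. Applying it to $\mathcal{C}'$ gives the claimed bound, completing the proof.

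The only point that deserves a sentence of justification — and the closest thing here to an obstacle, though it is minor — is that $\mathcal{C}'$ legitimately satisfies the hypotheses of Lemma~\ref{lem:q} even though the reduction of Theorem~\ref{thm:t-1} underlying Corollary~\ref{cor:dect} only guarantees a repair scheme in the ``some $d$'' regime. This is harmless: the proof of Lemma~\ref{lem:q} only ever inspects a single repair scheme $\{W^{(0)}, W^{(1)}\}$ for a single failed node $i^*$, and a perfect bandwidth code in either regime supplies such a scheme, so the regime mismatch is immaterial. (As a side remark, this also shows that $q \geq k+1$ is forced in the ``some $d$'' regime, not merely in the ``any $d$'' regime as stated.)
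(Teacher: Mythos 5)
Your proposal is correct and follows exactly the paper's own argument: reduce to $t=2$ via Corollary~\ref{cor:dect} and then invoke Lemma~\ref{lem:q}. Your added remark that the regime mismatch is harmless (since Lemma~\ref{lem:q} only examines a single repair scheme) is a valid and worthwhile clarification of a point the paper's one-line proof glosses over.
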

\begin{proof}
Per Lemma~\ref{lem:q}, any piggybacking code achieving perfect bandwidth for $t=2$ must have $q \geq k+1$, and by Corollary~\ref{cor:dect}, if a piggybacking code exists for any $t$ one exists for $t=2$ with the same $n,k,q$.
\end{proof}

Finally, we prove Theorem~\ref{thm:t-1}.
\begin{proof}[Proof of Theorem~\ref{thm:t-1}]
We define the new code $\mathcal{C}'$ with the same parameters $n,k,q$ and base code as $\mathcal{C}$, but with $t-1$ substripes, such that the new piggybacking matrices are $P^{(i,j)'} = \Pij{i+1}{j+1}$. This essentially amounts to removing the $0^{th}$ substripe of both the message and each node, along with the corresponding piggybacking functions.

Now we want to show that a valid repair scheme exists for $\mathcal{C}'$ with bandwidth at most $b-1$. Consider the repair scheme of $\mathcal{C}$ for node $i^*$, $\{W^{(0)}, \dots, W^{(t-1)}\}$. Since linear operations on this set of repair matrices result in an equivalent repair scheme, per Lemma~\ref{lem:equi}, we can assume without loss of generality that $\rowof{W}{0}{i^*} = [1 \ 0 \cdots 0]$. 
Now, there must exist some entry of $W^{(0)}$, $\entryof{W}{0}{i_{\text{fixed}}}{j_\text{fixed}} \neq 0$ such that $i_\text{fixed} \neq i^*$; otherwise only row $i^*$ would be non-zero, which would contradict Observation~\ref{rem:k+1}. Thus we can use this entry to zero out the $i_\text{fixed},j_\text{fixed}^{th}$ entry of every other repair matrix, by adding multiples of $W^{(0)}$.

Now we obtain a repair scheme of $\mathcal{C}'$ for $i^*$ from $\{W^{(0)}, \dots, W^{(t-1)}\}$ by deleting the $0^{th}$ column of each repair matrix and removing $W^{(0)}$ to obtain $\{W^{(1)'}, \dots, W^{(t-1)'}\}$. Again, this essentially amounts to deleting the contribution of the $0^{th}$ substripe of each node to the repair process, and deleting the repair matrix which solely recovers the $0^{th}$ substripe. Note that each new repair matrix $W^{(j)'}$ is in $\mathcal{C}'^\perp$: Per Lemma~\ref{lem:cols}, the original matrix $W^{(j)} \in \mathcal{C}^\perp$ implies
\[ F \colof{W}{j}{i} + \Pij{i}{i+1} \colof{W}{j}{i+1} + \cdots + \Pij{i}{t-1} \colof{W}{j}{t-1} = \bm{0}^T \ \forall i \]
Thus for $W^{(j)'}$ we have
\[F w^{(j)'}_{\bullet, i} + P^{(i,i+1)'} w^{(j)'}_{\bullet, i+1} + \cdots + P^{(i,t-2)'} w^{(j)'}_{\bullet, t-2} = F w^{(j)}_{\bullet, i+1} + P^{(i+1,i+2)} w^{(j)}_{\bullet, i+2} + \cdots + P^{(i+1,t-1)} w^{(j)}_{\bullet, t-1} = \bm{0}^T \ \forall i\]
so $W^{(j)'} \in \mathcal{C}'^\perp$, as desired.

Furthermore, $\dim(\{w^{(j)'}_{i^*,\bullet} \mid j \in [1,t-1]\}) = t-1$. We know $\dim(\{\rowof{W}{j}{i^*} \mid j \in [0,t-1]\}) = t$, and the deleted row $\rowof{W}{0}{i^*}$ was $[1 \ 0 \cdots 0]$, so it contributes nothing to the dimension of the row set obtained by deleting the $0^{th}$ entry of each row, which must have dimension $t-1$ for the original space to have full dimension.

Finally, the bandwidth of the new repair scheme $\{W^{(1)'}, \dots, W^{(t-1)'}\}$ is at most $b-1$. Clearly, we have $\dim ( \{ w^{(j)'}_i \mid j \in [1,t-1]\} )  \leq \dim ( \{ w^{(j)}_i \mid j \in [0,t-1]\} ) \ \forall i$, since deleting one row and one coordinate cannot increase the dimension of the row set. Additionally, the dimension of row $i_\text{fixed}$ must have decreased by at least one, because the $i_\text{fixed},j_\text{fixed}^{th}$ entry was zeroed out in every repair matrix but $W^{(0)}$. Thus deleting row $w^{(0)}_{i_\text{fixed}}$ from the row set $\{ w^{(j)}_{i_\text{fixed}} \mid j \in [0,t-1]\}$ decreases the dimension. Since the dimension of at least one row set decreased, and no dimension increased, the total bandwidth decreased by at least one.

Thus for every $i^*$, we can obtain a valid repair scheme of $\mathcal{C}'$ with bandwidth $b-1$ from the repair scheme of $\mathcal{C}$. Therefore given a piggybacking code $\mathcal{C}$ with $t$ substripes and bandwidth $b$, we can construct a piggybacking code with the same $n,k,q$, and $F$; $t-1$ substripes; and bandwidth at most $b-1$.
\end{proof}

\subsection{$k = 2$}\label{sec:somek=2}
Theorem~\ref{thm:k=2}, which guarantees the existence of optimal repair schemes for $k=2$ in the ``any $d$" regime, also applies in the ``some $d$" regime.\footnote{
In fact, since only \emph{some} set of $d$ nodes must be able to repair a given failed node, we need only union bound over choices of $i^*$ (not sets of $d$ nodes), and thus $q$ must only satisfy $n \left(1 - \Pi_{i=1}^{t-1} \left(1 - \frac{1}{q^i}\right)\right) < 1$.
}
  However, the requirement that only some repair set of size $d$ exist is weaker, and we can improve on Theorem~\ref{thm:k=2} in the ``some $d$" regime, giving an explicit construction of an optimal piggybacking code for $k=2,t=2$.

\begin{theorem}\label{thm:somek=2}
Let $\cC_0 \subset \F_q^n$ be an MDS code with dimension $k=2$ and generator matrix $F \in \F_q^{k \times n}$, let $t=2$, and suppose that $n \geq 4$, $q \geq 3$.  
Then there is an explicit construction of a piggybacking code with base code $\cC_0$ and with $t$ substripes, which achieves perfect bandwidth $b = k + t - 1 = 3$, in the ``some $d$" regime.
\end{theorem}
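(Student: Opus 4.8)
The plan is to give one explicit piggybacking matrix $P = P^{(0,1)}$ that works for an arbitrary systematic form of $F$, and to verify it against Theorem~\ref{thm:matrixchar}. First I would assume without loss of generality (by row-reducing $F$, which does not change $\cC_0$) that $F = [\,I_2 \mid F'\,]$, so nodes $0$ and $1$ are systematic; since $\cC_0$ is MDS, every two columns of $F$ are independent, equivalently the columns $F_{\bullet,0},\dots,F_{\bullet,n-1}\in\F_q^2$ have pairwise distinct ``slopes'' $F_{0,j}/F_{1,j}\in\F_q\cup\{\infty\}$, with column $0$ the unique one of infinite slope. I then take $P$ to have a single nonzero entry, $P_{1,0}=1$ (concretely, the codeword stores $a_{0,1}+a_{1,0}$ on node $0$ of substripe $1$ and is otherwise two independent copies of a $\cC_0$-codeword). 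It remains, for each failed node $i^*$, to produce a repair set $S$ of size $d=k+t-1=3$ and repair matrices $W^{(0)},W^{(1)}\in\cC^\perp$ as in Theorem~\ref{thm:matrixchar}.

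Next I would parametrize candidate schemes. Fixing $i^*$ and a repair set $S=\{r^*,r_0,r_1\}$, a standard-form scheme (Lemma~\ref{lem:std}) forces $\colof{W}{0}{1}$ and $\colof{W}{1}{1}$ to be, up to scaling, the (unique, fully supported) weight-$3$ codewords of $\cC_0^\perp$ on $\{i^*,r^*,r_0\}$ and $\{i^*,r^*,r_1\}$ respectively, which exist because $\cC_0$ is MDS. By Corollary~\ref{cor:cols}, $\colof{W}{j}{0}$ is then any solution of $F\colof{W}{j}{0}=-P\colof{W}{j}{1}$ supported on the same three rows, determined up to adding a multiple $\mu_j\colof{W}{j}{1}$. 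After fixing scalings via Lemma~\ref{lem:equi}, the scheme is determined by $(\mu_0,\mu_1)$; the requirement that row $r^*$ have dimension $1$ is one linear equation that pins down $\mu_0-\mu_1$, and the requirement that row $i^*$ have dimension $2$ forbids exactly one value of $\mu_0-\mu_1$. Hence such a scheme exists precisely when $\Phi(i^*,r^*;r_1)\neq\Phi(i^*,r^*;r_0)$, where $\Phi(i^*,r^*;r)$ is an explicit scalar depending only on $F,P$ and the three nodes. (Rows $r_0,r_1$ automatically have dimension $\leq 1$, being nonzero in only one matrix each.)

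The computational heart is evaluating $\Phi$. Writing $[ab]=\det(F_{\bullet,a},F_{\bullet,b})$ for the Pl\"ucker coordinates, Cramer's rule gives that the weight-$3$ dual codeword on $\{i^*,r^*,r\}$ has coordinates (at $i^*,r^*,r$) proportional to $([r^*r],\,[r\,i^*],\,[i^*r^*])$, and $\Phi$ becomes a rational expression in these together with the entries of $P$. Because our $P$ has a single nonzero entry, almost every term vanishes; I expect the expression to collapse to: with $r^*=0$, $\Phi(i^*,0;r)=F_{0,r}/F_{1,r}$ for every $i^*\neq 0$ and every $r\notin\{0,i^*\}$; and $\Phi(0,1;r)=-F_{0,r}/F_{1,r}$ for every $r\notin\{0,1\}$ --- in both cases $\pm$ the slope of column $r$. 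The main obstacle is exactly this: handling the exceptional columns $0$ and $1$ of $F$ separately and checking that no denominators vanish. Once the right sparse $P$ is identified the computation is mechanical but fiddly, and identifying it (a ``twisted'' single entry, so that $P_{\bullet,0}$ equals $F_{\bullet,1}$ rather than $F_{\bullet,0}$) is what makes the collapse happen rather than $\Phi$ degenerating to a constant.

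Finally I would conclude. For $i^*\neq 0$, set $r^*=0$: since $n\geq 4$ there are $n-2\geq 2$ nodes other than $0$ and $i^*$, all with finite and pairwise distinct slopes, so pick distinct $r_0,r_1$ among them with $\Phi(i^*,0;r_0)\neq\Phi(i^*,0;r_1)$ and take $S=\{0,r_0,r_1\}$. For $i^*=0$, set $r^*=1$ and similarly pick distinct $r_0,r_1$ among the $n-2\geq 2$ nodes other than $0$ and $1$ (all slopes finite and distinct) with $\Phi(0,1;r_0)\neq\Phi(0,1;r_1)$, and take $S=\{1,r_0,r_1\}$. In each case, solving the row-$r^*$ equation for $\mu_0-\mu_1$ yields valid repair matrices of total bandwidth $3$, so the piggybacking code with base code $\cC_0$ and piggybacking matrix $P$ achieves perfect bandwidth $b=k+t-1=3$ in the ``some $d$'' regime. (Here $q\geq 3$ is automatic, since an $[n,2]$ MDS code with $n\geq 4$ already forces $q\geq n-1\geq 3$; the point is that this explicit small-field construction essentially matches the lower bound $q\geq k+1$ of Lemma~\ref{lem:q}, improving on $q\geq 2(n-k)$ from~\cite{suh2011}.)
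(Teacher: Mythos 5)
Your construction is correct and the code itself is the same as the paper's: a piggybacking matrix with a single nonzero entry, placed so that the nonzero column of $P$ is not parallel to the corresponding column of $F$ (the paper sets $p_{0,i}=1$ for a column $i$ with $\entryofF{1}{i}\neq 0$; your $P_{1,0}=1$ with $F$ systematic is the same choice up to swapping the roles of the two message coordinates). The repair sets also agree in the generic case: both proofs put the special column into the repair set whenever the failed node is not that column. Where you genuinely diverge is the verification, especially for the exceptional failed node. The paper handles $i^*\neq i$ by writing the two repair matrices explicitly, and handles $i^*=i$ by a pigeonhole argument: as the free entry $\entryof{W}{0}{i_0}{1}$ ranges over its $q-1$ nonzero values, the ratio $\entryof{W}{1}{i_1}{1}/\entryof{W}{0}{i_1}{1}$ takes every value other than $\kappa_0$ exactly once, so some (unspecified) choice hits the required $\kappa_1$. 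You instead derive a single necessary-and-sufficient criterion --- the scheme for $(i^*,r^*,r_0,r_1)$ exists iff $\Phi(i^*,r^*;r_0)\neq\Phi(i^*,r^*;r_1)$, where $\Phi(i^*,r^*;r)=u_{r^*}/x_{r^*}-u_{i^*}/x_{i^*}$ for the weight-$3$ dual codeword $\bm{x}$ and a particular solution $\bm{u}$ of $F\bm{u}=-P\bm{x}$ supported on $\{i^*,r^*,r\}$ --- and evaluate $\Phi$ in closed form as $\pm F_{0,r}/F_{1,r}$, whose injectivity in $r$ is exactly the MDS property. I checked that the row-$r^*$ and row-$i^*$ conditions do separate into a difference of $\Phi$'s (since each reads $\mu_0-\mu_1=u^{(1)}_\cdot/x^{(1)}_\cdot-u^{(0)}_\cdot/x^{(0)}_\cdot$), that $\Phi$ is well defined, and that both closed forms and the nonvanishing of the denominators are correct. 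Your route buys a uniform treatment of both cases and a fully explicit scheme even for $i^*=0$, at the mild cost of fixing $r^*=1$ there, whereas the paper's argument for $i^*=i$ works for an arbitrary repair set but is only existential in its free parameter. Both arguments are complete.
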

\begin{proof}
For $t=2$, there is only one piggybacking matrix, $\Pij{0}{1} = P$. We choose $P$ to be all zero, except for some column $i$ such that $\entryofF{1}{i} \neq 0$ (which much exist for $\cC_0$ to be MDS and $F$ therefore rank $k=2$); let $p_{0,i} = 1$.

First, we exhibit the repair scheme for any $i^* \neq i$. Pick the repair set $S = \{i, i_0, i_1\}$ for any rows $i_0, i_1 \neq i, i^*$. The two repair matrices will be
\[W^{(0)} = \begin{blockarray}{ccc}
\begin{block}{c[cc]}
i^*	& 1	& 0	\\
i & -\entryofF{1}{i^*} / \entryofF{1}{i} & -\entryof{W}{0}{i}{0} \entryofF{0}{i} - \entryofF{0}{i^*} \\
i_0 & 0 & \\
i_1 & 0 & \\
\end{block}
\end{blockarray}, \qquad
W^{(1)} = \begin{blockarray}{ccc}
\begin{block}{c[cc]}
i^*	& 0	& 1	\\
i & 0 & 0 \\
i_0 & 0 & \\
i_1 & 0 & \\
\end{block}
\end{blockarray}
\]
where we omit the zero rows and the missing entries are fully determined by the requirement that the last columns are in $\mathcal{C}_0^\perp$. Clearly this satisfies row $i^*$ having full dimension and has bandwidth $k+t-1 = 3$. We allow the missing entries to be chosen such that the last columns are in $\mathcal{C}_0^\perp$, and note that
\[F \colof{W}{0}{0} + P \colof{W}{0}{1} = \left[\entryofF{0}{i^*} - \entryofF{0}{i} \frac{\entryofF{1}{i^*}}{\entryofF{1}{i}}, \ \entryofF{1}{i^*} - \entryofF{1}{i} \frac{\entryofF{1}{i^*}}{\entryofF{1}{i}}\right]^T + \left[\frac{\entryofF{1}{i^*}}{\entryofF{1}{i}} \entryofF{0}{i} - \entryofF{0}{i^*}, \ 0\right]^T = \bm{0}^T\]
and
\[F \colof{W}{1}{0} + P \colof{W}{1}{1} = [0,0]^T + [0, 0]^T = \bm{0}^T\]
so the $W$'s are in $\mathcal{C}^\perp$. Thus we have exhibited a valid repair scheme for any $i^* \neq i$.

When $i^* = i$, the argument is slightly more complicated. We pick any repair set $S = \{i_0, i_1, i_2\}$ not including $i^*$. The repair matrices will be
\[W^{(0)} = \begin{blockarray}{ccc}
\begin{block}{c[cc]}
i^*=i	& 1	& 0	\\
i_0 & \entryof{W}{0}{i_0}{0} & \entryof{W}{0}{i_0}{1} \\
i_1 & \entryof{W}{0}{i_1}{0} & \\
i_2 & 0 & \\
\end{block}
\end{blockarray}, \qquad
W^{(1)} = \begin{blockarray}{ccc}
\begin{block}{c[cc]}
i^*=i & 0 & 1	\\
i_0 & \kappa_0 \entryof{W}{0}{i_0}{0} & \kappa_0 \entryof{W}{0}{i_0}{1} \\
i_1 & \kappa_1 \entryof{W}{0}{i_1}{0} & \\
i_2 & 0 & \\
\end{block}
\end{blockarray}
\]
where $\kappa_0, \kappa_1 \in \mathbb{F}_q$, we omit the zero rows, and the non-constant entries are yet to be determined. To have $W^{(0)} \in \mathcal{C}^\perp$, we need
\begin{equation}\label{eq:W0}
F \colof{W}{0}{0} + P \colof{W}{0}{1} = \begin{bmatrix}
\entryofF{0}{i} & \entryofF{0}{i_0} & \entryofF{0}{i_1} \\
\entryofF{1}{i} & \entryofF{1}{i_0} & \entryofF{1}{i_1}
\end{bmatrix} [1, \entryof{W}{0}{i_0}{0}, \entryof{W}{0}{i_1}{0}]^T = \bm{0}^T
\end{equation}
Since $\cC_0$ is MDS, this uniquely determines $\entryof{W}{0}{i_0}{0}, \entryof{W}{0}{i_1}{0} \neq 0$. Furthermore, to have $W^{(1)} \in \mathcal{C}^\perp$, we need
\begin{equation}\label{eq:W1}
F \colof{W}{1}{0} + P \colof{W}{1}{1} = \begin{bmatrix}
\entryofF{0}{i_0} & \entryofF{0}{i_1} \\
\entryofF{1}{i_0} & \entryofF{1}{i_1}
\end{bmatrix} [\kappa_0 \entryof{W}{0}{i_0}{0}, \kappa_1 \entryof{W}{0}{i_1}{0}]^T + [1,0]^T = \bm{0}^T
\end{equation}
Again, since $\cC_0$ is MDS and we know $\entryof{W}{0}{i_0}{0}, \entryof{W}{0}{i_1}{0} \neq 0$, this uniquely determines $\kappa_0, \kappa_1$. Note also that $\kappa_0 \neq \kappa_1$, since if they were equal that would imply either $\entryofF{1}{i_0} \entryof{W}{0}{i_0}{0} + \entryofF{1}{i_1} \entryof{W}{0}{i_1}{0} = 0$ or $\kappa_0 = \kappa_1 = 0$. The former would violate equation~\ref{eq:W0} since we specifically chose $\entryofF{1}{i} \neq 0$, and the latter clearly violates equation~\ref{eq:W1}.

Our goal now is to show that we can choose $W^{(0)}, W^{(1)}$ to simultaneously satisfy the known values above while also having their last columns in $\mathcal{C}_0^\perp$; this will ensure the repair scheme has row $i^* = i$ full rank, bandwidth $k+t-1=3$, and repair matrices in $\mathcal{C}^\perp$. We will show this by considering the $q-1$ possible non-zero choices for $\entryof{W}{0}{i_0}{1}$. We fix the $0^{th}$ columns and $i_0^{th}$ rows (since we know their desired values), and determine the last two unknowns in each last column such that the last columns are in $\mathcal{C}_0^\perp$. For some choice of $\entryof{W}{0}{i_0}{1}$, this must give the desired value of $\entryof{W}{1}{i_1}{1} / \entryof{W}{0}{i_1}{1}$, which needs to equal $\kappa_1$ for the repair scheme to have the desired bandwidth. This is because each non-zero choice of $\entryof{W}{0}{i_0}{1}$ must give a distinct value not equal to $\kappa_0$ for $\entryof{W}{1}{i_1}{1} / \entryof{W}{0}{i_1}{1}$. (Note that this is never division by zero because the last columns are in $\mathcal{C}_0^\perp$ and thus have a minimum weight of $k+1 = 3$, so if $\entryof{W}{0}{i_0}{1} \neq 0$, $\entryof{W}{0}{i_1}{1} \neq 0$ as well.) The resulting values of $\entryof{W}{1}{i_1}{1} / \entryof{W}{0}{i_1}{1}$ never equal $\kappa_0$, because if one did, $\colof{W}{1}{1} - \kappa_0 \colof{W}{0}{1}$ would still be in $\mathcal{C}_0^\perp$ but would have weight 1 or 2, which is less than the minimum weight. Also, if any two resulting values of $\entryof{W}{1}{i_1}{1} / \entryof{W}{0}{i_1}{1}$ were both equal to some $\kappa$, the difference between the two resulting values of $\colof{W}{1}{1} - \kappa \colof{W}{0}{1}$ for the two choices of $\entryof{W}{0}{i_0}{1}$ would be zero in rows $i, i_1$ but not in $i_0$, so it again violates the minimum weight for vectors in $\mathcal{C}_0^\perp$. Thus each possible value (not including $\kappa_0$) for $\entryof{W}{1}{i_1}{1} / \entryof{W}{0}{i_1}{1}$ will occur exactly once over choices of $\entryof{W}{0}{i_0}{1}$, so some such choice will give the desired value $\kappa_1$ and yield a valid repair scheme.
\end{proof}

\subsection{Linebacking codes}\label{sec:somelinebacking}
While we do not have general impossibility results for piggybacking codes in the ``some $d$" regime, we can prove impossibility results for \em linebacking \em codes in this regime.  We note that multiple constructions of piggybacking codes in the literature are in fact linebacking codes---including the design of \cite{yang2015} and any piggybacking codes with $t=2$ substripes, such as two constructions of \cite{rashmi2013}---and so such lower bounds provide useful design insights.

Our main theorem in this section is the following.
\begin{theorem}\label{thm:linebacking}
No $(n, k)$ linebacking code with $k\geq 3$ and $t > \frac{n-k+1}{2}$ substripes can achieve perfect bandwidth.
\end{theorem}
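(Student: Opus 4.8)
The plan is to argue by contradiction and to reduce to a two-scheme configuration in the spirit of Lemma~\ref{lem:schemes}, compensating for the fact that in the ``some $d$'' regime two repair schemes need not share their non-zero rows by running a dimension count inside $\mathcal{C}_0^\perp$. First I would rewrite the hypothesis $t > \frac{n-k+1}{2}$ as $n-k \le 2t-2$, i.e. $\dim \mathcal{C}_0^\perp = n-k < 2t$. (By Theorem~\ref{thm:t-1}, whose construction sends a linebacking code to a linebacking code, it even suffices to treat the smallest such $t$, but we will not need this.) Suppose a perfect-bandwidth linebacking code $\mathcal{C}$ exists; by Observation~\ref{rem:d} every repair set has size $d = k+t-1$. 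Form the ``repair digraph'' on vertex set $[n]$ with an arc $i \to j$ whenever $j$ lies in the chosen repair set of $i$: it has $n(k+t-1)$ arcs, which exceeds $\binom{n}{2}$ because $n \le k+2t-2 < 2k+2t-1$, so the digraph contains a directed $2$-cycle --- two distinct failed nodes $i^*_1, i^*_2$ each lying in the other's repair set. Putting both schemes in standard form (Lemma~\ref{lem:std}) gives $\{W^{(j)}\}$ repairing $i^*_1$ and $\{V^{(j)}\}$ repairing $i^*_2$, with active row-sets $A_1, A_2$ of size $k+t$, each containing $\{i^*_1, i^*_2\}$. Since $|A_1 \cap A_2| \ge 2(k+t) - n \ge k+2$, we may choose the $k-1$ shared rows $T$ of the two standard forms inside $S_1 \cap S_2$ and renumber so that $W^{(0)}$ and $V^{(0)}$ have the identical support $T \cup \{i^*_1, i^*_2\}$, leaving at least one further common ``spare'' row $\rho = r^1_1 = r^2_1$.

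For the core of the argument, I would first note that the last columns $\{w^{(j)}_{\bullet,t-1}\}_j$ are $t$ linearly independent elements of $\mathcal{C}_0^\perp$ --- in standard form each $W^{(j)}$ is the only repair matrix nonzero on its private row $r^1_j$, and its last column has weight $k+1$ --- so they span a $t$-dimensional subspace $U_1 \le \mathcal{C}_0^\perp$; likewise the $V$-scheme gives $U_2$. Because $\dim \mathcal{C}_0^\perp < 2t$, we get $\dim(U_1 \cap U_2) \ge 2$. For $x \in U_1 \cap U_2$, let $X := \sum_j a_j W^{(j)}$ and $Y := \sum_j b_j V^{(j)}$ be the unique combinations with last column $x$; both lie in $\mathcal{C}^\perp$, $X$ is supported on $A_1$ and $Y$ on $A_2$, and $X - Y \in \mathcal{C}^\perp$ has zero last column, so by Corollary~\ref{cor:cols} every column of $X-Y$ lies in $\mathcal{C}_0^\perp$. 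Exploiting that the subcode of $\mathcal{C}_0^\perp$ supported on any fixed $(k+1)$-set is one-dimensional (so $w^{(0)}_{\bullet,t-1}$ and $v^{(0)}_{\bullet,t-1}$ agree after rescaling $V^{(0)}$ by Lemma~\ref{lem:equi}, and more generally each non-last column is determined by the last column up to a multiple of it via the $P^{(i)}$'s), I would express the columns of $Y$ through those of $X$ plus multiples of $x$ and of the $w^{(j)}_{\bullet,t-1}$, and then use the dimension-$1$ conditions on the $|T| = k-1 \ge 2$ shared rows and on $\rho$ to force all of these multiples to coincide --- the analogue of the ``$\lambda^{(j)} = \lambda$'' step in Lemma~\ref{lem:schemes}. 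Carrying this out for two linearly independent $x \in U_1 \cap U_2$ should then collapse row $i^*_1$ of the $W$-scheme into a subspace of dimension $<t$, contradicting that $\{w^{(j)}_{i^*_1,\bullet}\}_j$ spans $\F_q^t$.

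The step I expect to be the main obstacle is this last one. Lemma~\ref{lem:schemes} relied essentially on the two schemes having \emph{identical} supports, whereas here only the pair $W^{(0)}, V^{(0)}$ and the single spare row $\rho$ are exactly aligned. The work is in verifying that the (at least) two-dimensional intersection $U_1 \cap U_2$ provides enough independent linear relations --- across the $k-1$ rows of $T$, the row $\rho$, and, by the symmetry between $i^*_1$ and $i^*_2$, possibly both $i^*_1$ and $i^*_2$ --- to eliminate all the free ``slack'' parameters in the slaved non-last columns of the linebacking repair matrices, and to conclude the row-$i^*_1$ collapse. The hypothesis $k \ge 3$ enters precisely because it forces $|T| \ge 2$; for $k = 2$ the slack parameters would be underdetermined and no contradiction could arise, consistent with the existence of perfect-bandwidth linebacking codes at $k=2$ (Theorem~\ref{thm:k=2}).
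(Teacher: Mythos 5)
Your combinatorial reduction is exactly the paper's: the repair digraph has $n(k+t-1) > \binom{n}{2}$ arcs and hence a directed $2$-cycle, and the two repair sets of such a pair overlap in at least $k$ helpers, so the active row sets overlap in at least $k+2$ rows. The paper packages the remaining contradiction as a standalone lemma (Lemma~\ref{lem:lineschemes}): no linebacking code with $k \geq 3$ admits two perfect-bandwidth schemes for distinct nodes that repair each other and whose repair sets overlap in at least $k$ nodes. Where you diverge is in how that lemma is proved, and this is where your proposal has a genuine gap. The paper never needs the subspaces $U_1, U_2$ or the bound $\dim(U_1 \cap U_2) \geq 2$. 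Instead, when invoking Lemma~\ref{lem:std} it places $i^*_2$ inside the shared row set $T_1$ of the first scheme and $i^*_1$ inside $T_2$ of the second (possible precisely because each node lies in the other's repair set), so that \emph{every} $W^{(j)}$ is non-zero on $i^*_2$ and every $V^{(j)}$ on $i^*_1$; the two remaining overlap rows then align \emph{two} pairs, $(W^{(0)}, V^{(0)})$ and $(W^{(1)}, V^{(1)})$, with identical $(k{+}1)$-row supports. After scaling so the last columns agree and using the linebacking form of download-equivalence (Lemma~\ref{lem:downloadequi}) to pin row $r_{\text{shared}}$ to a multiple of $\bm{e_{t-1}}$, each non-last column of $W^{(0)}$ and of $V^{(0)}$ solves the same system $F \colof{X}{}{j} + \Pj{j}\colof{W}{0}{t-1} = \bm{0}^T$ with exactly $k$ unknowns on a fixed $k$-subset of rows; MDS-ness makes the solution unique, so $W^{(0)} = V^{(0)}$, the rank-one condition on row $r_{\text{shared}}$ then forces $W^{(1)} = V^{(1)}$, and row $i^*_1$ cannot simultaneously span $\F_q^t$ across the $W^{(j)}$ and have dimension $1$ across the $V^{(j)}$.

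Your configuration aligns only the single pair $(W^{(0)}, V^{(0)})$, because you take $i^*_2$ to be the \emph{private} row $r^1_0$ rather than a member of $T_1$; consequently $W^{(1)}$ and $V^{(1)}$ have supports differing in one row ($i^*_1$ versus $i^*_2$), and you must fall back on the $U_1 \cap U_2$ argument to recover the missing relations. You correctly identify this as the main obstacle, and it is a real one, not a routine verification: the combinations $X = \sum_j a_j W^{(j)}$ and $Y = \sum_j b_j V^{(j)}$ with common last column $x$ have different supports, $X - Y$ being column-wise in $\mathcal{C}_0^\perp$ on a support of size up to $k+2t-2 \geq k+1$ yields no weight contradiction, and no mechanism is exhibited that actually forces the slack multiples to coincide or collapses row $i^*_1$. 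As written, the proof is incomplete at its central step. The repair is not to push harder on $\dim(U_1\cap U_2)$ but to change the standard-form bookkeeping as above, after which the intersection bound is unnecessary and the hypothesis $t > \frac{n-k+1}{2}$ is used only in the counting stage you already have.
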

We remark that the constraint on $t$ is tight in the sense that Theorem~\ref{thm:someconstruction} gives an example of a perfect bandwidth linebacking code with $t = \frac{n-k+1}{2}$.  

In addition, we prove a stronger impossibility result for the special case of linebacking codes where every repair scheme (for systematic nodes) uses all remaining systematic nodes in its repair set. 
This impossibility result is notable because many piggybacking and linebacking constructions do use all the remaining systematic nodes  when repairing a failed node \cite{rashmi2013} \cite{yang2015}, and this result suggests that such an approach may limit the achievable repair bandwidth.
Our main result here is the following, which is a corollary of Theorem~\ref{thm:linesyst} that we will state and prove below.
\begin{restatable}{corollary}{cor}
\label{cor:linesyst}
No $(n,k)$ linebacking code with $k \geq 3$, $t > \frac{n-k-1}{\sqrt{k}}$ substripes, and which uses all remaining systematic nodes to repair a failed node can achieve perfect bandwidth.
\end{restatable}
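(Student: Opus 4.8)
The plan is to prove the corollary directly. It is stated as a consequence of Theorem~\ref{thm:linesyst}, but since that theorem has not yet been introduced I give a self-contained argument, which is essentially a proof of Theorem~\ref{thm:linesyst} specialised to the regime where \emph{all} remaining systematic nodes are used. Suppose toward a contradiction that $\mathcal{C}$ is a perfect-bandwidth $(n,k)$ linebacking code with $k \ge 3$ in which every failed node is repaired using all of the remaining systematic nodes. Normalise so that nodes $0,\dots,k-1$ are systematic, so $F = [\,I_k \mid F'\,]$. For a systematic node $i^*$, the repair set is $\bigl(\{0,\dots,k-1\}\setminus\{i^*\}\bigr)\cup R_{i^*}$ with $R_{i^*}$ a set of parity nodes; by Observation~\ref{rem:d}, $|R_{i^*}| = t$. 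Apply Lemma~\ref{lem:std} to put the repair of $i^*$ in standard form, choosing the $k-1$ shared rows to be $\{0,\dots,k-1\}\setminus\{i^*\}$ (permitted since they may be taken to be any $k-1$ rows of the repair set). Then each repair matrix $W^{(j)}$ is supported exactly on the $k$ systematic rows together with one parity row $r^{i^*}_j$, with $\{r^{i^*}_0,\dots,r^{i^*}_{t-1}\}=R_{i^*}$, and its last column has weight exactly $k+1$.

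The next step is structural. By Corollary~\ref{cor:cols} the transpose of the last column of $W^{(j)}$ lies in $\mathcal{C}_0^\perp$, an MDS code of minimum distance $k+1$; being of weight $k+1$ with support $\{0,\dots,k-1\}\cup\{r^{i^*}_j\}$, it is the unique-up-to-scalar minimum-weight dual codeword on that support. Fix once and for all such a dual codeword $\bm{\chi}_p$ for every parity node $p$ (each has all systematic entries nonzero), and rescale (Lemma~\ref{lem:equi}) so that the last column of $W^{(j)}$ equals $\bm{\chi}_{r^{i^*}_j}$. Then, again by Corollary~\ref{cor:cols}, for $i<t-1$ the $i$-th column of $W^{(j)}$ is a solution of $F\,x = -P^{(i)}\bm{\chi}_{r^{i^*}_j}$ supported on $\{0,\dots,k-1\}\cup\{r^{i^*}_j\}$, unique up to adding a multiple of $\bm{\chi}_{r^{i^*}_j}$; fixing a canonical particular solution leaves one free scalar per pair $(i,j)$. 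The only remaining requirements for a valid perfect-bandwidth repair (Theorem~\ref{thm:matrixchar} with Observation~\ref{rem:d}) are that each shared systematic row have dimension $1$ and that row $i^*$ have dimension $t$. Carrying these out, I expect them to turn into conditions on $R_{i^*}$ alone: there are ratio functions $p\mapsto\Phi^{(i)}_{s,s'}(p)$ built from $P^{(i)}$ and the entries of $\bm{\chi}_p$ such that the dimension-$1$ conditions say exactly that $\Phi^{(i)}_{s,s'}$ (for systematic $s,s'\ne i^*$, $i<t-1$) is constant along $R_{i^*}$, while the dimension-$t$ condition on row $i^*$ is a full-rank (maximal-variation) condition on the companion functions $\Phi^{(i)}_{i^*,s}$ restricted to $R_{i^*}$, with the free scalars absorbed into these conditions.

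I would then run the impossibility argument over all $k$ systematic nodes simultaneously. Each $R_{i^*}$ must be a $t$-subset of the $n-k$ parity nodes contained in a common level set of the $(t-1)\binom{k-1}{2}$ functions $\Phi^{(i)}_{s,s'}$ with $i^*\notin\{s,s'\}$, and the code designer's only freedom is the choice of the $P^{(i)}$ (hence of the $\Phi$'s), which must make all $k$ of these requirements simultaneously satisfiable without collapsing so much that a variation condition fails. Combining this with the observation already used in Lemma~\ref{lem:schemes}—that no two repairs may behave identically on a shared row, now extended to partially overlapping supports via the canonical $\bm{\chi}_p$'s—I expect to bound the pairwise overlaps $|R_{i^*}\cap R_{j^*}|$ and then finish with a double-counting (Fisher-type) argument over the pairs of systematic nodes and the parity nodes. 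The target is an inequality of the form $\sqrt{k}\,t \le n-k-1$, equivalently $k t^2 \le (n-k-1)^2$, whose negation is precisely the hypothesis $t>\frac{n-k-1}{\sqrt{k}}$; this contradiction proves the corollary.

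The main obstacle is the passage from the ``constant-along / maximal-variation'' description of valid repairs to the clean combinatorial bound: the argument must be robust against \emph{every} choice of piggybacking matrices, not just generic ones, and it must produce exactly the factor $\sqrt{k}$ and the $-1$ in the numerator rather than something weaker such as the naive Fisher bound $k\binom{t}{2}\le\binom{n-k}{2}$, which, as a quick check shows, is not by itself in conflict with the hypothesis. A secondary difficulty is the standard-form bookkeeping when two systematic repairs share only part of their parity sets, since Lemma~\ref{lem:schemes} as stated covers only the case of an identical set of non-zero rows; handling the partial-overlap case is what makes the canonical normalisation of the $\bm{\chi}_p$'s and of the particular column solutions essential.
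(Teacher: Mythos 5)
Your high-level plan has the same shape as the paper's argument --- a pairwise-incompatibility statement for two repair schemes with large overlap, followed by a set-system counting argument over the parity helper sets --- but as written it is a plan rather than a proof, and both load-bearing steps are left as expectations. The first gap is the incompatibility statement itself: you never actually establish that two systematic nodes whose parity helper sets meet in at least two nodes cannot both admit perfect-bandwidth repairs. This is exactly the paper's Lemma~\ref{lem:lineschemes}, which requires only that the two nodes participate in each other's repair and that their repair sets overlap in $\ge k$ nodes (so the ``partial overlap'' case you flag as a difficulty with Lemma~\ref{lem:schemes} is precisely what that lemma handles). Its proof is a concrete chain of normalizations: put both schemes in standard form (Lemma~\ref{lem:std}), scale so that the matched last columns coincide, use download-equivalence to zero out a shared row except in its last position, and then observe that each remaining column is pinned down by $k$ MDS equations in $k$ unknowns, forcing $W^{(0)}=V^{(0)}$ and then $W^{(1)}=V^{(1)}$ --- contradicting that row $i^*_W$ must have dimension $t$ in one scheme and dimension $1$ in the other. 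Your canonical $\bm{\chi}_p$'s and ``one free scalar per column'' are the right raw material for this, but the contradiction is never derived; ``I expect them to turn into conditions on $R_{i^*}$ alone'' is where the proof would actually have to happen.

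The second gap is the counting. The paper's Theorem~\ref{thm:linesyst} is exactly the ``naive Fisher bound'' you set aside: the $k$ systematic nodes each carry a $t$-subset of the $n-k$ parity nodes, and if $k\binom{t}{2} > \binom{n-k}{2}$ then two of these subsets share two elements, triggering Lemma~\ref{lem:lineschemes}. Your observation that this bound does not by itself contradict $t > \frac{n-k-1}{\sqrt{k}}$ is correct, and it in fact exposes a looseness in the paper's own passage from Theorem~\ref{thm:linesyst} to Corollary~\ref{cor:linesyst}: for instance $n-k=10$, $k=4$, $t=5$ satisfies $t>\frac{n-k-1}{\sqrt{k}}=4.5$ but not $t(t-1)>\frac{(n-k)(n-k-1)}{k}=22.5$. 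However, having rightly noted that the packing bound alone does not yield the corollary as stated, you propose to replace it with a sharper argument ``producing exactly the factor $\sqrt{k}$ and the $-1$'' without supplying one. So the one step where your proposal would have to go beyond the paper is precisely the step that is missing; as it stands the proposal establishes neither the corollary nor the weaker Theorem~\ref{thm:linesyst}.
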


\begin{remark}[Separation between linebacking and piggybacking?]
Note that in the regime where \emph{any} $d=k+t-1$ nodes must be able to repair a failed node, linebacking and general piggybacking were essentially equivalent in terms of ability to achieve perfect bandwidth: For $k\geq 3$ it was impossible for any piggybacking or linebacking code to achieve it, and for $k=2$ linebacking was sufficient provided the field size $q$ was large enough. 
However, in the ``some $d$" regime, it is possible that piggybacking codes are strictly more powerful than linebacking codes, in the sense that there may exists parameter regimes where linebacking codes cannot achieve perfect bandwidth but general piggybacking codes can.
We conjecture that this is the case, and note that an example of a perfect bandwidth piggybacking code in the regime where Theorem~\ref{thm:linebacking} holds would establish this.
\end{remark}

We begin by proving a lemma that will lead to the result of Theorem~\ref{thm:linebacking}.
\begin{lemma}\label{lem:lineschemes}
No $(n,k)$ linebacking code with $t$ substripes and $k \geq 3$ can have two bandwidth $b = k+t-1$ repair schemes for two different failed nodes, where each node participates in the other's repair \emph{and} the respective repair sets overlap by at least $k$ nodes.
\end{lemma}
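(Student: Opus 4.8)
The plan is to argue by contradiction, adapting the proof of Lemma~\ref{lem:schemes} but using the linebacking characterization (Corollary~\ref{cor:cols}) to compensate for the weaker hypothesis. Suppose $\mathcal{C}$ is an $(n,k)$ linebacking code with $k \geq 3$ and $t$ substripes, that $\{W^{(0)},\dots,W^{(t-1)}\}$ is a bandwidth-$(k+t-1)$ repair scheme for $i^*_W$ from $S_W$, and $\{V^{(0)},\dots,V^{(t-1)}\}$ one for $i^*_V \neq i^*_W$ from $S_V$, with $i^*_W \in S_V$, $i^*_V \in S_W$, and $|S_W \cap S_V| \geq k$. By Lemma~\ref{lem:std} both schemes may be taken in standard form, and the first step is to choose the two standard forms compatibly. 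Using $i^*_V \in S_W$, $i^*_W \in S_V$, and $|S_W \cap S_V| \geq k$, fix a $k$-element set $T$ with $i^*_W, i^*_V \in T$ and $T \setminus \{i^*_W, i^*_V\} \subseteq S_W \cap S_V$; since $k \geq 3$, the set $T$ contains at least one further ``shared'' node $r_{\mathrm{sh}} \in S_W \cap S_V$. Put the $W$-scheme in standard form with shared-row set $T_W = T \setminus \{i^*_W\}$ (a legal choice, since $T_W \subseteq S_W$) and the $V$-scheme with $T_V = T \setminus \{i^*_V\} \subseteq S_V$. Then every node of $T$ is a non-zero row, with non-zero last-column entry, in every $W^{(j)}$ and every $V^{(j)}$; moreover every node of $T \setminus \{i^*_W\}$ has row-dimension $1$ in the $W$-scheme and every node of $T \setminus \{i^*_V\}$ has row-dimension $1$ in the $V$-scheme, by Observation~\ref{rem:d}.

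The heart of the argument is to recover the linebacking analogue of the relation $\colof{V}{j}{t-2} = \bm{y^{(j)}} + \lambda^{(j)} \bm{x^{(j)}}$ from Lemma~\ref{lem:schemes}, but only on the rows of $T$. By Corollary~\ref{cor:cols}, a dual codeword of a linebacking code has its last column in $\mathcal{C}_0^\perp$, while each earlier column $i$ is merely \emph{some} solution of $F(\cdot) = -P^{(i)}(\text{last column})$, hence pinned down only up to an element of $\mathcal{C}_0^\perp$. Since the schemes are in standard form, the last column of each repair matrix is, up to scaling, the unique weight-$(k+1)$ codeword of $\mathcal{C}_0^\perp$ on its support; intersecting supports with $T$ and using that $\mathcal{C}_0^\perp$ has minimum weight $k+1$, I would compare the $V$-columns and $W$-columns restricted to $T$ one column at a time, expressing the former on $T$ as the latter plus a scalar multiple of the common last-column direction. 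The point is that in a linebacking code each non-last column is coupled only to the last column and not to the other columns, so this comparison can be carried out row-by-row on $T$ even though $W^{(j)}$ and $V^{(j)}$ generally differ in their single ``private'' row outside $T$.

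Granting such a relation on $T$, the conclusion follows as in Lemma~\ref{lem:schemes}: the row-dimension-$1$ condition at $r_{\mathrm{sh}}$ (non-zero, with non-zero last-column entry, in every matrix of both schemes) forces the scalar corrections to be constant in $j$; then the row-dimension-$1$ condition at $i^*_W$ in the $V$-scheme forces $\{\rowof{W}{j}{i^*_W} \mid j \in [0,t-1]\}$, restricted to the two columns involved, to span only a line, so $\dim(\{\rowof{W}{j}{i^*_W} \mid j \in [0,t-1]\}) \leq t-1$. This contradicts the requirement, via Theorem~\ref{thm:matrixchar}, that $\{W^{(j)}\}$ repairs $i^*_W$, which needs this dimension to equal $t$.

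The main obstacle is the middle paragraph. Unlike in Lemma~\ref{lem:schemes}, the two schemes' sets of non-zero rows need not coincide --- they overlap in $T$ together with the shared private rows, i.e.\ in at least $k+2$ nodes --- so one cannot globally rescale the $V^{(j)}$ to make $\colof{V}{j}{t-1} = \colof{W}{j}{t-1}$, and the crux is to show that the mismatch, confined to the private rows outside $T$, does not obstruct the comparison on $T$. Carrying this out, and checking that the relevant last-column entries over $T$ are non-zero so the scalar corrections are well defined, is precisely where piggybacking only in the last substripe is used: it is what makes Corollary~\ref{cor:cols} decouple the columns. For a general piggybacking code the columns are coupled and this route is unavailable, which is why for general piggybacking codes one has only the weaker Lemma~\ref{lem:schemes}, requiring the two row sets to coincide.
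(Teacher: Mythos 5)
There is a genuine gap, and it sits exactly where you flag it: the middle paragraph is not an argument but a description of an obstacle, and the route you sketch for overcoming it does not work. You propose to compare $V^{(j)}$ with $W^{(j)}$ for \emph{every} $j$, restricted to the shared row set $T$, writing the $V$-columns on $T$ as the $W$-columns plus a scalar multiple of ``the common last-column direction.'' But for $j \geq 2$ there is no such common direction: in standard form $\colof{W}{j}{t-1}$ and $\colof{V}{j}{t-1}$ are weight-$(k+1)$ codewords of $\mathcal{C}_0^\perp$ supported on \emph{different} $(k+1)$-sets (they differ in their private rows), and if their restrictions to $T$ were proportional then a suitable linear combination would be a nonzero dual codeword of weight at most $2 < k+1$, a contradiction. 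Consequently $F\colof{V}{j}{i} = -P^{(i)}\colof{V}{j}{t-1}$ and $F\colof{W}{j}{i} = -P^{(i)}\colof{W}{j}{t-1}$ have genuinely different right-hand sides, the difference of the earlier columns is not an element of $\mathcal{C}_0^\perp$ on the relevant support, and no relation of the form $\colof{V}{j}{i}\!\mid_T = \colof{W}{j}{i}\!\mid_T + \lambda^{(j)}\colof{W}{j}{t-1}\!\mid_T$ can be extracted. Since your final contradiction (constancy of $\lambda^{(j)}$ via $r_{\mathrm{sh}}$, then collapse of row $i^*_W$ to dimension $t-1$) needs this relation for all $j$, the proof does not go through.

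The ingredient you are missing is the one the overlap hypothesis is actually there to supply: $|S_W \cap S_V| \geq k$ together with $i^*_W \in S_V$, $i^*_V \in S_W$ gives $k+2$ shared nonzero rows, of which only $k$ are absorbed into $T$; the remaining \emph{two} must be private rows common to both schemes. After renumbering, the pairs $(W^{(0)},V^{(0)})$ and $(W^{(1)},V^{(1)})$ therefore have \emph{identical} supports, so for these two indices (and only these) the last columns are proportional and can be made equal by rescaling. The paper then uses the linebacking decoupling in a stronger way than you do: after applying Lemma~\ref{lem:downloadequi} to zero out row $r_{\text{shared}}$ in all but the last position of $W^{(0)}$ and $V^{(0)}$, each earlier column is pinned down by $k$ independent equations in $k$ unknowns, forcing $W^{(0)} = V^{(0)}$ outright, and a dimension-$1$ argument at $r_{\text{shared}}$ then forces $W^{(1)} = V^{(1)}$ as well. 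The contradiction is that row $i^*_W$ has dimension $1$ across the $V$'s, hence $\rowof{W}{0}{i^*_W}$ and $\rowof{W}{1}{i^*_W}$ alone already span at most a line, so the $W$-scheme cannot have full dimension $t$ at $i^*_W$. In short: use only the two index pairs whose supports can be matched, and prove equality of whole matrices rather than a $T$-restricted perturbation identity for all $t$ pairs.
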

\begin{proof}
We proceed along the lines of Lemma~\ref{lem:schemes}, but using only two repair matrices from each of the two repair schemes. Assume to obtain a contradiction that for some $k \geq 3$, some $(n,k)$ linebacking code $\mathcal{C}$---over $\mathbb{F}_q$ with $t$ substripes, base code $\mathcal{C}_0$ with generator matrix $F$, and piggybacking matrices $\{\Pj{i} \mid i \in [0,t-2]\}$---obtains a repair bandwidth of $k+t-1$ for two such repair schemes. Let these schemes be $\{W^{(0)}, \dots, W^{(t-1)}\}$ which repairs node $i^*_W$ and $\{V^{(0)}, \dots, V^{(t-1)}\}$ which repairs node $i^*_V \neq i^*_W$. By assumption, these two repair schemes share $k+2$ non-zero rows, including $i^*_W, i^*_V$.

By Lemma~\ref{lem:std}, we can assume without loss of generality that the repair matrices are in standard form. Namely, each repair matrix has exactly $k+1$ non-zero rows, and all the repair matrices share $k \geq 3$ rows including $i^*_W, i^*_V$, and some other row $r_\text{shared}$. Since the $W$'s and $V$'s share two additional rows besides these $k$ in common, we can renumber the matrices such that the pairs $W^{(0)}, V^{(0)}$ and $W^{(1)}, V^{(1)}$ each have the same $k+1$ non-zero rows. Furthermore, the last columns of $W^{(0)}, V^{(0)}$ (and $W^{(1)}, V^{(1)}$) live in $\mathcal{C}_0^\perp$ and have the same $k+1$ non-zero rows, so they live in a one-dimensional subspace. Thus, by Lemma~\ref{lem:equi}, we can scale $V^{(0)}$ (and $V^{(1)}$) such that $\colof{W}{0}{t-1} = \colof{V}{0}{t-1}$ (and $\colof{W}{1}{t-1} = \colof{V}{1}{t-1}$). Additionally, we know from Lemma~\ref{lem:downloadequi} that we can add the last columns of the $W$'s (or $V$'s) onto any previous column, provided we perform the same operation on every matrix, and obtain a download-equivalent repair scheme. Thus, for example, we can modify the repair matrices such that row $r_\text{shared}$ is zeroed out in $W^{(0)}$ and $V^{(0)}$ except in the last position. Now, observe that the last columns of $W^{(0)}$ and $V^{(0)}$ have already been made equal, and that each other column $j$ is constrained by $F \colof{W}{0}{j} + \Pj{j} \colof{W}{0}{t-1} = \bm{0}^T$ (and likewise for $V^{(0)}$) by Corollary~\ref{cor:cols}. Since each such constraint consists of $k$ linearly independent equations (since $F$ is MDS) and $k$ unknowns (since $k+1$ rows are non-zero and we fixed row $r_\text{shared}$), this constraint admits exactly one solution. But that implies each column of $W^{(0)}$ equals the corresponding column of $V^{(0)}$, and thus $W^{(0)} = V^{(0)}$.

However, recall that row $r_\text{shared}$ must have dimension 1 in both the $W$'s and $V$'s by Theorem~\ref{thm:matrixchar}. Since we already have $\colof{W}{1}{t-1} = \colof{V}{1}{t-1}$ which implies $\entryof{W}{1}{r_\text{shared}}{t-1} = \entryof{V}{1}{r_\text{shared}}{t-1}$, and $W^{(0)} = V^{(0)}$ which implies $\rowof{W}{0}{r_\text{shared}} = \rowof{V}{0}{r_\text{shared}}$, for row $r_\text{shared}$ to have dimension 1 in both repair schemes, we must have $\rowof{W}{1}{r_\text{shared}} = \rowof{V}{1}{r_\text{shared}}$. But again, this means the last columns are equal and each previous column has $k$ unknowns (with the same known values between $W^{(1)}$ and $V^{(1)}$), so we must have $W^{(1)} = V^{(1)}$. This gives us the desired contradiction, because in the $W$'s, row $i^*_W$ must have full rank, whereas in the $V$'s it must have dimension 1; these cannot both be true if $W^{(0)} = V^{(0)}$ and $W^{(1)} = V^{(1)}$.
\end{proof}

We are now ready to prove the main result of this section.
\begin{proof}[Proof of Theorem~\ref{thm:linebacking}]
Assume to obtain a contradiction that a perfect bandwidth $(n, k)$ linebacking code $\mathcal{C}$ with $t < \frac{n-k+1}{2}$ substripes does exist for some $k \geq 3$. Then there would necessarily be two repair schemes meeting the assumptions of Lemma~\ref{lem:lineschemes}. First, observe that there must exist a pair of nodes each of which participates in repairing the other. Consider a directed graph where edges go from each node to the nodes it repairs. Each node has $k+t-1$ in-edges so there are $n(k+t-1)$ edges. However, $n(k+t-1) > n(k+\frac{n-k+1}{2}-1) = \frac{n(n+k-1)}{2} > \binom{n}{2}$. But $\binom{n}{2}$ is the maximum number of edges a directed graph (with no self-loops) can have without having a 2-cycle; thus this graph has a 2-cycle meaning two nodes participate in each others' repair. Furthermore, these same two nodes must have an overlap in their repair sets of size at least $k$: In addition to repairing each other, they each have $k+t-2 > \frac{n+k-3}{2}$ helper nodes drawn from the remaining $n-2$ nodes, so the overlap is at least $2 (\frac{n+k-2}{2}) - (n - 2) = k$. Thus if such a linebacking code existed, it would necessarily have two repair schemes meeting the assumptions of Lemma~\ref{lem:lineschemes}, which is impossible.
\end{proof}

Note that for $t \leq \frac{n-k+1}{2}$, it is trivial to construct the repair sets (disregarding whether they admit valid repair schemes) to avoid any pair satisfying the assumptions of Lemma~\ref{lem:lineschemes}; for instance, each node can be repaired by the $k+t-1$ nodes immediately following it (mod $n$).

However, most existing piggybacking codes do not choose their repair sets this way. Most, including those of \cite{rashmi2013} and \cite{yang2015} (who use linebacking codes), always use all remaining systematic nodes in the repair. For linebacking codes, this further restricts the parameters for which perfect bandwidth may be achievable.

\begin{theorem}\label{thm:linesyst}
No $(n,k)$ linebacking code with $k \geq 3$, $t$ substripes where $t(t-1) > \frac{(n-k)(n-k-1)}{k}$, and which uses all remaining systematic nodes to repair a failed node can achieve perfect bandwidth.
\end{theorem}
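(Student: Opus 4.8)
The plan is to follow the strategy of Theorem~\ref{thm:linebacking}: assuming a perfect-bandwidth linebacking code with the stated properties exists, produce two systematic nodes whose repair schemes satisfy the hypotheses of Lemma~\ref{lem:lineschemes}, and conclude a contradiction. The only difference from Theorem~\ref{thm:linebacking} is that the ``all remaining systematic nodes'' restriction lets us replace the 2-cycle edge-counting argument with a set-packing argument on the parity helper sets.

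First I would normalize the setup: assume for contradiction that a perfect-bandwidth $(n,k)$ linebacking code with $k \ge 3$ and $t(t-1) > \frac{(n-k)(n-k-1)}{k}$ exists and uses all remaining systematic nodes whenever it repairs a systematic failure. Making the code systematic preserves the linebacking structure (as noted after Definition~\ref{def:lb}); index the systematic nodes by $[0,k-1]$ and the parity nodes by $[k,n-1]$. By Observation~\ref{rem:d}, every repair set has size exactly $k+t-1$. For systematic node $i$, its repair set contains the $k-1$ other systematic nodes by hypothesis, so---since the systematic and parity node sets are disjoint and $t \le n-k$---it has the form $S_i = \big([0,k-1]\setminus\{i\}\big) \cup P_i$ with $P_i \subseteq [k,n-1]$ and $|P_i| = t$.

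The key step is a packing bound on $P_0,\dots,P_{k-1}$. If every pairwise intersection $|P_i\cap P_j|$ with $i\ne j$ had size at most $1$, then the families of $2$-element subsets $\binom{P_i}{2}$ would be pairwise disjoint (a common $2$-subset of $P_i$ and $P_j$ would witness $|P_i\cap P_j|\ge 2$), forcing $k\binom{t}{2} \le \binom{n-k}{2}$, i.e.\ $t(t-1) \le \frac{(n-k)(n-k-1)}{k}$, contradicting the hypothesis. Hence some distinct $i,j$ have $|P_i\cap P_j|\ge 2$. These two nodes satisfy the hypotheses of Lemma~\ref{lem:lineschemes}: each lies in the other's repair set (as a remaining systematic node), and their repair sets overlap in $\big([0,k-1]\setminus\{i,j\}\big)\cup(P_i\cap P_j)$, which has at least $(k-2)+2 = k$ elements. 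Since $k\ge 3$, Lemma~\ref{lem:lineschemes} forbids perfect-bandwidth repair schemes for both $i$ and $j$, which is the desired contradiction.

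I do not anticipate a substantial obstacle: the linear-algebraic core is already carried by Lemma~\ref{lem:lineschemes}, and the new content is only the elementary inequality $k\binom{t}{2}\le\binom{n-k}{2}$ for set families with pairwise intersections at most $1$. The points that demand care are (i) confirming that $S_i$ is genuinely forced to equal $\big([0,k-1]\setminus\{i\}\big)\cup P_i$ with $|P_i| = t$, which rests on the exact repair-set size from Observation~\ref{rem:d} together with $t\le n-k$, and (ii) that the overlap of the two chosen repair sets is at least $k$ rather than merely $k-1$, which is exactly what $|P_i\cap P_j|\ge 2$ provides---and correspondingly why the threshold in the statement is $t(t-1) > \frac{(n-k)(n-k-1)}{k}$ and not something weaker.
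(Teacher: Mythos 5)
Your proposal is correct and follows essentially the same route as the paper: both reduce to finding two systematic nodes whose parity helper sets intersect in at least two elements, then invoke Lemma~\ref{lem:lineschemes}. The only cosmetic difference is that you prove the packing bound $k\binom{t}{2}\le\binom{n-k}{2}$ directly by counting $2$-subsets, whereas the paper cites the same bound from Erd\H{o}s's set-packing result.
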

\begin{proof}
Assume to obtain a contradiction that a perfect bandwidth $(n, k)$ linebacking code $\mathcal{C}$ with $k \geq 3$ and $t$ substripes where $t(t-1) > \frac{(n-k)(n-k-1)}{k}$ always uses all remaining systematic nodes to repair a failed node.

Consider only the repair of the systematic nodes. By assumption, for any pair of systematic nodes, each participates in the other's repair. Their repair sets overlap by at least $k$ if and only if there are at least 2 parity nodes which repair both, and each systematic node has $t$ parity nodes repairing it. Per \textcite{erdos1963}, the maximum number of sets of $t$ parity nodes such that no two sets have 2 parity nodes in common is $\frac{\binom{n-k}{2}}{\binom{t}{2}}$. Thus if $k > \frac{(n-k)(n-k-1)}{t(t-1)}$, two systematic nodes must share 2 helper parity nodes, and thus have an overlap of size at least $k$ in their repair sets as well as each participating in the other's repair. However, this meets the assumptions of Lemma~\ref{lem:lineschemes}, which is impossible.
\end{proof}

Simplifying the statement of Theorem~\ref{thm:linesyst} results in the Corollary~\ref{cor:linesyst} which we presented earlier.  We restate it here:

\cor*

This may suggest that linebacking codes can achieve better bandwidth if they do \emph{not} follow the standard practice of using all remaining systematic nodes in every repair, since the bound $t \leq \frac{n-k-1}{\sqrt{k}}$ is more restrictive than $t \leq \frac{n-k+1}{2}$ from Theorem~\ref{thm:linebacking} as $k$ grows.


\section{Conclusion}\label{sec:concl}
We adapted the framework of \cite{guruswami2016} in order to analyze the achievable bandwidth of piggybacking codes introduced by \textcite{rashmi2013} with scalar MDS base codes for low substriping $t \leq n-k$. In the regime where any $d$ nodes must be able to repair a failed node, we showed that for $k \geq 3$ piggybacking codes cannot achieve the lower bound on bandwidth, and thus are less powerful than general linear codes. We established by counterexample that this result does not extend to the regime where only some $d$ nodes repair a failed node (though piggybacking codes are still less powerful than general linear codes), and partially addressed the question of whether piggybacking codes can achieve the lower bound on bandwidth in this regime. We additionally gave impossibility results for linebacking, a subcategory of piggybacking in the style of \cite{yang2015}.

Some questions about the theoretical capabilities and limitations of piggybacking codes remain to be addressed, and we conclude with these.
\begin{enumerate}
\item When do there exist perfect bandwidth piggybacking codes for the ``some $d$'' regime and $k \geq 3$? When they do not exist, how close can piggybacking codes get to the lower bound on bandwidth?
\item Is linebacking less powerful than piggybacking?
\item Adding the (commonly used) assumption that all remaining systematic nodes assist in the repair of a failed node gave us a stronger impossibility result for linebacking in Corollary~\ref{cor:linesyst}. Does this assumption actually worsen the achievable bandwidth for piggybacking (or general) codes?
\item Our analysis of piggybacking codes was limited compared to the proposal of \cite{rashmi2013} in a few ways. How does the analysis change if we permit non-linear piggybacking functions, or allow a vector (rather than scalar) base code? What can we say about how piggybacking codes perform on other metrics such as data-read and computation as well as bandwidth?
\end{enumerate}


\section*{Acknowledgements}
We thank  Rashmi Vinayak for introducing the problem to us, and for very helpful correspondences.


\bibliographystyle{plain}
\bibliography{references}

\begin{thebibliography}{10}

\bibitem{dimakis2010}
Alexandros~G. Dimakis, P.~Brighten Godfrey, Yunnan Wu, Martin Wainwright, and
  Kannan Ramchandran.
\newblock Network coding for distributed storage systems.
\newblock {\em IEEE Transactions on Information Theory}, 56(9):4539--4551,
  2010.

\bibitem{erdos1963}
P.~Erd{\"o}s and H.~Hanani.
\newblock On a limit theorem in combinatorial analysis.
\newblock {\em Publicationes Mathematicae Debrecen}, 10:10--13, 1963.

\bibitem{guruswami2016}
Venkatesan Guruswami and Mary Wootters.
\newblock Repairing reed-solomon codes.
\newblock pages 216--226. ACM, 6 2016.

\bibitem{kumar2015}
Siddhartha Kumar, Alexandre~Gaell i~Amat, Iryna Andriyanova, and Fredrik
  Br\"{a}nnstr\"{o}m.
\newblock A family of erasure correcting codes with low repair bandwidth and
  low repair complexity.
\newblock pages 1--6. IEEE, 12 2015.

\bibitem{wiki}
University of~Texas ECE~Department.
\newblock { Erasure Coding for Distributed Storage wiki}.
\newblock Available at \texttt{http://storagewiki.ece.utexas.edu/}.
\newblock Accessed: July 2017.

\bibitem{rashmi2011}
K.~V. Rashmi, Nihar~B. Shah, and P.~Vijay Kumar.
\newblock Optimal exact-regenerating codes for distributed storage at the msr
  and mbr points via a product-matrix construction.
\newblock {\em IEEE Transactions on Information Theory}, 57(8):5227--5239,
  2011.

\bibitem{rashmi2013}
K.~V. Rashmi, Nihar~B. Shah, and Kannan Ramchandran.
\newblock A piggybacking design framework for read- and download-efficient
  distributed storage codes.
\newblock pages 331--335. IEEE, 7 2013.

\bibitem{rashmi2013b}
K.V. Rashmi, N.B. Shah, D.~Gu, H.~Kuang, D.~Borthakur, and K.~Ramchandran.
\newblock A solution to the network challenges of data recovery in erasure
  coded storage systems: A study on the {Facebook} warehouse cluster.
\newblock {\em {UNISEX HotStorage}}, 2013.

\bibitem{shah2010}
Nihar~B. Shah, K.~V. Rashmi, P.~Vijay Kumar, and Kannan Ramchandran.
\newblock Explicit codes minimizing repair bandwidth for distributed storage.
\newblock pages 1--11. IEEE, 1 2010.

\bibitem{shah2012}
Nihar~B. Shah, K.~V. Rashmi, P.~Vijay Kumar, and Kannan Ramchandran.
\newblock Interference alignment in regenerating codes for distributed storage:
  Necessity and code constructions.
\newblock {\em IEEE Transactions on Information Theory}, 58(4):2134--2158,
  2012.

\bibitem{shangguan2016}
Chong Shangguan and Gennian Ge.
\newblock New piggybacking design for systematic mds storage codes.
\newblock 2016.

\bibitem{suh2011}
Changho Suh and Kannan Ramchandran.
\newblock Exact-repair mds code construction using interference alignment.
\newblock {\em IEEE Transactions on Information Theory}, 57(3):1425--1442,
  2011.

\bibitem{yang2015}
Bin Yang, Xiaohu Tang, and Jie Li.
\newblock A systematic piggybacking design for minimum storage regenerating
  codes.
\newblock {\em IEEE Transactions on Information Theory}, 61(11):5779--5786,
  2015.

\bibitem{ye2017}
Min Ye and Alexander Barg.
\newblock Explicit constructions of high-rate mds array codes with optimal
  repair bandwidth.
\newblock {\em IEEE Transactions on Information Theory}, 63(4):2001--2014,
  2017.

\end{thebibliography}


\appendix

\section{Proof of Theorem~\ref{thm:matrixchar}}\label{app:thm2}
We follow the approach of \cite{guruswami2016}, which gives a similar result for scalar MDS codes.
We first show that 2 implies 1, that is, if the required repair matrices exist, then there is a linear repair scheme for node $i^*$ from $S$ with bandwidth $b$.

Suppose that $\{W^{(0)}, W^{(1)}, \ldots, W^{(t-1)} \} \subset \cC^\perp$ so that the only non-zero rows of $W^{(j)}$ are $i^*$ and $S$, and suppose that
\[ \mathrm{dim}( \{ \rowof{W}{j}{i^*} \mid j \in [0,t-1]\}) = t \]
and
\[ \sum_{i \neq i^*} \mathrm{dim}( \{ \rowof{W}{j}{i} \mid j \in [0,t-1]\}) \leq b \]
Define a repair scheme for a codeword $C \in \cC$ as follows.  
Let $b_i$ be the dimension of $\{ \rowof{W}{j}{i} \mid j \in [0,t-1]\}$, and let
Let $\bm{v}_0, \ldots, \bm{v}_{b_i-1}$ be a basis for the span of that space.  
Then the $b_i$ symbols of $\F_q$ returned by node $i$ are
\begin{equation}\label{eq:gotthis}
 \{ c_{i,\bullet} \cdot \bm{v}_\ell \mid \ell \in [0,b_i-1]\}
\end{equation}
We need to establish that, from these, we may recover $c_{i^*,\bullet}$, the contents of node $i^*$.
Using the fact that $\langle W^{(j)}, C \rangle = 0$ for all $C \in \cC$, we have
\[ c_{i^*, \bullet} \cdot \rowof{W}{j}{i^*} = - \sum_{i \in S} c_{i, \bullet} \cdot \rowof{W}{j}{i} \]
Thus, using the information \eqref{eq:gotthis} for all $i \in S$, we may reconstruct
\[ c_{i^*, \bullet} \cdot \rowof{W}{j}{i^*} \]
for all $j \in [0, t-1]$.  Because $\{ \rowof{W}{j}{i^*} \mid j \in [0,t-1]\}$ has dimension $t$ in $\F_q^t$, this is enough information to recover $c_{i^*, \bullet}$, as desired.

Now we show that 1 implies 2.  Suppose that we have a linear repair scheme for $i^*$ using $S$ with bandwidth $b$, so that for every $i \in S$ and every $j \in [0,t-1]$, there is some set $Q_{i,j} \subset \F_q^t$ so that
\begin{align*}
c_{i^*, j} &= \sum_{i \in S} \sum_{\bm{v} \in Q_{i,j}} \bm{v} \cdot c_{i, \bullet} \\
&= \sum_{i \in S} \Bigl( \sum_{\bm{v} \in Q_{i,j}} \bm{v} \Bigr) \cdot c_{i,\bullet} \\
&:= \sum_{i \in S} \bm{w}_{i,j} \cdot c_{i,\bullet}
\end{align*}
for every $C \in \cC$, where the final line defines the vectors $\bm{w}_{i,j} \in \F_q^t$.  
Moreover, we have
\[ \sum_{i \in S} \ \bigl| \bigcup_{j \in [0,t-1]} Q_{i,j} \bigr| := \sum_{i \in S} b_i \leq b \]
because $b_i := |\bigcup_{j \in [0,t-1]} Q_{i,j}|$ is the number of symbols of $\F_q$ returned by node $i$.
Now for $j \in [0,t-1]$, we define a repair matrix $W^{(j)}$ by
\[ \rowof{W}{j}{i} = \begin{cases} \bm{w}_{i,j} & i \in S \\ -\bm{e_j} & i = i^* \\ \bm{0} & \text{else} \end{cases} \]
where $\bm{e_j}$ denotes the $j^{th}$ standard basis vector in $\F_q^t$.
It is easily checked that the matrices $W^{(j)}$ are a valid set of repair matrices.  First, we see that by definition of $W^{(j)}$, we have $\langle W^{(j)}, C \rangle = 0$ for all $C \in \cC$ and $j \in [0,t-1]$.  Second,  
\[\{ \rowof{W}{j}{i^*} \mid j \in [0,t-1] \} = \{ -\bm{e_j} \mid j \in [0,t-1] \} \] 
is full rank.
Finally, for $i \neq i^*$,
\[\{ \rowof{W}{j}{i} \mid j \in [0,t-1] \} = \{ -\bm{w_{i,j}} \mid j \in [0,t-1] \} \]
where 
\[ \bm{w_{i,j}} = \sum_{\bm{v} \in Q_{i,j}} \bm{v} \]
In particular, all of these vectors live in the set $\bigcup_{j \in [0,t-1]} Q_{i,j}$, which has size $b_i$ as defined above.  
Thus, it has dimension at most $b_i$, and so 
\[ \sum_{i \in S} \dim( \{ \rowof{W}{j}{i} \mid j \in [0,t-1] \} ) \leq b \]
This completes the proof.


\end{document}